\newcolumntype{M}[1]{>{\centering\arraybackslash}m{#1}}
\renewcommand{\@seccntformat}[1]{\csname the#1\endcsname.\quad}
\DeclareMathOperator*{\argmin}{arg\,min}
\DeclareMathOperator*{\arginf}{arg\,inf}
\newcommand{\norm}[1]{\left\lVert#1\right\rVert}
\newcommand{\ex}{\mathbb{E}}
\newcommand{\real}{\mathbb{R}}
\newcommand{\mesp}{\mathbb{P}}
\mathchardef\mhyphen="2D
\newtheorem{theorem}{Theorem}[section]
\newtheorem{proposition}[theorem]{Proposition}
\theoremstyle{remark}
\newtheorem{remark}[theorem]{Remark}
\theoremstyle{definition}
\newtheorem{definition}[theorem]{Definition}
\newtheorem{problem}[theorem]{Problem}
\begin{document}
	
	\title{Clustering Market Regimes using the Wasserstein Distance}
	\author{B. Horvath$^{1,3,4}$, Z. Issa$^1$, and A. Muguruza$^2$}
	\address{%
		$^1$King's College, London\\%
		$\mathtt{zacharia.issa@kcl.ac.uk}$.
		$^2$Imperial College London and Kaiju Capital Management\\%
		$\mathtt{aitor.muguruza\mhyphen gonzalez@kcm.vg}$.
		$^3$TU Munich\\%
		$\mathtt{blanka.horvath@tum.de}$.
		$^4$Alan Turing Institute.
	}
	\date{\today}

	\maketitle
	\begin{abstract}
		
		The problem of rapid and automated detection of distinct market regimes is a topic of great interest to financial mathematicians and practitioners alike. In this paper, we outline an unsupervised learning algorithm for clustering financial time-series into a suitable number of temporal segments (market regimes). 
		As a special case of the above, we develop a robust algorithm that automates the process of  classifying market regimes. The method is robust in the sense that it does not depend on modelling assumptions of the underlying time series as our experiments with real datasets show. This method -- dubbed the Wasserstein $k$-means algorithm -- frames such a problem as one on the space of probability measures with finite $p^\text{th}$ moment, in terms of the $p$-Wasserstein distance between (empirical) distributions. We compare our WK-means approach with a more traditional 
		clustering algorithms by studying the so-called maximum mean discrepancy scores between, and within clusters. In both cases it is shown that the WK-means algorithm vastly outperforms all considered competitor approaches. We demonstrate the performance of all approaches both in a controlled environment on synthetic data, and on real data.
	\end{abstract}
	\begin{spacing}{0.5}
	\tableofcontents
	\end{spacing}
	\singlespacing
	
	\section{Introduction}\label{sec:intro}
	
	Time series data derived from asset returns are known to exhibit certain properties, termed \emph{stylised facts}, that are ubiquitous across asset classes. For example, it is well-understood that return series are non-stationary in the strong sense, and exhibit volatility clustering (for a full recount, see \cite{cont2001empirical}). In particular, understanding the heteroskedastic nature of financial time series data  is relevant for market practitioners. An observed sequence of asset returns (or, for multiple assets, a tuple of sequences) exhibits periods of similar behaviour, followed by potentially distinct periods that indicate a significantly different underlying distribution. Such periods are often referred to as \emph{market regimes}. Our interest in swiftly and accurately detecting changes in market regimes is motivated by a multitude of financial applications (both classical and modern): Most naturally, an accurate detection of shifts in market behaviour is tantamount for making optimised investment decisions or trading strategies. But also within the arena of recent, deep learning-based methods for pricing hedging and market generation \cite{horvath2019deep, buehler2020data}, the detection of significant shifts in market behaviour is a central tool for their model governance, since it serves as an indicator for the need to retrain the ML-model.
	Henceforth, we call the task of finding an effective way of grouping different regimes the \emph{market regime clustering problem (MRCP)}.
	
	In this paper, we propose a methodology to classify segments of the historical evolution of market returns into distinct regimes. We do so by devising a modified, versatile version of the classical $k$-means clustering algorithm to group distributions of asset returns into regimes, which exhibit a higher degree of homogeneity. The way modify the classical algorithm is twofold: Firstly, by a shift of perspective, we consider the clustering problem as one on the space of distributions with finite $p^{\text{th}}$ moment, as opposed to one on Euclidean space. Secondly, our choice of metric on this space is the $p^\text{th}$ Wasserstein distance, and we aggregate nearest neighbours using the associated Wasserstein barycenter. We motivate why the Wasserstein distance is the natural choice for this particular problem in Section \ref{subsec:wassmotivation}. Accordingly in later sections we also present different numerical setups to demonstrate how to navigate how reactive/robust the algorithm is on different datasets. The latter will depend on the precise application at hand and the  modellers appetite for more swift or more robust indicators.
	
	We benchmark our results with two alternative approaches. The first applies the classical $k$-means algorithm to the first $p$ moments associated to each segment of market returns. The second is a more classical approach: We implement a version of the hidden Markov model (HMM) using appropriate modifications to make results comparable to ours. We test each algorithm both on real and synthetic data. The success of the unsupervised learning algorithms in the real data setting is evaluated using a marginal maximum mean discrepancy (MMD) metric, a metric arising from a powerful two-sample test which has experienced a notable rise in popularity in recent machine learning literature. Overall, we show that our data-driven and non-parametric methodology accurately partitions financial return series into distinct clusters which are both distinct from each other whilst remaining internally \emph{self-similar}, relative to a more naive approach based on moments or HMMs, which we verify on synthetic parametric data as well as on historical time series, where our algorithm correctly identifies all (now historically known) periods of unusual market activity as they arise.
	
	\subsection{The market regime clustering problem}\label{subsec:marketregimes}
	Let us start by giving an overview of related literature and existing insights on the MRCP. Recall that the MRCP is defined as the task of classifying segments of return series $(r_i)_{i\ge 0}$, where
	\begin{equation*}
		r_i = (r^1_i, \dots, r^n_i) \text{ for } n \in \mathbb{N}.
	\end{equation*}
	Any vector $r_i \in \mathbb{R}^n$ can be associated to an empirical measure $\delta_{r_i} = \frac{1}{n}\sum_{j=1}^n \delta_{r^j_i}$ for $i\ge 0$ with $n$ atoms. Thus, the problem of classifying market regimes is equivalent to assigning labels to probability measures $\mu \in \mathcal{P}_p(\real)$, where $\mathcal{P}_p(\mathbb{R})$ is the set of probability measures on $\mathbb{R}$ with finite $p^{\text{th}}$ moment. 
	
	Historically, approaches to solving the MRCP vary depending on their applications: for instance, a modeller may be interested in regime classification over different time scales: from microseconds, to months or years. Further variations may come according to the modeller's choice of framing the problem; for example, one may wish to segment regimes via change points detection methods, which are place (as the name sugests) emphasis on change-points rather than on the regimes themselves (see \cite{niu2016multiple} for an overview). Another example is the more general outlier detection problem \cite{cochrane2020anomaly, kondratyev2020data}, which is a special one-class case of the MRCP. Those studying such a problem are often more interested in identifying anomalous datum, as opposed to characterising the distribution $\mu \in \mathcal{P}(\real)$ such datum are generated from. 
	
	Some of the early attempts at analysing financial return series to extract regime switching signals fall under the umbrella of \emph{technical analysis}, where signals such as temporal moving average crossovers and support level breaches are used as indicators for a regime change for a particular financial asset or a collection of assets, see Achelis \cite{achelis2001technical} for a comprehensive guide. More rigorous statistical analyses of financial time series have also been employed to detect regime changes. A classical approach is performing dynamic PCA on inter-asset covariance matrices, see Pelletier \cite{PELLETIER2006445} for an example on high-dimensional synthetic Markovian asset price paths.
	
	Another classical approach to the MRCP is via Markovian switching models and HMMs.  To the author's knowledge, this technique was introduced in the work of Hamilton \cite{hamilton1989new}, in which the state term signifying the regime is given by an AR$(1)$ process. For a more detailed history of the Markov switching model, we refer the reader to \cite{kuan2002lecture} and to \cite{lange2009introduction}, \cite{guidolin2011markov} for their use in empirical finance. The hidden Markov model approach is not altogether model-free as it makes two main assumptions: first, the latent state variable specifying the current regime is Markovian, and secondly, that the likelihood of observing a return given the latent state variable is given by some parametric distribution, often Gaussian. Other approaches include agent-based models as shown in Lux and Marchesi \cite{lux2000volatility}, Bayesian approaches as seen in Maheu, Thomas and Song \cite{bullbearregimes}, or more data-driven approaches as seen in Lahmiri \cite{lahmiri2016clustering}.
	
	Previous work on the problem of clustering families of distributions via non-parametric unsupervised learning approaches has been found in Nielson, Nock and Amari \cite{nielsen2014clustering}, where a modification of the classical kmeans algorithm is used to cluster histograms via mixed $\alpha$-divergences. An approach to empirical distribution clustering via $k$-means is also given in Henderson, Gallagher and Eliassi-Rad \cite{henderson2015ep} in a non-financial context. Other works have utilised the Wasserstein distance for clustering problems, see for instance \cite{li2008real, ye2017fast}, where in the latter distributions are represented as weight-mass pairs, and clustering is considered in the context of images and documents, or \cite{mi2018variational} for an approach using variational optimal transport. Such approaches are similar to the work in this paper as they often employ classic unsupervised learning algorithms with some modification that allows them to handle distributional datum. Our approach seeks to meld the financial regime-switching and clustering worlds together in an attempt to provide a robust, non-parametric approach to \emph{a posteriori} market regime clustering.
	
	\subsection{Motivation for using $\mathcal{W}_p$}\label{subsec:wassmotivation}
	In this section, we motivate our perspective on the clustering problem and the choice of the Wasserstein metric (\ref{eqn:wasserstein}) for this purpose, which has recently seen a swift rise in artificial intelligence and machine learning, see for instance \cite{ni2020conditional}.
	
	Given that our clustering problem is defined over the space of probability measures $\mathcal{P}_p(\real)$, there exist many candidate  one could employ: classical choices include the Kolmogorov-Smirnov statistic or the Kullback-Leibler (KL)-divergence. We argue that either of these choices are inadequate for the given problem statement. It is well known that the Kolmogorov-Smirnov statistic lacks the sensitivity required to distinguish between elements of the clustering set (see \cite{mason1983modified}). The KL-divergence has been employed in the literature as a distance function within a clustering algorithm see \cite{ackermann} for a $k$-medians implementation with generalised Bregman divergences, and \cite{nielsen2014clustering} for general $\alpha$-divergences. We note that our clustering problem is over empirical measures and thus requires an estimation of the probability density function associated to each measure if the KL-divergence is to be employed. Barycenters with respect to the symmetrized and non-symmetrized versions of the KL-divergence have been shown to exist (\cite{veldhuis2002centroid}), and \cite{nielsen2009sided} for more general Bregman divergences), however these rely on algorithmic derivations. We argue that our approach is much simpler, elegant and scalable when compared to the alternatives.

	The Wasserstein distance is a natural choice for comparing distributions of points on a metric space $(X, d)$. The distance function $d$ appears in the expression (\ref{eqn:wasserstein}) characterising the distance and furthermore, the Wasserstein distance metrizes weak convergence\footnote{A sequence $(\mu_n)_{n\ge 1} \subset \mathcal{P}_p(X)$ converges weakly to $\mu \in \mathcal{P}_p(X)$ iff $\mathcal{W}_p(\mu_n, \mu) \to 0$ as $n\to \infty$. Moreover,  if $(X, d)$ is a Polish space, then $(\mathcal{P}_p(X), \mathcal{W}_p)$ is also Polish (see \cite{ambrosio2013user}, Theorem 2.6 for the case $p=2$).}.  Thus, measures that are close in the Wasserstein sense are also close in the classical narrow sense as well. 
	
	In our examples we focus on the univariate case $d=1$, since computing the Wasserstein distance in between empirical measures is particularly tractable, though a multivariate characterisation of our results is also possible (see next paragraph). From Proposition \ref{prop:wassrep}, the algorithm to compute the $p$-Wasserstein distance between two empirical measures with $N$ atoms is $\mathcal{O}(N \log N)$. It is important to note that other comparative distances share this property. However, the Wasserstein distance also has a natural aggregator candidate in the Wasserstein barycenter, which is fast to calculate in the case where $d=1$. Other comparative distances either do not have a natural candidate for aggregation (Kolmogorov-Smirnov) or a canonical aggregator which is tractable to calculate (KL-divergence).
	
	In the case where $d > 1$, the Wasserstein distance is also viable as a choice of metric on $\mathcal{P}_p(\mathbb{R}^d)$. Although $\mathcal{W}_p$ has shown to be an effective tool to tackle the curse of dimensionality associated to sequences on $\real$, it becomes computationally too demanding to solve when extending to higher-dimensional data (\cite{rabin2011wasserstein}, Section 2.2). Recently, the sliced Wasserstein distance \sout{$\mathcal{W}_p$} as seen in \cite{rabin2011wasserstein}, \cite{bonneel2015sliced} and \cite{kolouri2019generalized} has been employed to extend the $\mathcal{W}_p$ to $\mathbb{R}^d$. It does this by projecting distributions $\mu, \nu \in \mathcal{P}_p(\mathbb{R}^d)$ onto $\mathbb{R}$ via points on the unit sphere in $\mathbb{R}^d$, and returns the expected one-dimensional Wasserstein distance of such projections. This turns the complex problem of calculating $\mathcal{W}_p(\mu, \nu)$ into an $\mathcal{O}(MN \log N)$ operation, where $M$ is the number of points taken. Recently, Bayraktar and Guo \cite{bayraktar2021strong} showed that the max sliced Wasserstein metric $\overline{\mathcal{W}}_p$ is strongly equivalent to the classical Wasserstein distance in the cases $p=1, 2$.\footnote{The authors would like to thank Claude Martini and Frédéric Patras for informing us of this.}
		
	\subsection{Problem setting and notation}\label{subsec:problemsetting}
	We begin by giving an overview of the problem setting. First, we introduce the notion of a data stream.
	\begin{definition}[Set of data streams, \cite{DBLP:journals/corr/abs-1905-08494}, Definition 2.1]\label{def:streamofdata}
		Let $\mathcal{X}$ be a non-empty set. The set of \emph{streams of data} $\mathcal{S}$ over $\mathcal{X}$ is given by 
		\begin{equation}\label{eqn:setofstreams}
			\mathcal{S}(\mathcal{X}) = \{\mathsf{x} = (x_1, \dots, x_n) : x_i \in \mathcal{X}, n \in \mathbb{N} \}.
		\end{equation}
	\end{definition}
	In this paper, we take $\mathcal{X} = \mathbb{R}$ and fix $N \in \mathbb{N}$. In the context of the MRCP, elements $S = (s_0,\dots, s_N) \in \mathcal{S}(\real)$ will be price paths associated to a financial asset.
	
	Given $S \in \mathcal{S}(\real)$, we define the vector of log-returns $r^S$ associated to $S$ by 
	\begin{equation}\label{eqn:logreturns}
		r^S_i = \log(s_{i+1}) - \log(s_{i}) \qquad \text{for }0\le i \le N-1,
	\end{equation}
	so $r^S \in \mathcal{S}(\real)$. We use the following expression to highlight that we may to wish partition the original stream of data (\ref{eqn:logreturns}) into potentially overlapping segments equal length. 
	\begin{definition}[Stream lift, \cite{DBLP:journals/corr/abs-1905-08494}, Section 3.3]\label{def:streamlift}
		Let $\mathcal{S}(\mathcal{X})$ be a space of streams over a non-empty set $\mathcal{X}$. Let $\mathcal{V}$ be another non-empty set, and let $v \ge 1$. We call a function 
		\begin{equation*}
			\ell = (\ell^1, \dots, \ell^v): \mathcal{S}(\mathcal{X}) \to \mathcal{S}(\mathcal{S}(\mathcal{V}))
		\end{equation*}
		a \emph{lift} from the space of streams to the space of streams of segments over $\mathcal{V}$. 
	\end{definition}
	Thus, for $\mathsf{x} \in \mathcal{S}(\real)$ and $h_1, h_2 \in \mathbb{N}$ with $h_1 > h_2$, we define a lift $\ell:= \ell_{h_1, h_2}$ from $\mathcal{S}(\real)$ to $\mathcal{S}(\mathcal{S}(\real))$ via
	\begin{equation}\label{eqn:streamlift}
		\ell^i(\mathsf{x}) = (x_{1 + h_2(i-1)}, \dots, x_{1 + h_1 + h_2(i-1)}) \qquad \text{for }i=1,\dots, M,
	\end{equation} 
	where $M := \lfloor \tfrac{N}{h_1-h_2}\rfloor$ is the maximum number of partitions with length $h_1$ that can be extracted from $\mathsf{x} \in \mathcal{S}(\real)$ with sliding window offset parameter $h_2$. We obtain the stream of segments by applying $\ell$ to $r^S$. 
	
	Finally, as stated in the introduction, the main idea of this paper is to lift the regime clustering problem from one on Euclidean space to one on the space of probability distributions with finite $p^\text{th}$ moment with $p>1$. This requires defining a family of measures via the map $\ell(r^S)$. 
	\begin{definition}[Empirical measure, \cite{10.5555/3086952}, Section 3.9.5]\label{def:empiricalmeasures}
		Let $\mathsf{x} \in \mathcal{S}(\real)$ such that $\mathsf{x} = (x_1, \dots, x_N)$ for $N \in \mathbb{N}$. Furthermore, let
		\begin{equation*}
			Q^j: \mathcal{S}(\real) \to \real
		\end{equation*}
		be the function which extracts the $j^\text{th}$ order statistic of $\mathsf{x}$, for $j=1,\dots, N$. Then, the cumulative distribution function of the \emph{empirical measure} $\mu \in \mathcal{P}_p(\real)$ associated to $\mathsf{x}$ is defined as
		\begin{equation}\label{eqn:empiricalmeasure}
			\mu^{\mathsf{x}}((-\infty, x]) = \frac{1}{N}\sum_{i=1}^N \chi_{\{Q^i(\mathsf{x}) \le x\}}(x),
		\end{equation}
		where $\chi: \real \to [0, 1]$ is the indicator function.
	\end{definition}
	
	Thus, we can associate to each segment of data $r_i \in \ell(r^S)$ the empirical measure $\mu_{i}$ for $i=1,\dots, M$. This gives us a family of measures
	\begin{equation}\label{eqn:clusteringset}
		\mathcal{K} = \left\{(\mu_1, \dots, \mu_M) : \mu_i \in \mathcal{P}_p(\mathbb{R}) \text{ for }i=1,\dots, M\right\}.
	\end{equation}
	It is this family $\mathcal{K}$ which will be the subject of our clustering algorithm.

	\subsection{The $k$-means algorithm}\label{subsec:kmeans}
	
	Suppose $X = \{(\mathsf{x}_1, \dots, \mathsf{x}_N): \mathsf{x}_i \in V\} \in \mathcal{S}(V)$ is a stream of data over a normed vector space $(V, \norm{\cdot}_V)$. We further assume that each $\mathsf{x}_i = (x^i_1, \dots, x^i_d)$ has been standardised coordinate-wise, that is,  
	\begin{equation}\label{eqn:standardisation}
		\ex[\{x^i_j\}_{1\le i\le N}] = 0 \text{ and } \text{Var}(\{x^i_j\}_{1\le i\le N}) = 1\qquad \text{for }j=1,\dots, d.
	\end{equation}
	The \emph{$k$-means clustering algorithm} is a unsupervised vector quantization method which assigns elements of $X$ to $k$ distinct clusters. Each of these clusters are defined by central elements $\overline{\mathsf{x}} := \{\overline{\mathsf{x}}_j\}_{j=1,\dots, k}$ called \emph{centroids}, which are initially sampled from $X$. 
	
	At each step $n \in \mathbb{N}$ of the algorithm, one first calculates the \emph{nearest neighbours} 
	\begin{equation}\label{eqn:kmeansupdate}
		\mathcal{C}^n_l := \left\{\mathsf{x}_i \in X: \argmin_{j=1,\dots, k}d(\mathsf{x}_i, \overline{\mathsf{x}}^{n-1}_j)= l\right\}
	\end{equation} 
	associated to each $\overline{\mathsf{x}}^{n-1}_j$ for $j=1,\dots,k$, where $d: V \times V \to [0, +\infty)$ is the metric induced by the norm on $V$. 
	\begin{remark}
		Classically, one chooses $(V, \norm{\cdot}_V) = (\real^d, \norm{\cdot}_{\real^d})$, but we note here that any normed vector space could be chosen.
	\end{remark}

	Each set $\mathcal{C}^n_l$ is then aggregated into a new centroid $\mathsf{x}^n_l$ for $l=1,\dots, k$ via a function $\alpha: 2^V \to V$, so 
	\begin{equation*}
		\overline{\mathsf{x}}^n_l := \alpha(\mathcal{C}^n_l) \qquad \text{for }l=1,\dots,k.
	\end{equation*}

	For a given a tolerance level $\varepsilon > 0$ and a loss function $l: V^k \times V^k \to [0, +\infty)$, 	the $k$-means algorithm terminates at step $n \in \mathbb{N}$ if the stopping condition 
	\begin{equation*}
		l(\overline{\mathsf{x}}^n, \overline{\mathsf{x}}^{n-1}) < \varepsilon
	\end{equation*}
	is satisfied. The algorithm outputs the final clusters $\mathcal{C}^* = \{\mathcal{C}^n_l\}_{l=1,\dots,k}$ and the $k$ quantizations $\overline{\mathsf{x}}^n = \{\overline{\mathsf{x}}^n_l\}_{l=1,\dots, k}$. We conclude this section with the assumptions associated to the $k$-means algorithm that, if satisfied, will result in uniform and isotropic clustering of a data set $X$. 

	\begin{proposition}[\cite{kanungo2002efficient}]\label{prop:kmeansassumptions}
		Given data $X$, the $k$-means algorithm produces $k$ suitable clusters if the following is true:
		\begin{enumerate}[label={\arabic*.}]
			\item There exist $k$ natural clusters in the data $X$.
			\item Each cluster within $X$ is of roughly equal size.
			\item Within-cluster variation (cf. Definition \ref{def:withinclustervariation}) is uniform. That is, for $\delta_2 >0$ small we have that 
			\begin{equation*}
				\big|\mathrm{WC}(\mathcal{C}_i) - \mathrm{WC}(\mathcal{C}_j)| < \delta_2 \qquad \text{for }i,j=1,\dots, k\text{ and }i\ne j,
			\end{equation*}
			\item Clusters are spherical in shape, so we expect the nearest neighbours $\mathcal{C}_j$ to the $j^\text{th}$ centroid $\overline{\mathsf{x}}_j$ to be contained within a ball $B(\overline{\mathsf{x}}_j, \delta)$ where $\delta > 0$ is uniform across all clusters $j = 1,\dots, k$.
		\end{enumerate}
		If conditions (1)-(4) are satisfied, then optimal clusterings $\mathcal{C}^*$ will be suitable.
	\end{proposition}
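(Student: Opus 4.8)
The plan is to read the four conditions as the structural hypotheses under which the distortion-minimising partition coincides with the underlying cluster structure, and to argue each in turn from the $k$-means objective. Recall that, with the Euclidean aggregator $\alpha(\mathcal{C}) = \frac{1}{|\mathcal{C}|}\sum_{\mathsf{x}\in\mathcal{C}}\mathsf{x}$, the assignment rule (\ref{eqn:kmeansupdate}) and update step perform coordinate descent on the within-cluster sum of squares
\begin{equation*}
    \Phi(\overline{\mathsf{x}}, \mathcal{C}) = \sum_{l=1}^k \sum_{\mathsf{x}_i \in \mathcal{C}_l} \norm{\mathsf{x}_i - \overline{\mathsf{x}}_l}_V^2,
\end{equation*}
so that any output is a fixed point of this descent. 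Working in the classical setting $(V,\norm{\cdot}_V) = (\real^d, \norm{\cdot}_{\real^d})$ of the preceding remark, I would first extract the geometric content of condition (4): the assignment step partitions $\real^d$ into the Voronoi cells of the centroids, which are convex polytopes bounded by hyperplanes. Thus every partition the algorithm can produce separates the data by convex, isotropic regions, and such a partition recovers spherical clusters but cannot isolate elongated or non-convex ones regardless of centroid placement.

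Next I would treat conditions (2) and (3), which govern how homogeneity of cluster size and of within-cluster variation (Definition \ref{def:withinclustervariation}) interacts with $\Phi$. The operative fact is that $\Phi$ weights every datum equally, so its value is dominated by clusters that are large or highly dispersed. I would make this quantitative by showing that, when one cluster carries substantially more mass or variance than its neighbours, displacing the shared Voronoi boundary so as to split the heavy cluster and annex part of a light one strictly lowers $\Phi$; hence the distortion optimum fails to respect the true boundaries. Under the uniformity assumptions, by contrast, the marginal cost of a misassigned boundary point is balanced across clusters, and the distortion-optimal partition agrees with the natural one. Condition (1) enters as the prerequisite that a labelling with small inter-cluster overlap exists at all, since otherwise no configuration of $\Phi$ singles out the intended clustering.

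The main obstacle is that this argument controls only the global minimiser of $\Phi$, whereas the iteration (\ref{eqn:kmeansupdate}) is guaranteed to converge only to a local optimum---$k$-means optimisation is NP-hard in general. I would therefore frame the conclusion as a statement about the distortion-optimal partition: conditions (1)--(4) are exactly the hypotheses under which that optimum coincides with the natural clusters, and the word \emph{suitable} in the conclusion should be read in this optimal sense. Closing the residual gap between the global characterisation and the algorithm's actual output requires a seeding guarantee placing the iteration in the basin of the optimum, which I would invoke from the clustering literature rather than reprove here.
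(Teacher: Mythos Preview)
The paper does not prove this proposition at all: it is stated as a cited result from \cite{kanungo2002efficient}, and the only follow-up in the text is a pair of illustrative counterexamples (forcing too many clusters; concentric data violating condition~(4)). There is therefore no ``paper's own proof'' to compare your attempt against.

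Your write-up is a reasonable heuristic justification of why the four conditions are the right structural hypotheses, and in that sense it goes beyond what the paper offers. The Voronoi-cell argument for condition~(4) is sound, and your observation that the objective $\Phi$ weights all points equally---so that a large or high-variance cluster can be profitably split by moving a shared boundary---is the standard intuition behind conditions~(2)--(3). You are also right to flag that the argument characterises only the global minimiser of $\Phi$, whereas Lloyd's iteration reaches only a local one; deferring that gap to a seeding guarantee is honest but leaves the statement, as written, unproven at the algorithmic level.

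The deeper issue is that the proposition itself is not a sharp mathematical statement: ``suitable'' is never defined, and the conditions are phrased qualitatively (``roughly equal size'', ``$\delta_2>0$ small''). What you have produced is appropriate in spirit---a structured heuristic matching an informal claim---but it cannot be tightened into a rigorous proof without first making the statement precise, which neither the paper nor the cited reference does in this form.
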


	Counterexamples to suitability include forcing $k$ clusters on data with fewer than $k$ natural clusters available. Another classical example is the problem of clustering concentric data $X \subset \real^2$, which violates assumption (4) in Proposition \ref{prop:kmeansassumptions}. 	We note that do exist other clustering algorithms which do not share the drawbacks of $k$-means, in that they do not make assumptions regarding the number or shape of the clusters (hierarchical clustering), nor do they enforce that data points $x_i$ belong to one individual cluster (fuzzy c-means clustering, see for instance \cite{cannon1986efficient}). Exploring other clustering algorithms for the MRCP is a topic for future research. 	
	
	\subsection{The maximum mean discrepancy}\label{subsec:MMD}
	
	Evaluating derived $k$-means clusters on a stream of data $\mathcal{S}(\mathcal{X})$ is typically done by evaluating the final total cluster variation $\mathrm{TC}(\mathcal{C}^*)$ or \emph{inertia} (cf. Definition \ref{def:totalclustervariation}). Here, $\mathcal{C}^*$ are the final clusters obtained from a given run of the $k$-means algorithm. Naturally the value of $\mathrm{TC}(\mathcal{C}^*)$ is dependent on the normed vector space $(V_1, \norm{\cdot}_{V_1})$ one decides to cluster the steam of data $\mathcal{X}$ on (assuming it is feasible, a natural choice may be $V_1=\mathcal{X}$). Of course one could make a different choice $(V_2, \norm{\cdot}_{V_2})$ by transforming sets of datum $A \subset \mathcal{S}(\mathcal{X})$, see Section \ref{subsec:benchmark}. Since the total cluster variation depends on $V$, one cannot use it in evaluation between clusterings on different choices of $V$. 
	
	In this section, we outline an integrable probability metric on the space of distributions called the \emph{maximum mean discrepancy (MMD)}, which will be used as part of a robust methodology for confirming goodness-of-fit of clustered market regimes. The MMD has been shown to be a robust estimator under both dependence and presence of outliers \cite{cheriefabdellatif2021finite} and has been employed frequently in the quantitative finance and machine learning literature, see \cite{buehler2020data}, \cite{alquier2020estimation}, \cite{briol2019statistical}.
	
	We provide a brief introduction here and refer the reader to the literature \cite{gretton2012kernel}, \cite{NIPS2006_3110}, \cite{NIPS2009_3738} for further details. We begin by introducing the following.
	
	\begin{problem}[Two-sample test, \cite{gretton2012kernel}, Problem 1]\label{pro:twosampletest}
		Let $(\mathcal{X}, d)$ be a metric space. Suppose $X$ and $Y$ are independent random variables defined on $\mathcal{X}$. Suppose that $X_{\#}\mesp = \mu$ and $Y_{\#}\mesp = \nu$, where $\mu, \nu \in \mathcal{P}(\mathcal{X})$ are Borel. If we draw samples $x = (x_1, \dots, x_n)$ and $y = (y_1, \dots, y_m)$ where $x_i \sim \mu$ for $i=1,\dots,n$ and $y_j \sim \nu$ for $j =1,\dots, m$, when can we determine if $\mu \ne \nu$? That is, we wish to implement a test for the \emph{two-sample problem}
		\begin{equation}\label{twosampleproblem}
			H_0 : \mu = \nu \text{ against } H_1: \mu \ne \nu.
		\end{equation}
	\end{problem}

	We introduce the following test statistic associated to Problem \ref{pro:twosampletest}. 

	\begin{definition}[Maximum mean discrepancy, \cite{gretton2012kernel}, Definition 2]\label{def:maximummeandiscrepancy}
		Let $\mathcal{F}$ be a class of functions $f: \mathcal{X} \to \real$ and let $\mu, \nu$ be defined as in Problem \ref{pro:twosampletest}. Then, the \emph{maximum mean discrepancy} (MMD) between $\mu$ and $\nu$ is defined as 
		\begin{equation}\label{eqn:mmd_general}
			\mathrm{MMD}[\mathcal{F}, \mu, \nu] := \sup_{f \in \mathcal{F}}\Bigg(\ex_\mu[f(x)] - \ex_{\nu}[f(y)] \Bigg).
		\end{equation}
	 	If $x = (x_1, \dots, x_n)$ and $y = (y_1, \dots, y_m)$ are samples where $x_i \sim \mu$ and $y_j \sim \nu$, then a \emph{biased empirical estimate} of the MMD is given by
	 	\begin{equation}\label{eqn:biased-mmd-general}
	 		\mathrm{MMD}_b[\mathcal{F}, x, y] := \sup_{f \in \mathcal{F}}\Bigg[\frac{1}{n}\sum_{i=1}^n f(x_i) - \frac{1}{m}\sum_{j=1}^m f(y_j) \Bigg]
	 	\end{equation}
	\end{definition}

	Clearly, the value of the MMD between two measures is determined by the function class $\mathcal{F}$ one decides to calculate the supremum in (\ref{eqn:mmd_general}) over; in particular, it is not even guaranteed to be a metric. Often the MMD is employed in the context of studying mean differences between datum in a typically higher-dimensional feature space. This motivates the use of kernel methods to define $\mathcal{F}$, which is often chosen to be the unit ball in a reproducing kernel Hilbert space (RKHS) $(\mathcal{H}, \kappa)$, where $\kappa: \mathcal{X} \times \mathcal{X} \to \mathbb{R}$ is the associated reproducing kernel. That the MMD is a metric depends on the associated kernel $\kappa$ being \emph{universal} (cf. Definition \ref{def:universalkernel}) if $\mathcal{X}$ is compact. If $\mathcal{X}$ is non-compact, the following property is enough to guarantee that this is the case.
	\begin{definition}[Characteristic kernel, \cite{NIPS2008_d07e70ef}, Section 2]\label{def:characteristic}
		Let $\mathcal{X}$ be a non-empty set. A kernel $\kappa$ on $\mathcal{X}$ is called \emph{characteristic} if the mean mapping
		\begin{equation}
			\mu \mapsto \ex_{X\sim \mu}[\kappa(\cdot, X)]
		\end{equation}	
		is injective.
	\end{definition}
	The \emph{Gaussian kernel}
	\begin{equation}\label{eqn:gaussiankernelmain}
		\kappa_G: \mathbb{R}^d \times \mathbb{R}^d \to [0, +\infty), \qquad \kappa_G(x,y) = \exp(-\norm{x-y}^2_{\mathbb{R}^d}/2\sigma^2)	
	\end{equation}
	is characteristic to the set of Borel measures on $\mathcal{X}$ and indeed makes the MMD a metric on $\mathcal{P}(\mathcal{X})$. For more details we refer the reader to Theorem 2 in \cite{fukumizu2007kernel} or to Appendix \ref{appendix:B}, Theorem \ref{theroem:mmdmetric} for more details.
	
	We will use the MMD with $\mathcal{F} = (\mathcal{H}, \kappa_G)$ to validate how effective a given clustering algorithm is in the case that we are unable to infer true regime labels, i.e., when we are working with real data. A key notion to define how similar a collection of samples are to each other is the following.
	
	\begin{definition}[Within-cluster self-similarity (homogeneity)]\label{def:selfsimilarity}
		Let $X \in \mathcal{S}(\mathcal{X})$ be a stream of data with $N$ observations. Let $\mathcal{F}$ be the unit ball in a universal RKHS $\mathcal{H}$. For $n, m \in \mathbb{N}$, we define the \emph{self-similarity score} associated to $\mathcal{X}$ to be
		\begin{equation}\label{eqn:selfsimilarity}
			\mathrm{Sim}(X) = \mathrm{Median}\left((\mathrm{MMD}^2_b[\mathcal{F}, x_i, y_i])_{1\le i\le n} \right),
		\end{equation}
		where $x_i = (x^i_1,\dots, x^i_m)$ and $y_i=(y^i_1, \dots, y^i_m)$ are samples drawn pairwise from $\mathcal{X}$ for $i=1,\dots,n$.
	\end{definition}
	\begin{remark}
		The \emph{true self-similarity score} is obtained by calculating the biased MMD (\ref{eqn:biasedmmd}) for all unique combinations of pairwise samples. For computational reasons, we often calculate (\ref{eqn:selfsimilarity}) from $n << {N \choose 2}$ iterations.
	\end{remark}

	\section{$k$-means on the space of distributions}\label{sec:kmeansdist}
	
	In this section, we outline our modification to the $k$-means algorithm which allows us to cluster the set (\ref{eqn:clusteringset}) directly on the space of probability measures with finite $p^\text{th}$ moment. Central to this paper is the following distance metric on $\mathcal{P}_p(\mathbb{R})$.
	\begin{definition}[$p$-Wasserstein distance, \cite{ambrosio2005gradient}]\label{def:wasserstein}
		Suppose $(X,d)$ is a separable Radon space. The \textit{$p$-th Wasserstein distance} between measures $\mu ,\nu \in \mathcal{P}_p(X)$ is defined by
		\begin{equation}\label{eqn:wasserstein}
			\mathcal{W}^p_p(\mu,\nu):= \min_{\mesp \in \Pi(\mu,\nu)}\left\{\int_{X \times X} d(x,y)^p\,\mesp(dx,dy) \right\},
		\end{equation}
		where
		\begin{equation*}
			\Pi(\mu,\nu) := \{\mesp \in \mathcal{P}(X \times X): \mesp(A\times X) = \mu(A), \ \ \mesp(X\times B) = \nu(B) \}
		\end{equation*}
		is the set of \emph{transport plans} between $\mu$ and $\nu$.
	\end{definition}

	The $p$-Wasserstein distance (\ref{eqn:wasserstein}) is the solution to the Kantorovich-type optimal transportation problem between measures $\mu$ and $\nu$ for the cost function $c(x,y) = d(x,y)^p$. For our applications, existence of an optimal plan $\mesp^* \in \Pi(\mu, \nu)$ realising the Wasserstein distance between measures $\mu, \nu \in \mathcal{P}(\real)$ is guaranteed by continuity of the metric $d(x,y) = |x-y|^p$ and the fact that $\mu, \nu$ will be empirical measures and thus posses compact support. We refer the reader to \cite{santambrogio} for further details.
	
	\begin{remark}[Relationship to the $\mathrm{MMD}$]
		The Wasserstein distance (\ref{eqn:wasserstein}), via its equivalent dual formulation, is a special case of an integral probability metric (see, for instance, \cite{wang2021twosample}, Definition 1). In the case where $p=1$, the dual representation is given by
		\begin{equation}\label{eqn:wassersteindual}
			\mathcal{W}_1(\mu, \nu) = \sup_{f\in \text{Lip}_1(X)}\left\{\int_{X}f\,d(\mu-\nu)\right\},
		\end{equation}
		where $\text{Lip}_1(X)$ denotes the space of continuous $\real$-functions over $X$ with Lipschitz constant $L \le 1$. Since $\mu, \nu$ are probability measures, we can write (\ref{eqn:wassersteindual}) as 
		\begin{equation*}
			\mathcal{W}_1(\mu, \nu) = \sup_{\norm{f}_{\text{Lip}_1} \le 1}\left\{\ex_\mu[f(x)] - \ex_\nu[f(y)]\right\}.
		\end{equation*}
		Thus $\mathcal{W}_1(\mu, \nu)$ is an integrable probability metric over the function class $\mathcal{F}$, which is given by the unit ball in the space of functions
		\begin{equation*}
			\mathrm{Lip}(X) = \{f: X \to \real: f \text{ continuous,} \norm{f}_{\mathrm{Lip}} < + \infty\},
		\end{equation*}
		where
		\begin{equation*}
			\norm{f}_\mathrm{Lip} = \sup_{x\ne y}\frac{|f(x) - f(y)|}{d(x,y)}.
		\end{equation*}
	\end{remark}
	
	In what follows, we choose the $p$-Wasserstein distance to be our metric on $\mathcal{P}_p(\real)$. As explained in Section \ref{subsec:wassmotivation}, this distance is natural and tractable to use on the space of probability measures.
	
	Our next decision we need to make is how we aggregate nearest neighbours $\mathcal{C}_l$ into central elements $\overline{\mu}_l \in \mathcal{P}_p(\real)$ for $l=1,\dots, k$. One of the advantages of choosing the Wasserstein distance $\mathcal{W}_p$ to be the metric we apply on our clustering space is the existence of the following, which gives a natural way to ``average'' a family of measures under $\mathcal{W}_p$.
	\begin{definition}[Wasserstein barycenter]\label{def:wassbary}
		Suppose $(X, d)$ is a separable Radon space and let $\mathcal{K} = \{\mu_i\}_{i\ge 1}\subset \mathcal{P}(X)$ be a family of Radon measures. Define the \emph{$p$-Wasserstein barycenter} $\overline{\mu}$ of $\mathcal{K}$ to be 
		\begin{equation}\label{eqn:wassbary}
			\overline{\mu} = \argmin_{\nu \in \mathcal{P}(X)}\sum_{\mu_i \in \mathcal{K}}\mathcal{W}_p(\mu_i, \nu).
		\end{equation}
	\end{definition}
	\begin{remark}
		If $\{\mu_i\}_{i\ge 1}$ are a family of measures associated to a cluster $\mathcal{C}_l, l=1,\dots, k$, then the Wasserstein barycenter (\ref{eqn:wassbary}) is the measure $\overline{\mu} \in \mathcal{P}_p(\real)$ which minimises the within-cluster variation $\mathrm{WC}(\mathcal{C}_l)$ from Definition \ref{def:withinclustervariation}.
	\end{remark}

	Since the $k$-means algorithm requires repeated evaluations of elements on clustering space under the given metric, tractability of the non-linear optimisation (\ref{eqn:wasserstein}) becomes relevant. In the case where measures $\mu, \nu \in \mathcal{P}(\real)$ are absolutely continuous, there exist a closed-form solution to (\ref{eqn:wasserstein}).
	\begin{proposition}[\cite{kolouri2019generalized}, Equation (3)]\label{prop:wassrep}
		Suppose $\mu, \nu \in \mathcal{P}_p(\real^d)$ and let $d=1$. Moreover, suppose that $\mu, \nu$ are absolutely continuous with respect to the Lebesgue measure on $\mathbb{R}$. Then, the $p$-Wasserstein distance $\mathcal{W}_1(\mu, \nu)$ is given by
		
		\begin{equation}\label{eqn:closedformwasserstein}
			\mathcal{W}_p(\mu, \nu) = \left( \int_0^1 |F_\mu^{-1}(z) - F_\nu^{-1}(z)|^p\,dz\right)^{1/p},
		\end{equation}
		where the quantile function $F_\mu^{-1}: [0, 1) \to \real$ is defined as
		\begin{equation}\label{eqn:quantilefunction}
			F_\mu^{-1}(z) = \inf\{x: F_\mu(x) > z\}.
		\end{equation}
	\end{proposition}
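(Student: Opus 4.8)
The plan is to exploit the one-dimensional structure of the problem, where the optimal transport coupling is given explicitly by monotone (comonotone) rearrangement, and to prove the claimed identity by separately establishing matching upper and lower bounds on $\mathcal{W}_p^p(\mu,\nu)$. The identity is classical in one-dimensional optimal transport, so the content of the proof is really the optimality of the quantile coupling; the closed form then drops out by a change of variables.

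First I would construct an explicit admissible plan achieving the right-hand side. Let $U$ have the uniform law on $[0,1]$. Since $\mu$ is absolutely continuous, $F_\mu$ is continuous and strictly increasing on the interior of its support, so the quantile function $F_\mu^{-1}$ from (\ref{eqn:quantilefunction}) is a genuine left-inverse and the inverse-transform identity gives $F_\mu^{-1}(U)\sim\mu$, and likewise $F_\nu^{-1}(U)\sim\nu$. Hence the pair $\big(F_\mu^{-1}(U),F_\nu^{-1}(U)\big)$ has marginals $\mu$ and $\nu$, so its joint law $\pi^\star$ is an admissible element of $\Pi(\mu,\nu)$. A change of variables $z=F_\mu(x)$ then yields
\begin{equation*}
	\int_{\real\times\real}|x-y|^p\,\pi^\star(dx,dy) = \int_0^1 \big|F_\mu^{-1}(z)-F_\nu^{-1}(z)\big|^p\,dz,
\end{equation*}
which, inserted into the definition (\ref{eqn:wasserstein}), furnishes the upper bound $\mathcal{W}_p^p(\mu,\nu)\le \int_0^1 |F_\mu^{-1}(z)-F_\nu^{-1}(z)|^p\,dz$.

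The substantive step is the matching lower bound, that is, showing $\pi^\star$ is optimal. The key structural fact is that, for $p\ge 1$, the cost $c(x,y)=|x-y|^p$ is submodular: its mixed second derivative satisfies $\partial^2_{xy}c\le 0$ off the diagonal, equivalently $c(x,y)+c(x',y')\le c(x,y')+c(x',y)$ whenever $x<x'$ and $y<y'$, with strict inequality when $p>1$. I would then invoke the fact that the support of any optimal plan must be $c$-cyclically monotone, and argue that submodularity forces such a support to be monotone (nondecreasing) as a subset of $\real\times\real$; since the comonotone coupling with the prescribed marginals is unique and equals $\pi^\star$, this identifies $\pi^\star$ as the minimiser and completes the proof. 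An equivalent, more hands-on route is to verify directly that $\int c\,d\pi \ge \int_0^1 c\big(F_\mu^{-1}(z),F_\nu^{-1}(z)\big)\,dz$ for every $\pi\in\Pi(\mu,\nu)$ via the Hardy--Littlewood rearrangement inequality, whose finite-dimensional avatar is that $\sum_i |a_{\sigma(i)}-b_i|^p$ over permutations $\sigma$ is minimised by sorting both sequences; this is precisely the observation underpinning the $\mathcal{O}(N\log N)$ complexity claimed in Section \ref{subsec:wassmotivation}.

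The main obstacle is this optimality argument: one must show rigorously that $c$-cyclical monotonicity of the support, combined with submodularity, pins down the monotone coupling. The borderline case $p=1$ requires slight extra care, since the cost is convex but not strictly so; there the comonotone coupling remains optimal but need not be the unique minimiser, so it is optimality rather than uniqueness that one verifies. Throughout, absolute continuity of $\mu$ and $\nu$ is what guarantees that $F_\mu^{-1}$ and $F_\nu^{-1}$ are honest inverses and that the change of variables above is valid, thereby removing the technical complications that arise for measures with atoms.
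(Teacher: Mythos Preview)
Your proof is correct and follows essentially the same approach as the paper: identify the optimal coupling as the comonotone (quantile) coupling and evaluate its cost by a change of variables. The paper's own proof is a one-line sketch that simply asserts the optimal transport map is $T=F_\nu^{-1}\circ F_\mu$ and invokes the change of variables, whereas you additionally supply the optimality argument via submodularity and $c$-cyclical monotonicity; this is extra detail rather than a different route.
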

	\begin{proof}
		A consequence of the fact that the (unique) optimal transport map pushing $\mu$ onto $\nu$ is given by $T(x) = (F^{-1}_\nu \circ F_\mu)(x)$, and applying a change of variables.
	\end{proof}

	In what follows, we assume that $\mu, \nu$ are empirical measures with equal numbers of atoms $N \in \mathbb{N}$ (this will be the case in our experimental setup). Recalling Definition \ref{def:empiricalmeasures}, we may write them as 
	\begin{equation}\label{eqn:empiricalmeasuresexamples}
		\mu((-\infty, x]) = \frac{1}{N}\sum_{i=1}^N \chi_{\alpha_i\le x}(x), \qquad \nu((-\infty, x]) = \frac{1}{N}\sum_{i=1}^N \chi_{\beta_i\le x}(x)
	\end{equation}
	where $(\alpha_i)_{1\le i\le N}$ and $(\beta_i)_{1\le i\le N}$ are increasing sequences corresponding to the atoms of $\mu$ and $\nu$. We wish to use (\ref{eqn:closedformwasserstein}) to obtain a closed-form expression for the Wasserstein distance between the two measures. Thus, we must consider the well-posedness of (\ref{eqn:quantilefunction}): every empirical measure on $\mathbb{R}$ is Radon, and thus one can associate to $\mu$ ($\nu$) a right-continuous function of finite variation $A_t: \mathbb{R} \to [0, 1]$ given by $A_t = \mu\left((-\infty, t)\right)$ (\cite{revuz2004continuous}, Theorem 4.3). The function $A_t$ possesses a right-continuous inverse which is nothing but the quantile function from (\ref{eqn:quantilefunction}), which (in the case of $\mu$) can be written as
	
	\begin{equation}\label{eqn:empiricalquantilefunction}
		F^{-1}_\mu(z) = \alpha_i \qquad \text{for all }z \in \left[\frac{i-1}{N}, \frac{i}{N} \right), \qquad i=1, \dots, N.
	\end{equation}

	Moreover $F^{-1}_\mu(z) = 0$ for all $z < \alpha_1$. Applying (\ref{eqn:empiricalquantilefunction}) to (\ref{eqn:closedformwasserstein}), the Wasserstein distance between the empirical measures $\mu$ and $\nu$ is given by
	\begin{align}
		\mathcal{W}_p(\mu, \nu)^p &= \sum_{i=1}^N \int_{\tfrac{i-1}{N}}^{\tfrac{i}{N}}|F^{-1}_\mu(z) - F^{-1}_\nu(z)|^p\,dz\nonumber \\
			&= \frac{1}{N}\sum_{i=1}^N |\alpha_i - \beta_i|^p \label{eqn:wassdistatoms}.
	\end{align}
	Thus, calculating the Wasserstein distance between two empirical measures can be done in linear time, assuming the atoms of each measure are already sorted ascending. If not, calculating (\ref{eqn:wassdistatoms}) is an $\mathcal{O}(N \log N)$ operation, where $N$ is the number of atoms. This representation also makes calculating the Wasserstein barycenter (\ref{eqn:wassbary}) simple in the case where $p=1$ and some assumptions are made on the number of atoms present in each measure.
	\begin{proposition}[Wasserstein barycenter, empirical measures]
		Suppose that $\{\mu_i\}_{1\le i\le M}$ are a family of empirical probability measures, each with $N$ atoms $(\alpha^i_j)_{1\le j\le N} \subset \real^N$. Let 
		\begin{equation*}
				a_j = \mathrm{Median}(\alpha^1_j, \dots, \alpha^M_j) \qquad \text{for } j=1,\dots, N.
		\end{equation*}
		Then, the cumulative distribution function of the Wasserstein barycenter $\overline{\mu} \in \mathcal{P}_1(\real)$ over $\{\mu_i\}_{1\le i\le M}$ with respect to the 1-Wasserstein distance is given by
		\begin{equation}
			\overline{\mu}\left((-\infty, x]\right) = \frac{1}{N}\sum_{i=1}^N \chi_{a_i \le x}(x).
		\end{equation}
		Moreover, $\overline{\mu}$ is not necessarily unique.
	\end{proposition}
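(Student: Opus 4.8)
The plan is to exploit the closed-form quantile representation of the $1$-Wasserstein distance in dimension one, which decouples the barycenter problem into a collection of scalar minimizations solved by medians. First I would recall that the representation (\ref{eqn:closedformwasserstein}) extends beyond the absolutely continuous case: for any $\mu, \nu \in \mathcal{P}_1(\real)$ one has $\mathcal{W}_1(\mu,\nu) = \int_0^1 |F_\mu^{-1}(z) - F_\nu^{-1}(z)|\,dz$ with $F_\mu^{-1}$ the quantile function (\ref{eqn:quantilefunction}), which for the empirical measures $\mu_i$ is exactly the step-function identity underlying (\ref{eqn:wassdistatoms}). Writing $g = F_\nu^{-1}$ for the quantile function of a candidate barycenter, the objective in Definition \ref{def:wassbary} becomes
\[
\sum_{i=1}^M \mathcal{W}_1(\mu_i, \nu) = \int_0^1 \sum_{i=1}^M |F_{\mu_i}^{-1}(z) - g(z)|\,dz.
\]

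Next I would observe that minimizing over $\nu \in \mathcal{P}_1(\real)$ is equivalent to minimizing over all non-decreasing (quantile) functions $g$, and attempt to minimize the integrand pointwise in $z$. For each fixed $z$, the inner sum $\sum_i |F_{\mu_i}^{-1}(z) - g(z)|$ is minimized over the single scalar $g(z)$ by any median of the values $\{F_{\mu_i}^{-1}(z)\}_{1\le i\le M}$ — the standard fact that the median minimizes mean absolute deviation. The pointwise minimum therefore furnishes a lower bound for the integral, so that the choice $g^*(z) = \mathrm{Median}(F_{\mu_1}^{-1}(z),\dots,F_{\mu_M}^{-1}(z))$ attains the global minimum \emph{provided} $g^*$ is itself an admissible quantile function.

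The crux of the argument — and the step I expect to require the most care — is verifying this admissibility, i.e. that $g^*$ is non-decreasing, so that it genuinely is the quantile function of some measure and the pointwise minimizer is feasible. This follows because each $F_{\mu_i}^{-1}$ is non-decreasing and the median is monotone with respect to the coordinatewise order: since the atoms satisfy $\alpha^i_j \le \alpha^i_{j+1}$ for every $i$, one has $\mathrm{Median}_i(\alpha^i_j) \le \mathrm{Median}_i(\alpha^i_{j+1})$. Using the explicit step form (\ref{eqn:empiricalquantilefunction}), $F_{\mu_i}^{-1}(z) = \alpha^i_j$ on $[\tfrac{j-1}{N}, \tfrac{j}{N})$, hence $g^*(z) = a_j$ on that interval, which is precisely the quantile function of the empirical measure with ordered atoms $a_1 \le \dots \le a_N$. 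Inverting from the quantile function back to the cumulative distribution function then yields the claimed formula for $\overline{\mu}$.

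Finally, for the non-uniqueness claim I would note that the scalar minimizer is unique only when the median is: if $M$ is even, every value in the interval between the two central order statistics of $(\alpha^1_j,\dots,\alpha^M_j)$ minimizes the $j$-th inner sum. Any monotone selection from these intervals produces a distinct admissible $g^*$ achieving the same minimal objective, so $\overline{\mu}$ need not be unique.
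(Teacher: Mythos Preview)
Your argument is correct and in fact somewhat more complete than the paper's. The paper proceeds differently: it first treats the case $N=1$ (each $\mu_i$ a Dirac mass), observes via convexity of $x\mapsto\sum_i|x-\alpha_i|$ that the barycenter is again a single Dirac, and solves the resulting scalar problem by the median. For $N>1$ it then asserts directly that
\[
\inf_{\nu\in\mathcal{P}_1(\real)}\sum_{i=1}^M\mathcal{W}_1(\mu_i,\nu)=\inf_{(a_1,\dots,a_N)\in\real^N}\sum_{i=1}^M\sum_{j=1}^N|a_j-\alpha^j_i|,
\]
interchanges the sums, and applies the $N=1$ case to each $j$. The right-hand side here presupposes that the minimizing $\nu$ is itself empirical with $N$ equally weighted atoms, which the paper does not justify for $N>1$. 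Your route via the quantile representation avoids this: you minimize over all $\nu$ (equivalently all non-decreasing $g$), obtain the pointwise lower bound, and then explicitly verify that the pointwise median $g^*$ is monotone and hence feasible --- this monotonicity step is exactly what certifies that no non-empirical $\nu$ can do better, and it is the ingredient missing from the paper's argument. The paper's approach is computationally lighter once the finite-dimensional reduction is granted; yours is self-contained and makes clear why the barycenter inherits the $N$-atom structure. Your treatment of non-uniqueness matches the paper's.
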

	\begin{proof}
		See Appendix \ref{prop:wassersteinbarycenter}.
	\end{proof}
	
	The last specification we need to make is regarding the loss function. We do this in the natural way by replacing the squared Euclidean distance in standard $k$-means by the $p$-Wasserstein distance. Let $\overline{\mu}^{n} = (\overline{\mu}^{n}_i)_{1\le i\le k}$ be the centroids obtained after step $n$ of the Wasserstein $k$-means algorithm. Therefore, our loss function $l : \mathcal{P}_p(\real)^k \times \mathcal{P}_p(\real)^k \to [0, +\infty)$ is given by
	\begin{equation}\label{eqn:wasskmeansloss}
		l(\overline{\mu}^{n-1}, \overline{\mu}^n) = \sum_{i=1}^k \mathcal{W}_p(\overline{\mu}^{n-1}_i, \overline{\mu}^n_i).
	\end{equation}
	Our stopping rule is unchanged; that is, for a given $\varepsilon > 0$ we terminate the algorithm at step $n$ if $l(\overline{\mu}^{n-1}, \overline{\mu}^n) < \varepsilon$. We give a full statement of the algorithm with the following. 
	\begin{definition}[WK-means algorithm]\label{def:wassersteinkmeans}
		Let $\mathcal{K} \subset \mathcal{P}_p(\real)$ be a family of measures with finite $p^\text{th}$ moment. We refer to the $k$-means clustering algorithm on $(\mathcal{P}_p(\real), \mathcal{W}_p)$, with aggregation method given by the Wasserstein barycenter from Definition \ref{def:wassbary} and loss function given by (\ref{eqn:wasskmeansloss}) as the \emph{Wasserstein $k$-means algorithm}, or \emph{WK-means}.
 	\end{definition} 
	We summarize with Algorithm \ref{wassersteinkmeansalgo}.
	\begin{algorithm}[h]
		\SetAlgoLined
		\KwResult{$k$ centroids}
		\textbf{calculate} $\ell(r^S)$ given $S$\;
		\textbf{define} family of empirical distributions $\mathcal{K} = \{\mu_j\}_{1\le j\le M}$\;
		\textbf{initialise} centroids $\overline{\mu}_i, i=1,\dots,k$ by sampling $k$ times from $\mathcal{K}$\;
		\While{loss\_function $>$ tolerance}{
			\ForEach{$\mu_j$}{
				\textbf{assign} closest centroid wrt $\mathcal{W}_p$ to cluster $\mathcal{C}_l$, $l=1,\dots,k$;
			}
			\textbf{update} centroid $i$ as the Wasserstein barycenter relative to $\mathcal{C}_l$\;
			\textbf{calculate} loss\_function\; 
		}
		\caption{WK-means algorithm}
		\label{wassersteinkmeansalgo}
	\end{algorithm}	

	\section{Methodology and numerical results}\label{sec:methodology}
	
	In this section, we cover the methods used to test the WK-means algorithm on stock data. Initially, we test both algorithms on real data. Validation of each clustering algorithm was conducted using the MMD test statistic (\ref{eqn:biased-mmd-general}). Finally, we tested both algorithms on synthetic data generated via two different models: one where the associated log-returns were distributed normally, and another where they were not.
	
	\subsection{Alternative clustering algorithms as benchmarks}\label{subsec:benchmark}
I this section we seek to benchmark our approach via two alternative algorithms. In this section, we briefly introduce these methods. 
	
	\subsubsection{$k$-means with statistical moments}
	
	A natural and more classical approach to clustering regimes may involve studying the first $p \in \mathbb{N}$ raw moments associated to each measure $\mu \in \mathcal{K}$. With this in mind, consider the image of $\mathcal{K}$ from (\ref{eqn:clusteringset}) under the function
	
	\begin{equation}\label{eqn:momentfunc}
		\varphi^p(\mu) = \left(\frac{1}{n!}\int_\real x^n\,\mu(dx) \right)_{1\le n\le p},
	\end{equation}
	which is the \emph{truncated} unstandardised $p^{\text{\emph{th}}}$-\emph{moment map}. As each $\mu \in \mathcal{K}$ is a sum of Dirac masses, each element of $\varphi^p(\mu)$ is finite. Thus, for a given $p>1$ we obtain
	\begin{equation}\label{eqn:momentclusteringset}
		\varphi^p(\mathcal{K}) = \left\{\left(\varphi^p(\mu_1), \dots, \varphi^p(\mu_M)\right): \varphi^p(\mu_i) \in \mathbb{R}^p \text{ for } i=1,\dots,M\right\}.
	\end{equation}

	After standardising each element of $\varphi^p(\mathcal{K})$ component-wise (cf. Remark \ref{rmk:momentmagnitude}), we obtain a clustering set on $\mathbb{R}^p$, which we can apply the standard k-means algorithm to. This motivates the following definition. 
	
	\begin{definition}[Moment $k$-means]
		Let $\mathcal{K} \subset \mathcal{P}_p(\real)$ be a family of measures. For $p\ge 1$, associate to each $\mu_i \in \mathcal{K}$ the $\real^p$-vector $\varphi^p(\mu_i)$ for $i=1,\dots, M$, where $\varphi^p : \mathcal{P}_p(\real) \to \real^p$ is the $p$-moment map from  (\ref{eqn:momentfunc}).
		
		Then, \emph{moment $k$-means algorithm}, or \emph{MK-means}, is given by applying Algorithm \ref{standardkmeansalgo} to the stream of data $\varphi^p(\mathcal{K})$ from (\ref{eqn:momentclusteringset}). See Appendix \ref{appendix:A} for more details.
	\end{definition}
	\begin{remark}[Magnitude of moments]\label{rmk:momentmagnitude}
		The function $\varphi^p$ defined in (\ref{eqn:momentfunc}) outputs the first $p$ raw moments associated to a measure $\mu \in \mathcal{P}_p(\real)$. Often, moments that appear earlier in the sequence $(\varphi^p(\mu)_i)_{1\le i\le p}$ will be of significantly larger magnitude than those that appear later. In order for the $k$-means algorithm to not place undue emphasis on these moments, it is critical that each slice $\{\varphi_p(\mu_i)_j\}_{1\le i\le M}$ is standardised according to equation (\ref{eqn:standardisation}) for $1 \le j \le p$.
	\end{remark}
	
	\subsubsection{Hidden Markov model}
	
	As mentioned in Section \ref{subsec:marketregimes}, a more classical approach to market regime clustering involves fitting a \emph{hidden Markov model (HMM)} to observed time series data $\mathsf{x} \in \mathcal{S}(\mathbb{R})$,  that there exist $k \in \mathbb{N}$ hidden latent states $\{1, \dots, k\}$ which govern the dynamics of $\mathsf{x}$. The transition between the latent states is assumed Markovian, and although they are not directly observable they are represented by a transition density $f(x\,|\,z_l, \theta_l)$ where $z_l$ is the given latent state and $\theta_l$ are parameters associated to the state, for $l=1,\dots,k$, and the most common choice of likelihood is a Gaussian one. We refer the reader to \cite{dias2015clustering} for more details.
	
	As another point of comparison to our approach, we fit a Gaussian HMM in both our real and synthetic data experiments. A main point of difference here is that the HMM does not cluster sets of returns: instead, it associates returns at time $t$ to a given latent state. We thus can derive accuracy statistics in the case where we run the HMM over synthetic data, but for real data our validation method using the MMD is not possible.

	\subsection{Validation on real data}\label{subsec:realdata}
	
	In this section, we give results from each algorithm on real data.
	
	\subsubsection{Data and hyperparameters}
	We begin by testing both algorithms on market data. In particular, we use one-hourly log-returns $r^S \in \mathcal{S}(\real)$ associated to the SPY index from 2005-01-03 to 2020-12-31. Recalling Definition \ref{def:streamlift} and (\ref{eqn:streamlift}), we set the hyper-parameters $(h_1, h_2) = (35, 28)$. This roughly partitions the time-series into weeks, with adjacent partitions within one day of each other. We defer discussions regarding choices of hyperparameters to Section \ref{subsec:hyperparameters}. 
	
	Regarding the number of clusters, we set $k=2$. Primarily, this is for simplicity as it makes comparisons between each methods simpler. Moreover, it also reflects a stylised fact regarding financial markets, being that market returns can be roughly apportioned into bull and bear cycles. We note that other values of $k$ do have financial interpretations - for instance, see Maheu, McCurdy and Song \cite{bullbearregimes}. 
	
	We ran the two algorithms over the lifted steam of data $\ell(r^S)$. For each algorithm, we obtained centroids $\{\overline{\mu}_i\}_{i=1,2}$ and the nearest neighbours $\{\mathcal{C}_l\}_{l=1,2}$ with $\mathcal{C}_l \subset \mathcal{P}_p(\real)$ being the nearest neighbours corresponding to the $l^\text{th}$ centroid. In order to display our results, we primarily use two plots. The first is the projection of each distribution $\mu \in \mathcal{K}$ onto $\real^2$ via the map 
	\begin{gather*}
		f_p: \mathcal{P}_p(\real) \to \real^2, \\ 
		\mu \mapsto \left(\sqrt{\mathrm{Var}(\mu)}, \ex[\mu]\right),
	\end{gather*}
	that is, a scatter plot of each measure in mean-variance space. We colour these points according to their centroid membership. Points coloured green correspond to the cluster with lower variance than those coloured red. The second plot we make use of is a time-series plot of stock values $S$, where each partition has been coloured corresponding to its centroid membership. Given that the empirical distributions overlap, a given timestamp may be classified into multiple clusters. Thus, we colour these points according to their average centroid memberships. In the case of $k=2$ with the hyperparameters $(h_1, h_2) = (35, 28)$, a single return $r^S_i$ can potentially belong to 5 different empirical measures. Thus, there are 6 total potential centroid membership combinations that it can have, and we colour these sections of the price path accordingly.
	
	\subsubsection{Algorithm results}\label{subsubsec:momentwk}
	
	We first present the results of running either algorithm over hourly SPY data. Figure \ref{fig:spymeanvar} gives the scatter plots of each empirical measure $\mu \in \mathcal{K}$ coloured according to its cluster membership. The centroids are marked by crosses and coloured accordingly. 
	
	\begin{figure}[h]
		\centering
		\begin{subfigure}{0.5\linewidth}
			\centering
			\includegraphics[width=\textwidth]{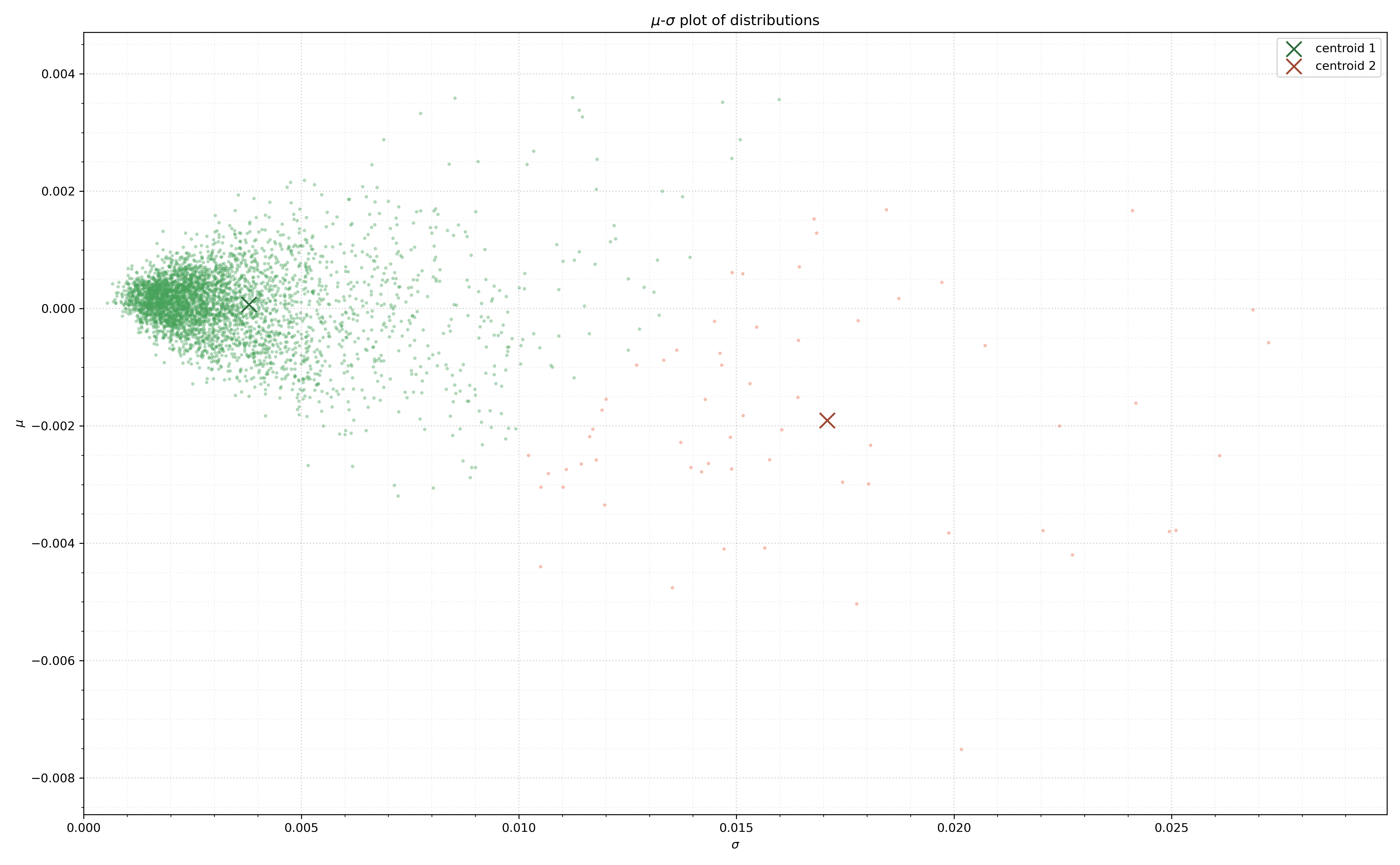}
			\caption{MK-means.}
			\label{fig:momentsmeanvar}
		\end{subfigure}%
		\begin{subfigure}{0.5\linewidth}
			\centering
			\includegraphics[width=\textwidth]{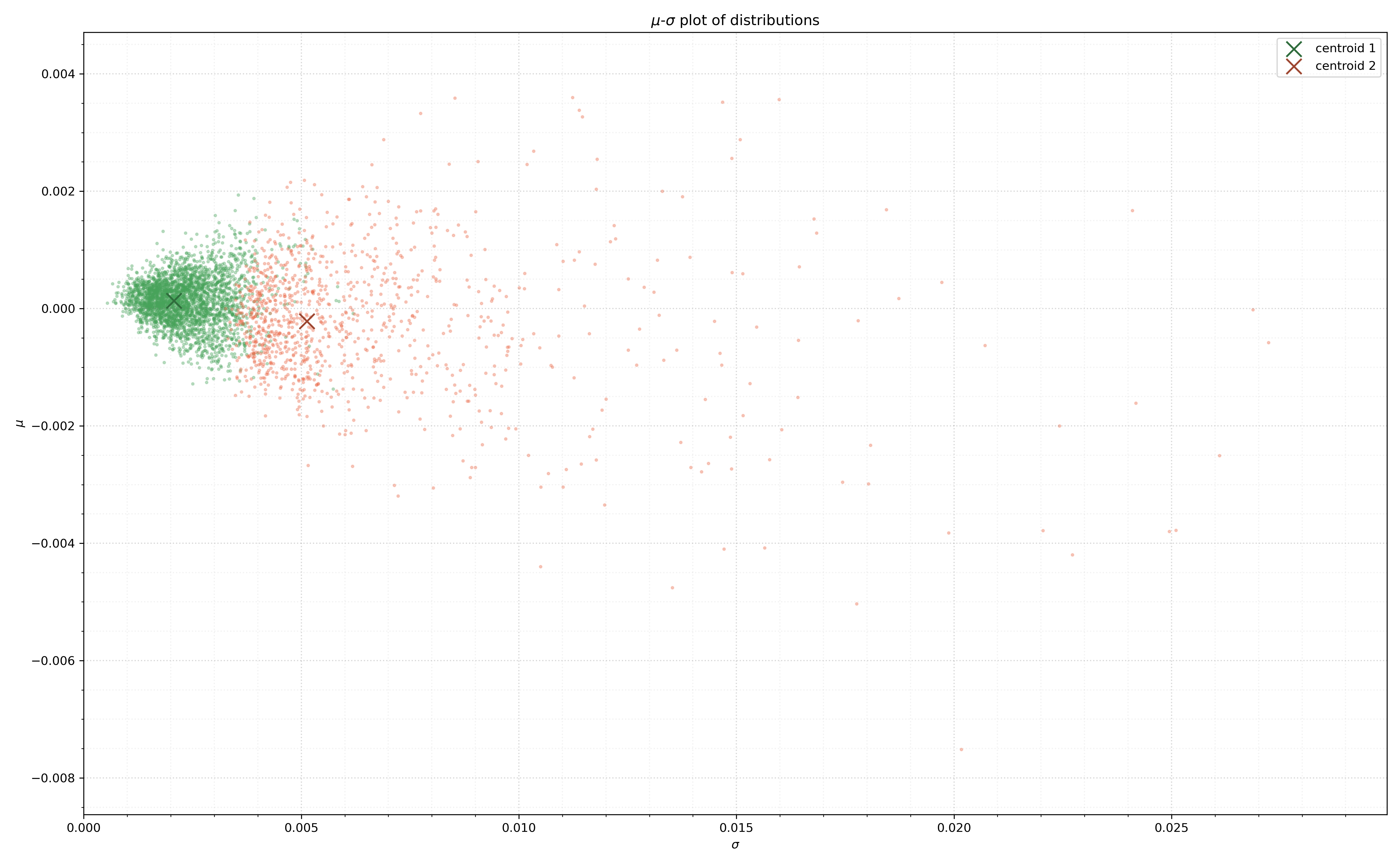}
			\caption{WK-means.}
			\label{fig:wassersteinmeanvar}
		\end{subfigure}
		\caption{Plots of MK-means and WK-means clusters in mean-variance space.}
		\label{fig:spymeanvar}
	\end{figure}
	
	From Figure \ref{fig:wassersteinmeanvar}, we see that the WK-means classifications are much less susceptible to outlier distributions than the MK-means algorithm. We also note that the Wasserstein approach demarcates distributions $\mu \in \mathcal{K}$ by variance, which one naturally expects in a financial market setting. Although Figure \ref{fig:momentsmeanvar} appears to do the same, it is hard to state this definitively as the clustering algorithm seems to primarily group outlier distributions.
	
	This is made apparent in the graphs presented in Figure \ref{fig:spyhistorical}, where we have associated to each $\mu \in \mathcal{K}$ the partition of $S \in \mathcal{S}(\real)$ it is generated from. 
	
	\begin{figure}[h]
		\centering
		\begin{subfigure}{0.5\linewidth}
			\centering
			\includegraphics[width=\textwidth]{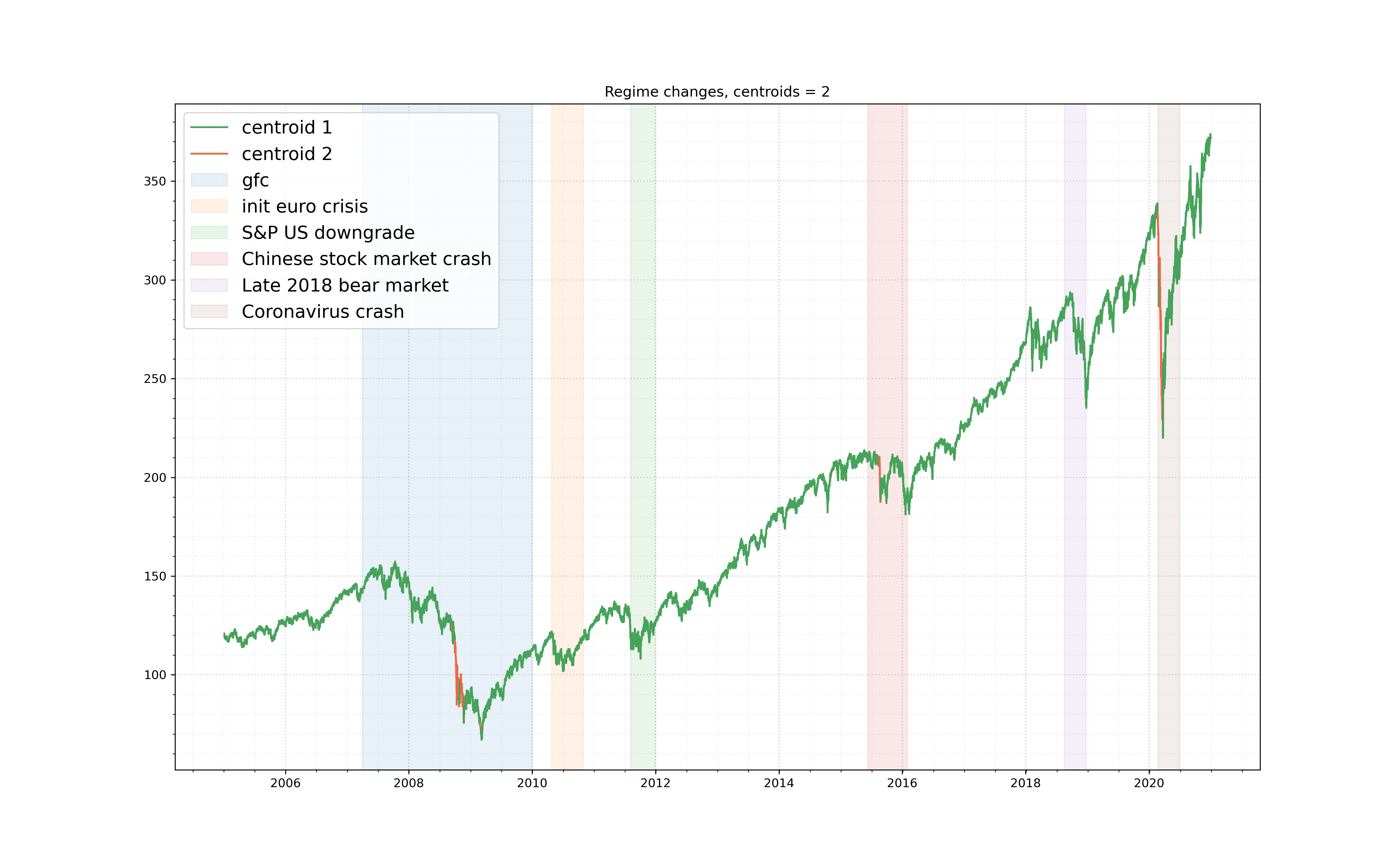}
			\caption{MK-means.}
			\label{fig:momentshistorical}
		\end{subfigure}%
		\begin{subfigure}{0.5\linewidth}
			\centering
			\includegraphics[width=\textwidth]{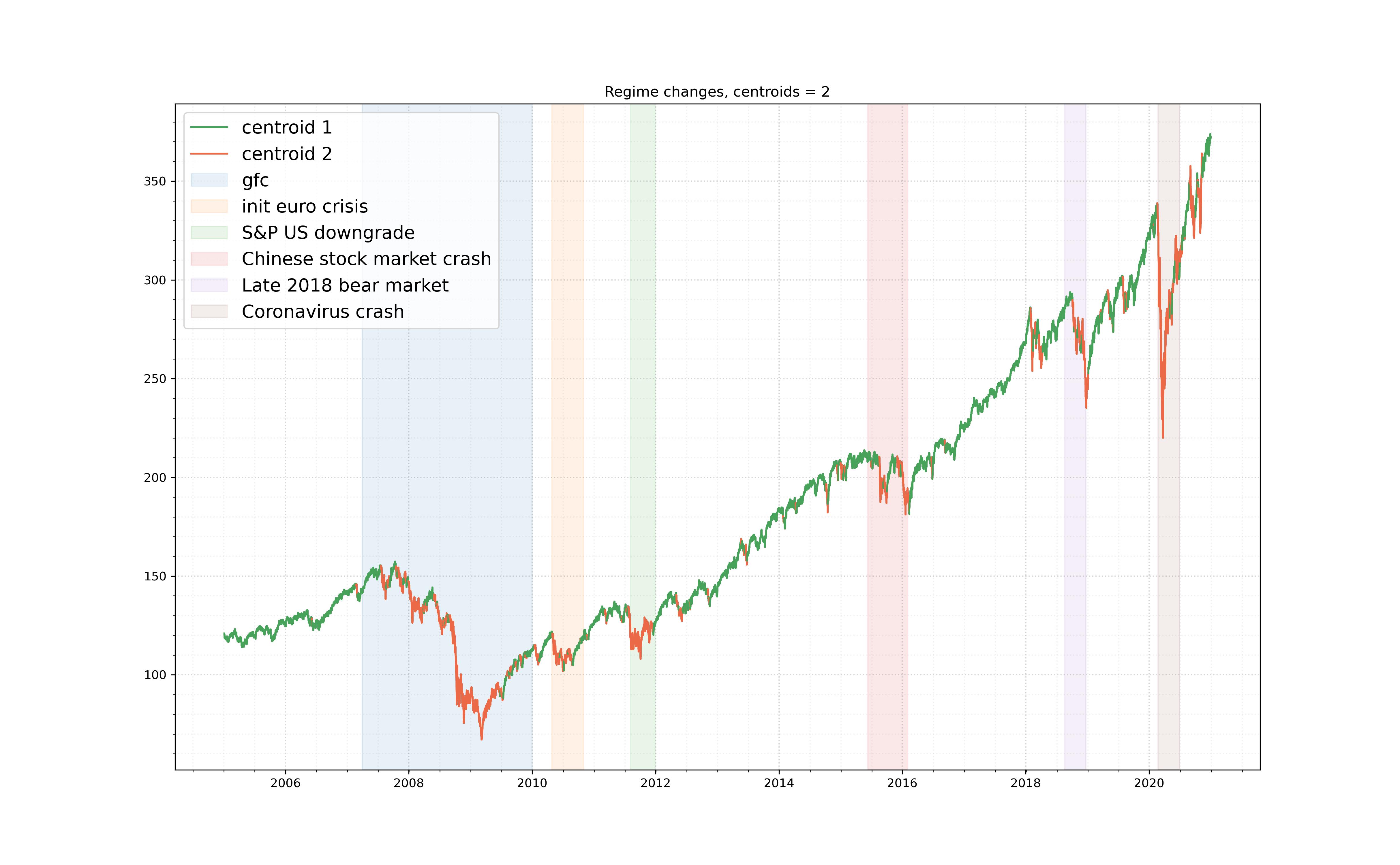}
			\caption{WK-means.}
			\label{fig:wassersteinhistorical}
		\end{subfigure}
		\begin{subfigure}{0.5\linewidth}
			\centering
			\includegraphics[width=\textwidth]{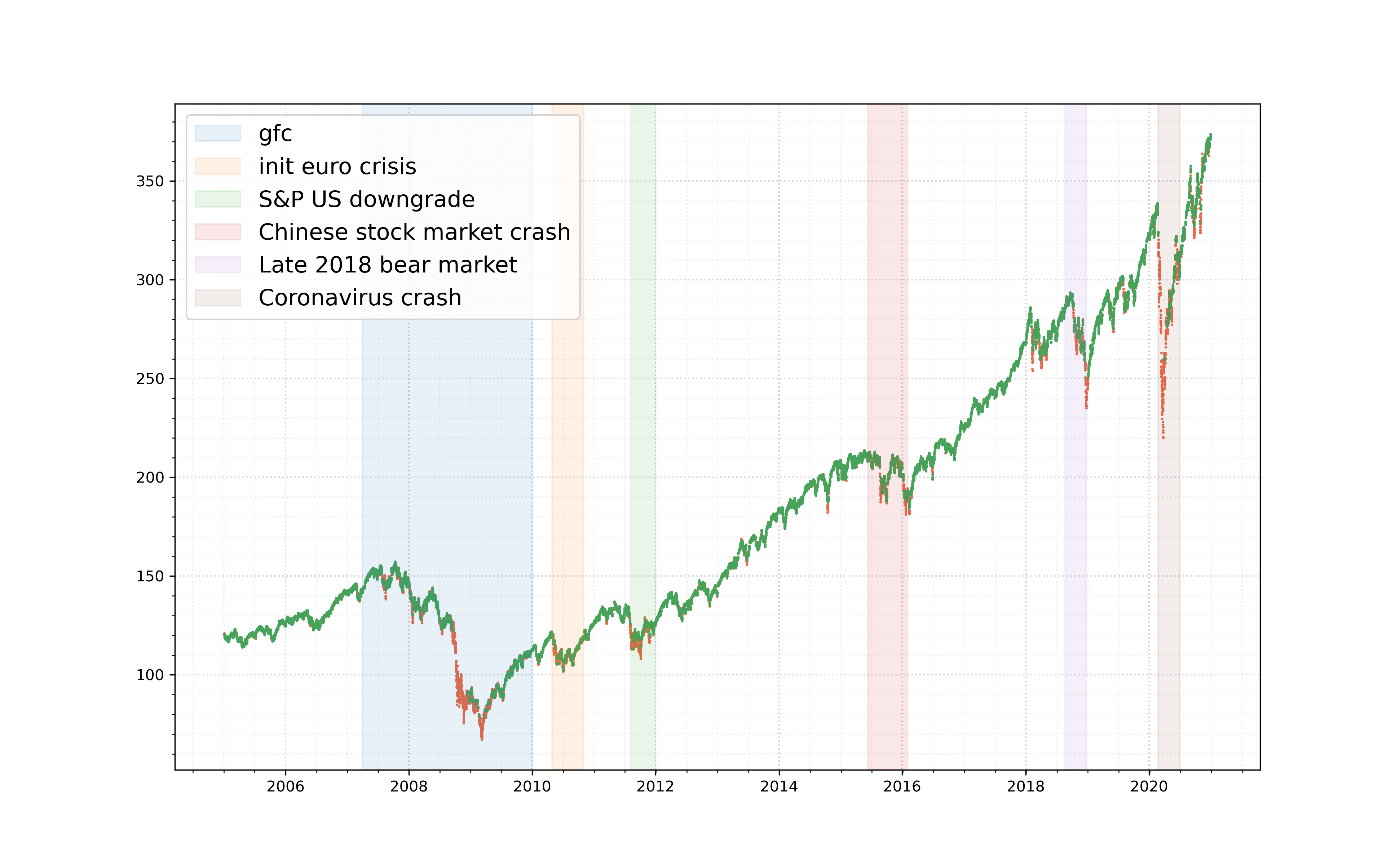}
			\caption{Hidden Markov model.}
			\label{fig:hmmhistorical}
		\end{subfigure}
		\caption{Historical cluster colouring on SPY price path.}
		\label{fig:spyhistorical}
	\end{figure}
	
	Figures \ref{fig:momentshistorical} and \ref{fig:wassersteinhistorical} show that both algorithms are able to separately classify periods of returns associated to the global financial crisis and the more recent market instability due to the coronavirus pandemic. However, only the WK-means algorithm is able to distinguish between more subtle periods of stock market volatility: the beginning of the Eurozone/Greek debt crisis in 2010, the S\&P US credit-rating downgrade in 2011, and the 2015/16 Chinese stock market crash were tagged as periods of regime change only by the WK-means algorithm. Note that we can include an example of a Hidden Markov model run with this plot, in Figure \ref{fig:hmmhistorical}. We note that the results from this approach seem to sit somewhere between the Wasserstein and moment algorithms.

	\subsubsection{Validation methods}\label{subsubsec:validation}
	In order to compare clusterings obtained from both algorithms, we use the marginal MMD test introduced in Section \ref{subsec:MMD}. The two methods of evaluation we consider are applied both between and within clusters $\mathcal{C} = \{\mathcal{C}_l\}_{l=1,2}$. Moreover, both methods of evaluation are similar, in that they involve bootstrapping the distribution of $\mathrm{MMD}^2_b$ between two sets of samples.
	
	More generally, the definition of an optimal clustering over a set of data $X$ is not well-defined, and in the case of financial data, this is certainly true. Heuristically, we would like individual clusters to contain objects that are similar to each other whilst being distinct from objects in other clusters. We note that there do already exist several indexes used to evaluate the result of a given $k$-means clustering, which we recall here.
	\begin{definition}[Davies-Bouldin index, \cite{4766909}]\label{def:daviesbouldinindex}
		Suppose $(V, \norm{\cdot}_V)$ is a normed vector space. Let $\left\{\left(\overline{\mathsf{x}}_l, \mathcal{C}_l\right)\right\}_{l=1}^k$ be $k$ centroids and clusters over $X \in \mathcal{S}(\mathcal{X})$, obtained by applying the $k$-means algorithm characterised by the functions
		\begin{align*}
			\varphi: \mathcal{X}&\to V, \\
			a: 2^V &\to V, \text{ and} \\
			l: V^K \times V^k &\to [0, +\infty).
		\end{align*}
		Suppose $d: V \times V \to [0, +\infty)$ is the metric induced by the norm on $V$. Let
		\begin{equation*}
			d_l = \frac{1}{|\mathcal{C}_l|}\sum_{\mathsf{x} \in \mathcal{C}_l} d(\mathsf{x}, \overline{\mathsf{x}}_l)
		\end{equation*}
		be the average distance of cluster elements $\mathsf{x} \in \mathcal{C}_l$ to the central element $\overline{\mathsf{x}}_l$ for $l=1,\dots,k$. Then, the \emph{Davies-Bouldin index} is given by
		\begin{equation}\label{eqn:daviesbouldinindex}
			DB\left(\left\{\left(\overline{\mathsf{x}}_l, \mathcal{C}_l\right)\right\}_{l=1}^k \right) = \frac{1}{k}\sum_{i=1}^k \max_{j\ne i} \frac{d_i + d_j}{d(\overline{\mathsf{x}}_i, \overline{\mathsf{x}}_j)}.
		\end{equation}
		Lower values of (\ref{eqn:daviesbouldinindex}) are indicative of a better clustering. 
	\end{definition}
	\begin{definition}[Dunn index, \cite{doi:10.1080/01969727408546059}]\label{def:dunnindex}
		With the same notation as Definition \ref{def:daviesbouldinindex}, define
		\begin{equation*}
			\underline{d}_{ij} = \min_{\mathsf{x} \in \mathcal{C}_i, \mathsf{y} \in \mathcal{C}_j} d(\mathsf{x}, \mathsf{y})
		\end{equation*}
		to be the smallest distance between elements of each cluster. Also define
		\begin{equation*}
			\overline{d}_{l} = \max_{\mathsf{x}, \mathsf{y} \in \mathcal{C}_l} d(\mathsf{x}, \mathsf{y})
		\end{equation*}
		to be the largest intra-cluster distance between all clusters $\{\mathcal{C}_l\}_{1\le l \le k}$. Then, the \emph{Dunn index} is given by
		\begin{equation}\label{eqn:dunnindex}
			D\left(\left\{\left(\overline{\mathsf{x}}_l, \mathcal{C}_l\right)\right\}_{l=1}^k \right) = \frac{\min_{1\le i, j\le k}\underline{d}_{ij}}{\max_{1\le l \le k}\overline{d}_l}.
		\end{equation}
		Larger values of (\ref{def:dunnindex}) are indicative of a better clustering.
	\end{definition}
	\begin{definition}[Silhouette coefficient, \cite{ROUSSEEUW198753}]\label{def:silhouette}
		With the same notation as Definition \ref{def:daviesbouldinindex}, define 
		\begin{equation*}
			b_i = \min_{i \ne j} \frac{1}{|\mathcal{C}_j|}\sum_{\mathsf{y} \in \mathcal{C}_j}d(\mathsf{x}_i, \mathsf{y}),
		\end{equation*}
		and
		\begin{equation*}
			a_i = \frac{1}{|\mathcal{C}_i|}\sum_{\mathsf{y} \in \mathcal{C}_i} d(\mathsf{x}_i, \mathsf{y}).
		\end{equation*}
		for any $\mathsf{x}_i \in \mathcal{C}_l$, $l=1,\dots,k$. Then, the \emph{Silhouette coefficient} of the point $\mathsf{x}_i$ is given by
		\begin{equation}\label{eqn:silhouette}
			S(i) = \frac{b_i - a_i}{\max(a_i, b_i)}.
		\end{equation}		
		From (\ref{eqn:silhouette}), we can see that $-1 \le S(i) \le 1$. Higher values of $S(i)$ mean that the point $\mathsf{x}_i$ was appropriately allocated to cluster $\mathcal{C}_l$. 
	\end{definition}
	\begin{remark}\label{rmk:avergaesilhouettecoefficient}
		It is often computationally expensive to calculate the (\ref{eqn:silhouette}) for every point $\mathsf{x}$. Thus, we often use an estimate from fewer samples. 
		
		For $0 < \alpha \le 1$, define $\lambda_l = \lfloor \alpha \, |\mathcal{C}_l| \rfloor$ and let $(n_k^l)_{k=1}^{\lambda_l}$ be an increasing sub-sequence of $\{1, \dots, |\mathcal{C}_l|\}$, for $l= 1,\dots,k$. Then, the $\alpha$-\emph{average Silhouette coefficient} $\overline{S}_\alpha$ is given by
		\begin{equation}\label{eqn:averagesilhouette}
			\overline{S}_\alpha = \frac{1}{k}\sum_{l=1}^k \left(\frac{1}{\lambda_l}\sum_{k=1}^{\lambda_l} S(n_k^l) \right).
		\end{equation}
	\end{remark}
	
	The indexes from Definitions \ref{def:daviesbouldinindex}, \ref{def:dunnindex} and \ref{def:silhouette} are often used to evaluate clusters derived from a standard $k$-means algorithm for different values of $k$. We will see that using them to compare clusterings between the MK- and WK-means approaches (for the same value of $k$) does not capture how appropriate clusterings are in reference to the MRCP. Firstly, such indexes are not agnostic to the choice of $(V, \norm{\cdot}_V)$ and are thus not comparable between algorithms. Secondly, a more appropriate validation method for the MRCP would be between regimes $\mu \in \mathcal{S}\left(\mathcal{P}_p(\real)\right)$, as opposed to elements of the clustering space $\varphi(\mu) \in V$. Thus, as an integrable probability metric, the MMD is a more suitable choice to be used to evaluate the appropriateness of either clustering algorithm. Nevertheless we will report the values of these metrics for completeness.
	
	For our between-cluster evaluation, we proceed as follows. Given sets $\mathcal{C}_1, \mathcal{C}_2$ obtained via either the moment- or WK-means clustering algorithm, draw $n \in \mathbb{N}$ pairwise samples $(\mu_i, \nu_i) \in \mathcal{C}_1\times \mathcal{C}_2$ for $i=1,\dots,n$. We represent each empirical measure $\mu_i, \nu_i \in \mathcal{P}_p(\real)$ by its corresponding vector of log-returns $\mathsf{x}_i, \mathsf{y}_i \in \mathcal{S}(\real)$. We then evaluate the test statistic (\ref{eqn:biasedmmd}) where we choose $k: \real^d \times \real^d \to [0, +\infty)$ to be the Gaussian kernel (\ref{eqn:gaussiankernel}) with $\sigma = 0.1$. We then compare the associated distribution of the MMD between the two histograms generated from the moment- and WK-means methods by reporting the similarity score from Definition \ref{def:selfsimilarity}.
	
	Within-cluster evaluation is performed much in the same way as the between-cluster case: for either algorithm, and for each cluster $\mathcal{C}_l$, $l=1,2$ we draw $n\in \mathbb{N}$ pairwise samples $(\mu^1_i, \mu^2_i) \in \mathcal{C}_l \times \mathcal{C}_l$ and evaluate the biased MMD (\ref{eqn:biasedmmd}). We report the similarity score associated to the empirical distribution of each within-cluster MMD and plot the resulting histograms.

	\subsubsection{Cluster validation via the marginal MMD}\label{subsubsec:mmdvalidation}
	
	Recall from Section \ref{sec:intro} that we stated that a clustering algorithm was successful if the \emph{self-similarity} and \emph{distinctness} of derived clusters were appropriately traded off against each other. In this section, we give the scores of each clustering algorithm on the SPY data with references to the indexes introduced in Definitions \ref{def:daviesbouldinindex}, \ref{def:dunnindex}, and \ref{def:silhouette}. In particular, we report the average silhouette coefficient $\overline{S}_\alpha$ with $\alpha = 0.2$. Results are given in Table \ref{tab:indexes}.
	
	\begin{table}	
		\begin{center}
			\begin{tabular}{cccc}
				\toprule
				\textbf{Algorithm} & Davies-Bouldin & Dunn & $\overline{S}_\alpha$ \\ \midrule
				Wasserstein & $1.1075$ & $7.3 \times 10^{-3}$  & $0.5093$ \\ \addlinespace
				Moment & $0.8604$ & $9.2\times 10^{-3}$ & $0.8008$ \\ \midrule
			\end{tabular}
		\end{center}
		\caption{Scores for MK- and WK-means algorithms using traditional $k$-means index evaluation methods, typical run.}
		\label{tab:indexes}
	\end{table}
	
	As noted in Section \ref{subsubsec:validation}, scores associated to the first two indexes are not invariant under the choice of $(V, \norm{\cdot}_V)$ and thus do not represent a like-for-like comparison. Yet we note that the average Silhouette coefficient $\overline{S}_\alpha$ remains higher for the MK-means method than the WK-means, implying that (under the more traditional method of cluster validation) regimes clustered via the former belong to more appropriate clusters than the latter. 
	
	As outlined in Section \ref{subsubsec:validation}, we applied our within- and between-cluster validation via the marginal MMD from Definition \ref{def:maximummeandiscrepancy} by sampling $n=1000000$ times from each cluster $\mathcal{C}_1, \mathcal{C}_2$ obtained from either method, and calculating the biased MMD (\ref{eqn:biased-mmd-general}). We order samples from clusters in ascending order to ensure like-for-like comparison between sample elements. Figure \ref{fig:between-cluster-mmd} shows two empirical distributions of the biased MMD between elements in the two clusters formed from the WK- and MK-means method.
	
	\begin{figure}[h!]
		\centering
		\includegraphics[width=0.6\textwidth]{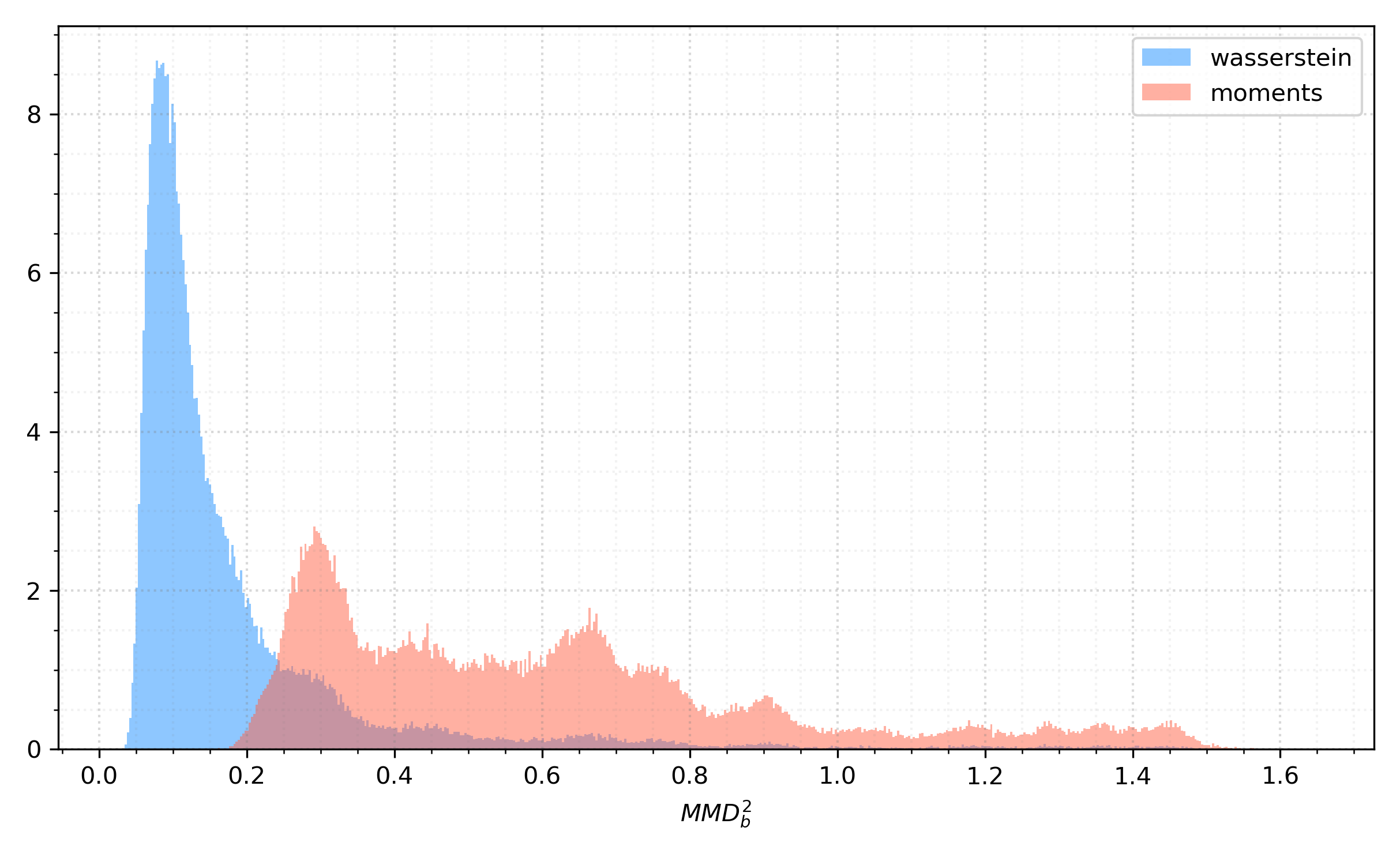}
		\caption{Histograms of between-cluster MMD approximation, Wasserstein vs moments method.}
		\label{fig:between-cluster-mmd}
	\end{figure}
	
	Figure \ref{fig:between-cluster-mmd}, shows that the clusters obtained via the WK-means method are significantly more similar to each other than those obtained from the MK-means method. However, one cannot then conclude that the latter method provides a better clustering result if the disparity within groups is due to one cluster being composed primarily of outlier elements. Figure \ref{fig:within-cluster-mmd} gives the empirical distribution of the within-cluster MMD for each algorithm. These histograms were derived by calculating the biased MMD test statistic via the sub-sampling technique from Definition \ref{def:selfsimilarity}, with $n=100000$.
	
	\begin{figure}[h!]
		\centering
		\includegraphics[width=0.6\textwidth]{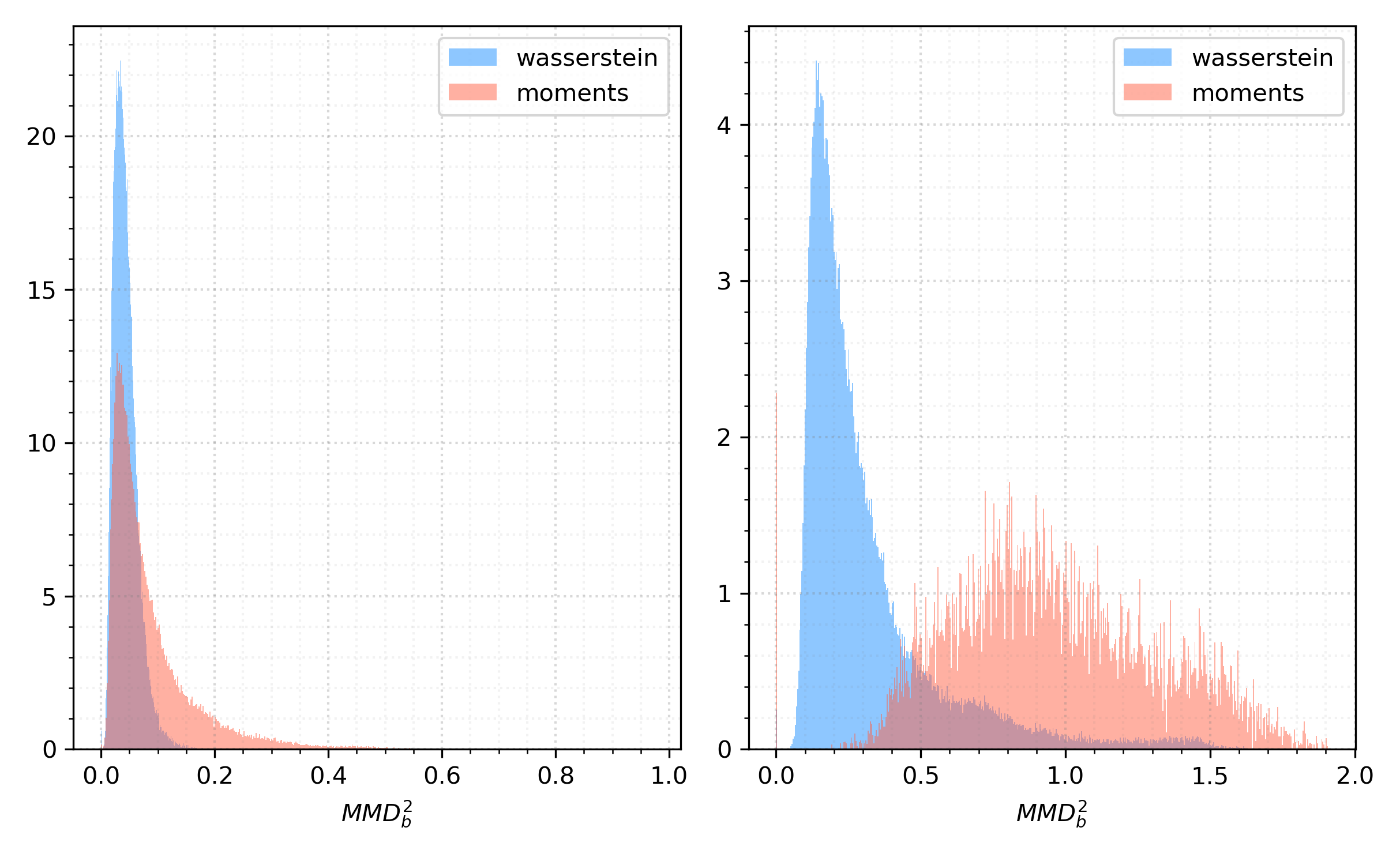}
		\caption{Histograms of within-cluster MMD approximation, Wasserstein vs moments method.}
		\label{fig:within-cluster-mmd}
	\end{figure}
	
	Both Figure \ref{fig:within-cluster-mmd} and the self-similarity scores in Table \ref{tab:selfsim} show that the clusters obtained via the WK-means algorithm are significantly more self-similar than those obtained from the MK-means algorithm. 
	
	\begin{table}	
		\begin{center}
			\begin{tabular}{ccc}
				\toprule
				\textbf{Algorithm} & $\mathcal{C}_1$ & $\mathcal{C}_2$ \\ \midrule
				Wasserstein & $0.0395$ & $0.2304$  \\ \addlinespace
				Moment & $0.0631$ & $1.1961$ \\ \midrule
			\end{tabular}
		\end{center}
		\caption{Self-similarity scores, WK- and MK-means algorithms.}
		\label{tab:selfsim}
	\end{table}
		
	\subsection{Validation on synthetic data}\label{subsec:syndata}
	
	Evaluating a given clustering algorithm on real market data is difficult for multiple reasons. One is that it is not possible to infer the underlying probabilistic structure associated to the stream of log-returns $r^S$ that a clustering algorithm is run over, and thus one cannot say with any certainty what constitutes a ``correct'' clustering. A corollary to this is that it is impossible to know exactly at what point a regime change occurs when studying real market data.
	
	Therefore, we evaluated both clustering algorithms on synthetic market data, where we specify beforehand at what times regime changes occur. Because we knew both the underlying probabilistic structure and the regime change periods \emph{a priori}, we could further evaluate both how accurately either algorithm is classifying sequences of returns into regimes, and how closely the centroids $\{\overline{\mu}_l\}_{l=1,2}$ of each cluster correspond to the true distributions $\{\mesp_{l}\}_{l=1,2}$ associated to the synthetic data. 
	
	The methodology is as follows. For a given time interval $[0, T]$ with $T \in \mathbb{N}$, we define a mesh so that each time increment roughly represents one market hour. That is, with $n:= 252\times7$, we set
	\begin{equation*}
		\Delta = \left\{\left[\dfrac{i-1}{n}, \dfrac{i}{n}\right] : i=1,2,\dots, nT \right\}.
	\end{equation*}
	Next, we define the number of regime changes $r \in \mathbb{N}$ we wish to observe. We specify their starting points and lengths by $(s_i, l_i) \in \mathbb{N}\times \mathbb{N}$ for $i=1,\dots, r$, with
	\begin{equation*}
		0\le s_0 < s_r + l_r \le nT,
	\end{equation*}
	and
	\begin{equation*}
		s_i + l_i + 2 < s_{i+1}, \qquad \text{for }i=1,\dots,r-1.
	\end{equation*}
	Each $l_i$ can be a constant or a random variable. We thus obtain the set of disjoint intervals 
	\begin{equation}\label{eqn:regimechanges}
		R = \{[s_i, s_i + l_i]: i=1,\dots,r \}
	\end{equation}
	and their associated complements $N=\Delta \setminus R$ which partition the interval $\Delta$ into two sets. Intervals in $R$ will correspond to times where we observe a regime change in our synthetic data, which will start at $s_i$ and end at $s_i+l_i$ for $i=1,\dots,r$.
	
	Once we have run a classification algorithm over a synthetic price path, we consider three measures of accuracy: total accuracy, accuracy during the standard regime (regime-off) and accuracy during the regime change (regime-on). This is calculated in the following way: for $i=1,\dots,N-1$, associate to each log-return $r^S_i$ the empirical measures $M_i = \{\mu_{j(i)}, \dots, \mu_{j(i)+v}\}$ it was a member of. With our chosen hyperparameters and $k=2$, one has that $v \in [1, 5]$ and $j \in \mathbb{N}$ is the first measure that $r^S_i$ is a member of. Note that if the overlap hyperparameter $h_2 = 0$ then $v=1$. We then calculate which cluster each $\mu \in M_i$ is associated to, which gives us our predicted labels $\overline{y}^i = \{\overline{k}_1, \dots, \overline{k}_v\}$. We then aggregate these labels into the row vector
	\begin{equation*}
		\overline{Y}^i = \left(\sum_{j=1}^v \chi_{\{x=l\}}(\overline{k}_j) \right)_{l=1}^k \qquad \text{for }i=1,\dots,N-1,
	\end{equation*}
	where $k = 2$ is the number of clusters. In what follows we assume the assignment $\overline{k}=1$ corresponds to the standard regime and $\overline{k}=2$ the regime change. We then have the following definitions. 
	\begin{definition}\label{def:accuracyscores}
		With the notation above, for a given vector of log-returns $r^S \in \mathcal{S}(\mathbb{R})$ and cluster assignments $\mathcal{C} = \{\mathcal{C}_l\}_{l=1}^k$, the \emph{regime-off accuracy score (ROFS)}  is given by 
		\begin{equation}
			\mathrm{ROFS}(r^S, \mathcal{C}) = \frac{\sum_{r^S_i \in N}\overline{Y}^i_1}{\sum_{r^S_i \in N}\sum_{k=1,2}\overline{Y}^i_k}.
		\end{equation}
		Similarly, the \emph{regime-on accuracy score (RONS)} is given by
		\begin{equation}
			\mathrm{RONS}(r^S, \mathcal{C}) = \frac{\sum_{r^S_i \in R}\overline{Y}^i_2}{\sum_{r^S_i \in R}\sum_{k=1,2}\overline{Y}^i_k}.
		\end{equation}
		Finally, \emph{total accuracy (TA)} is given by
		\begin{equation}
			\mathrm{TA}(r^S, \mathcal{C}) = \frac{\sum_{r^S_i \in N}\overline{Y}^i_1 + \sum_{r^S_i \in R}\overline{Y}^i_2}{\sum_{i=1}^{N-1}\sum_{k=1,2}\overline{Y}^i_k}.
		\end{equation}
	\end{definition}

	\subsubsection{Geometric Brownian motion}\label{subsubsec:gbm}
	
	In this section, we discuss how we tested both clustering algorithms on synthetic stock data which was modelled as a geometric Brownian motion. Let $\mathcal{M}(\Theta)$ be a family of models indexed by a parameter set $\Theta \subset \real^d$. Initially, we chose $\mathcal{M}(\Theta) = \mathrm{gBm}(\mu, \sigma)$. We then specified two parameter combinations $\theta_{\mathrm{bull}} = (\mu_1, \sigma_1)$ and $\theta_{\mathrm{bear}} = (\mu_2, \sigma_2)$, corresponding to two market regimes. 
	
	We then construct a geometric Brownian motion with associated parameters $\theta_{\mathrm{bear}}$ over intervals $[s_i, s_i + l_i] \in R$ for $i=1,\dots,r$, and with parameters $\theta_{\mathrm{bull}}$ elsewhere. We then run both clustering algorithms on the synthetic data and are returned the clusters with associated centroids as output. Since
	\begin{equation}
		\ln S_t \sim \mathrm{Normal}\left((\mu-\sigma^2/2)t, \sigma^2 t\right) \qquad \text{for all }t \ge 0,
	\end{equation} 
	the true measures $\{\mesp_l\}_{l=1,2}$ are given by
	\begin{equation}\label{eqn:measures}
		\mesp_l = \mathrm{Normal}((\mu_l - \sigma_l^2/2)dt, \sigma_l^2dt) \qquad \text{for } l=1,2,
	\end{equation} 
	where $dt = 1/n$ is the mesh size. Due to the Gaussianity of the distribution of the true log-returns, it suffices to check the mean and variance of the centroid measures $\mu_l$ to gauge how close they are to the true measures $\mesp_l$ for $l=1,2$.
	
	We begin by testing on geometric Brownian motion paths with
	\begin{align*}
		\theta_{\mathrm{bull}} &= (0.02, 0.2), \qquad \text{and} \\
		\theta_{\mathrm{bear}} &= (-0.02, 0.3).
	\end{align*}
	We simulate a path over $T=20$ years with $r=10$ regime changes, and randomly chose each $s_i$ for $i=1,\dots, 10$ and fixed $l_i = 0.5\times 252\times 7$. This choice corresponds to regime changes persisting for approximately half a year. Our mesh grid is thus given by
	\begin{equation*}
		\Delta = \left\{\left[\dfrac{i-1}{1764}, \dfrac{i}{1764}\right] : i=1,2,\dots, 252\times 7\times 20 \right\}.
	\end{equation*}
	When a regime change occurs, the gBm parameters shift from $\theta_{\mathrm{bull}}$ to $\theta_{\mathrm{bear}}$. Figure \ref{fig:gbmplots} shows an example of such a gBm path, with the regime change periods highlighted in red. Figure \ref{fig:gbmlogrets} gives the log-returns associated to Figure \ref{fig:gbmpath}, again with the regime changes highlighted in red.
	
	\begin{figure}[h!]
		\centering
		\begin{subfigure}{0.5\linewidth}
			\centering
			\includegraphics[width=\textwidth]{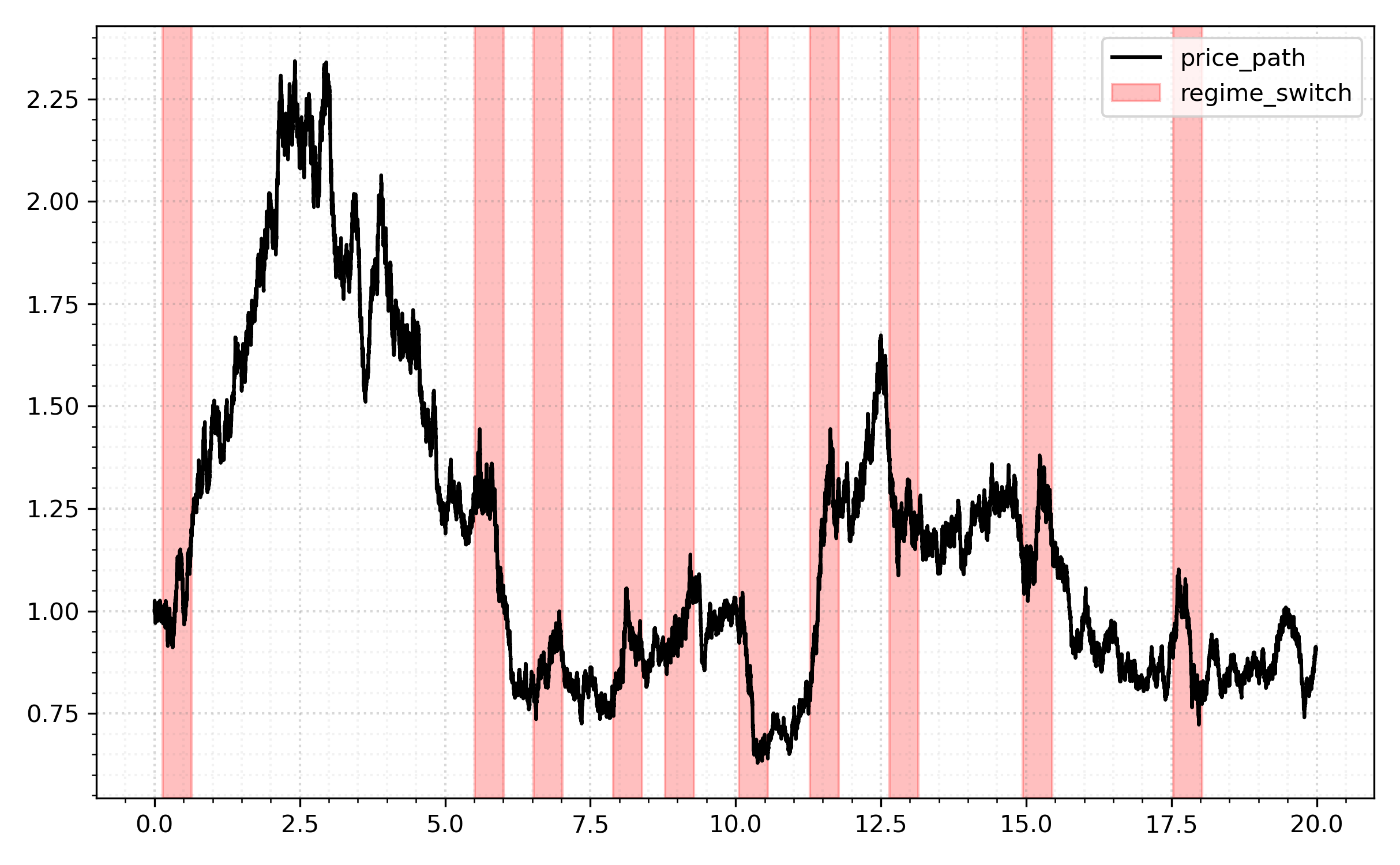}
			\caption{gBm path, regime changes highlighted.}
			\label{fig:gbmpath}
		\end{subfigure}%
		\begin{subfigure}{0.5\linewidth}
			\centering
			\includegraphics[width=\textwidth]{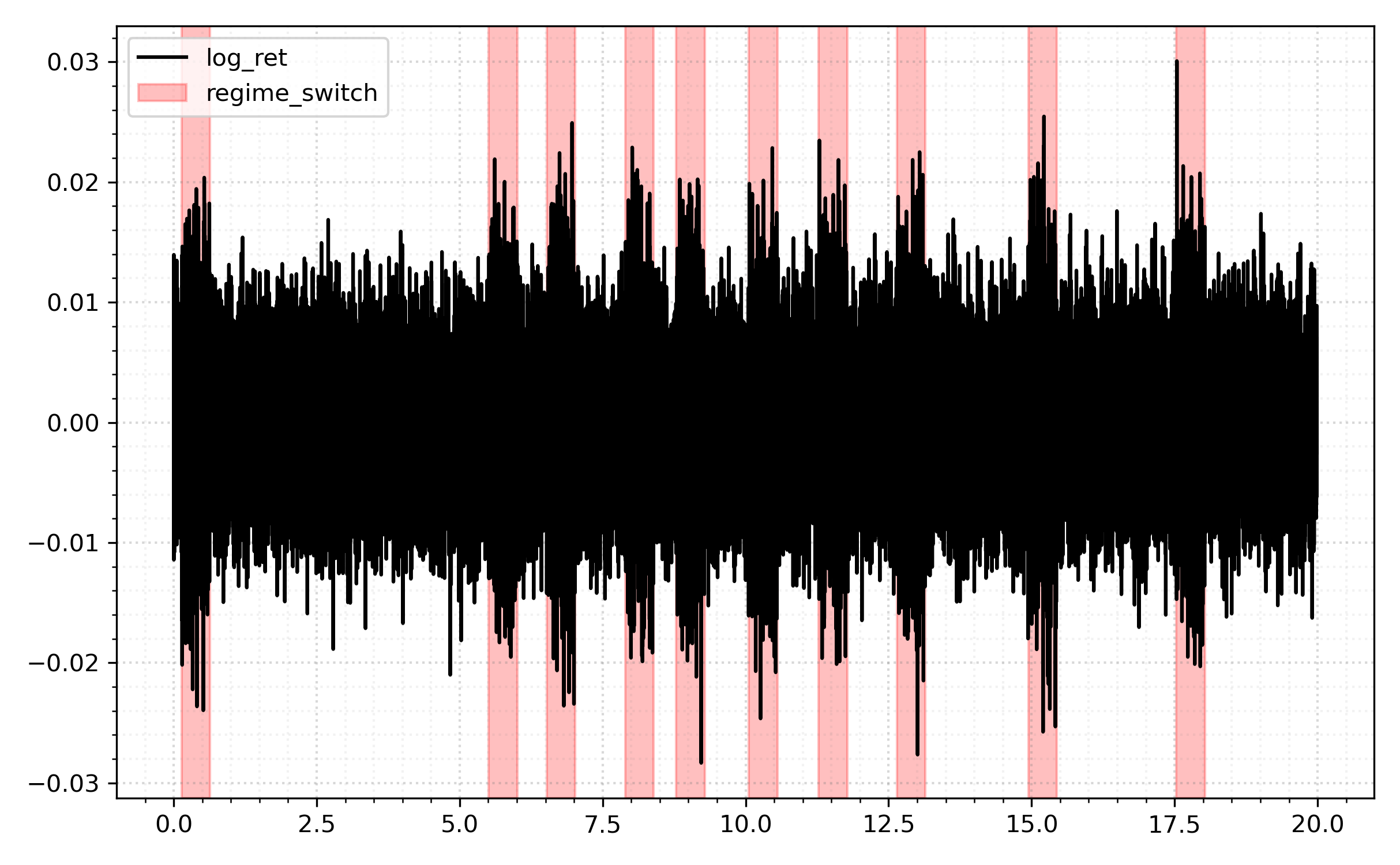}
			\caption{Associated log returns.}
			\label{fig:gbmlogrets}
		\end{subfigure}
		\caption{A synthetic geometric Brownian motion path, and associated log returns.}
		\label{fig:gbmplots}
	\end{figure}
	
	We ran all three clustering algorithms on the same simulated path data. As seen from Figure \ref{fig:meanvargbm} and Figure \ref{fig:historicalgbm}, both algorithms perform well - they are able to accurately capture both the regime changes and ends. We give a summary of the accuracy scores of each algorithm in Table \ref{table:gbmaccuracy} for a total of $n=50$ runs.
	
	\begin{table}[h]	
		\begin{center}
			\begin{tabular}{ccccc}
				\toprule
				\textbf{Algorithm} & Total & Regime-on & Regime-off & Runtime\\ \midrule
				Wasserstein & $90.60\% \pm 5.81\%$ & $\boldsymbol{87.24}\% \pm 4.11\%$  & $91.72\% \pm 6.46\%$ & $0.87s \pm 0.16s$  \\ \addlinespace
				Moment & $\boldsymbol{93.23}\% \pm 0.41\%$ & $74.83\% \pm 1.57\%$ & $\boldsymbol{99.38}\% \pm 0.1\%$ & $1.06s \pm 0.16s$\\
				\addlinespace
				HMM & $58.16\% \pm 7.11\%$ & $41.51\% \pm 7.43\%$ & $63.72\% \pm 11.94\%$ & $0.58s \pm 0.36s$\\\midrule
			\end{tabular}
		\end{center}
		\caption{Accuracy scores with $95\%$ CI, gBm synthetic path, $n=50$ runs.}
		\label{table:gbmaccuracy}
	\end{table}
	
	It is interesting to note that, even in the Gaussian case, the WK-means algorithm does a better job of picking up regime changes than the standard approach. By comparison, the HMM tends to fail to detect the changes in regime at this fixed level of difference in parameter space and thus cannot determine regime change times at this level of granularity between the two models. We provide the plots of the clustering algorithms in mean-variance space and the historical colouring plots in Figures \ref{fig:meanvargbm} and \ref{fig:historicalgbm}.
	
	\begin{figure}[h]
		\centering
		\begin{subfigure}{0.5\linewidth}
			\centering
			\includegraphics[width=\textwidth]{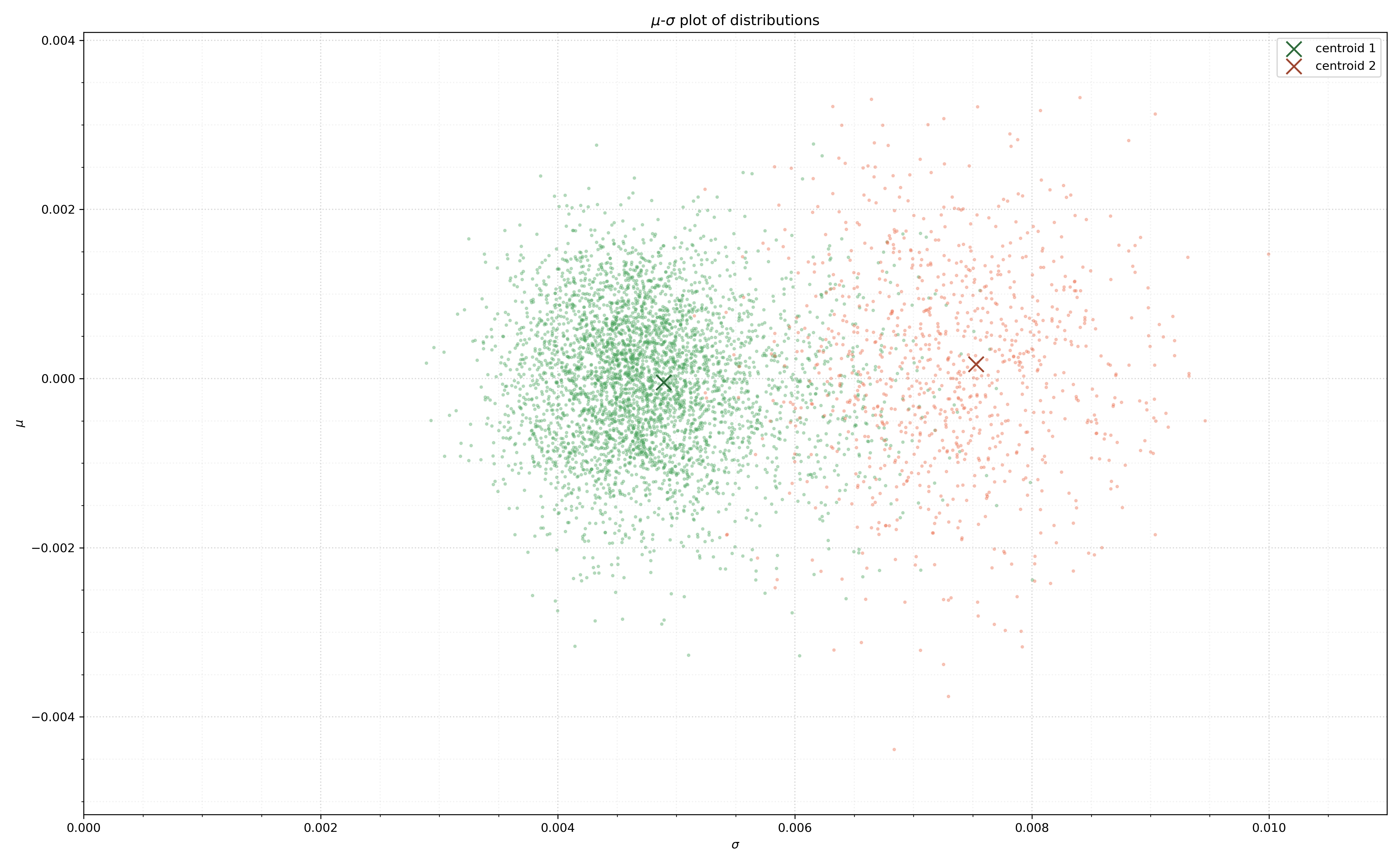}
			\caption{MK-means.}
			\label{fig:momentsgbm}
		\end{subfigure}%
		\begin{subfigure}{0.5\linewidth}
			\centering
			\includegraphics[width=\textwidth]{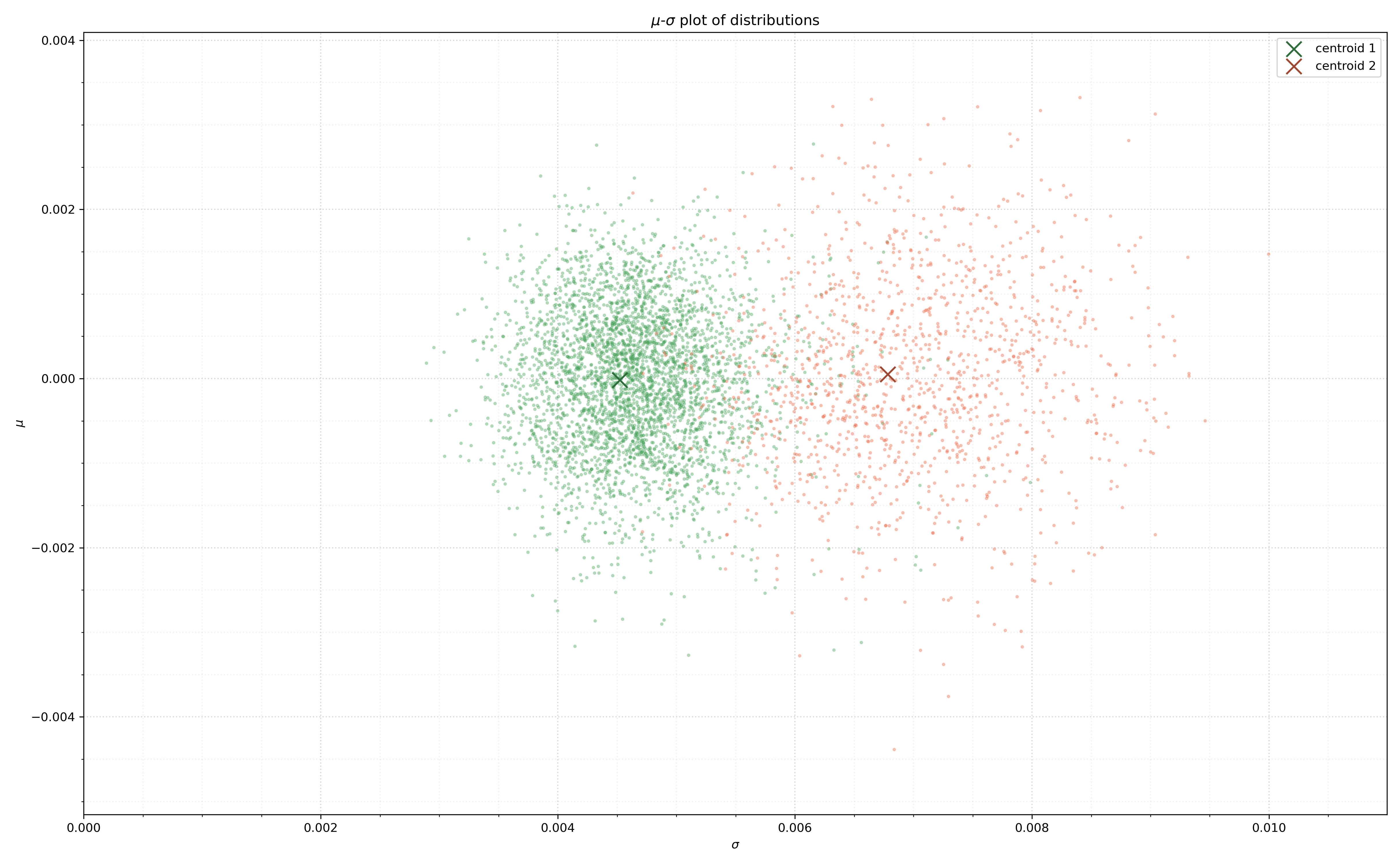}
			\caption{WK-means.}
			\label{fig:wassersteingbm}
		\end{subfigure}
		\caption{Plots of clusters in mean-variance space, synthetic gBm, example run.}
		\label{fig:meanvargbm}
	\end{figure}	
	\begin{figure}[h]
		\centering
		\begin{subfigure}{0.5\linewidth}
			\centering
			\includegraphics[width=\textwidth]{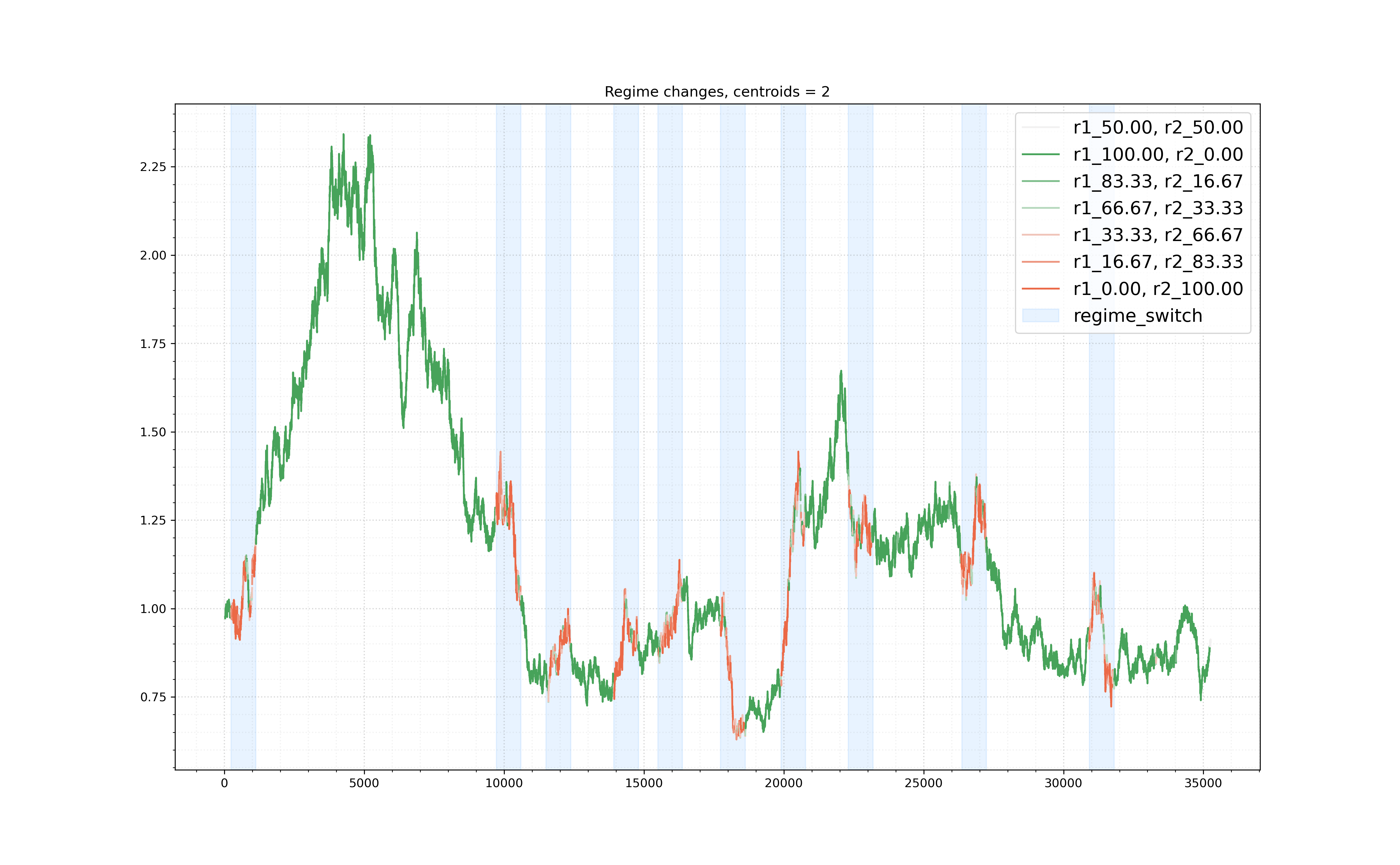}
			\caption{MK-means.}
			\label{fig:momentshistoricalgbm}
		\end{subfigure}%
		\begin{subfigure}{0.5\linewidth}
			\centering
			\includegraphics[width=\textwidth]{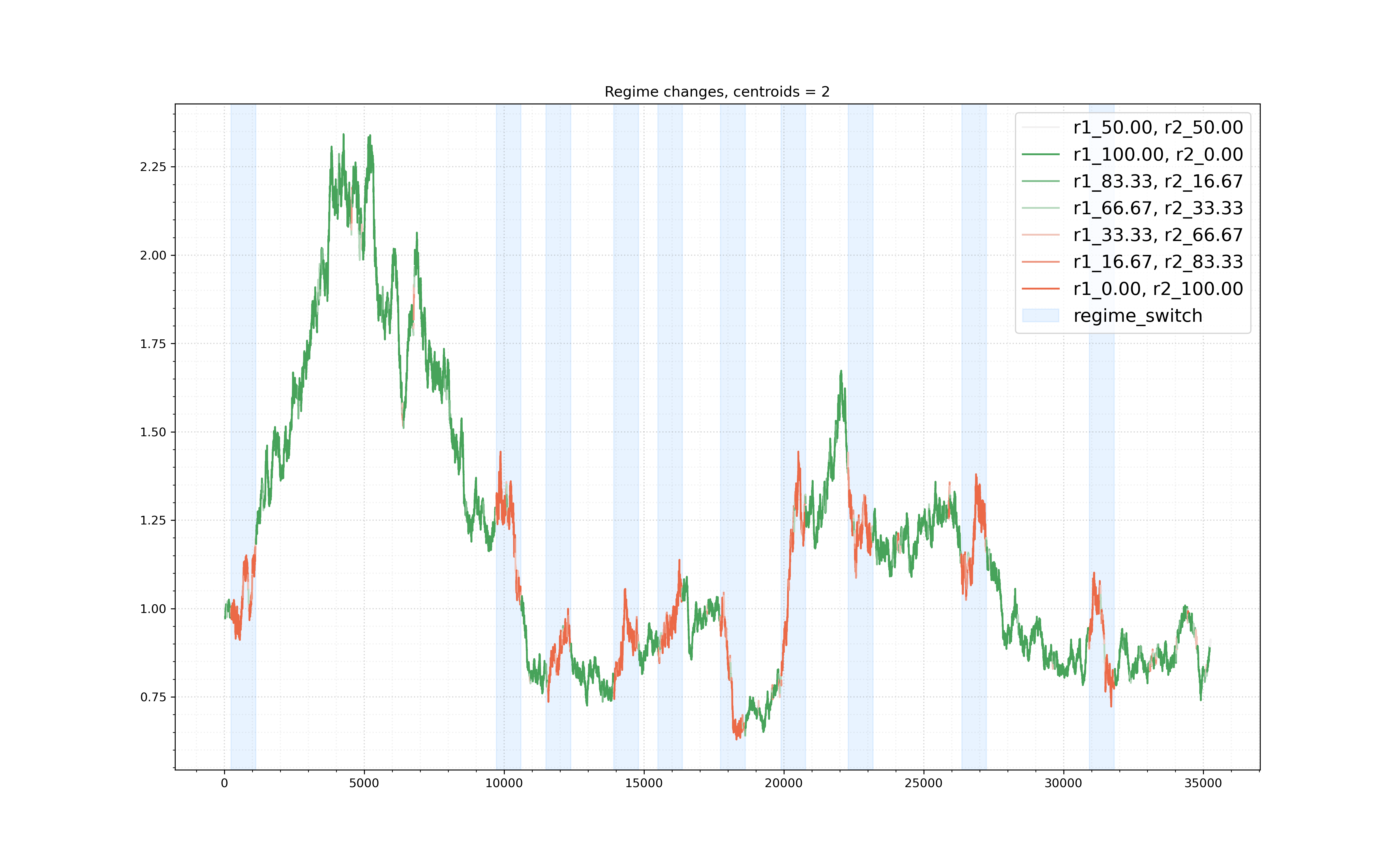}
			\caption{WK-means.}
			\label{fig:wassersteinhistoricalgbm}
		\end{subfigure}	
		\begin{subfigure}{0.5\linewidth}
			\centering
			\includegraphics[width=\textwidth]{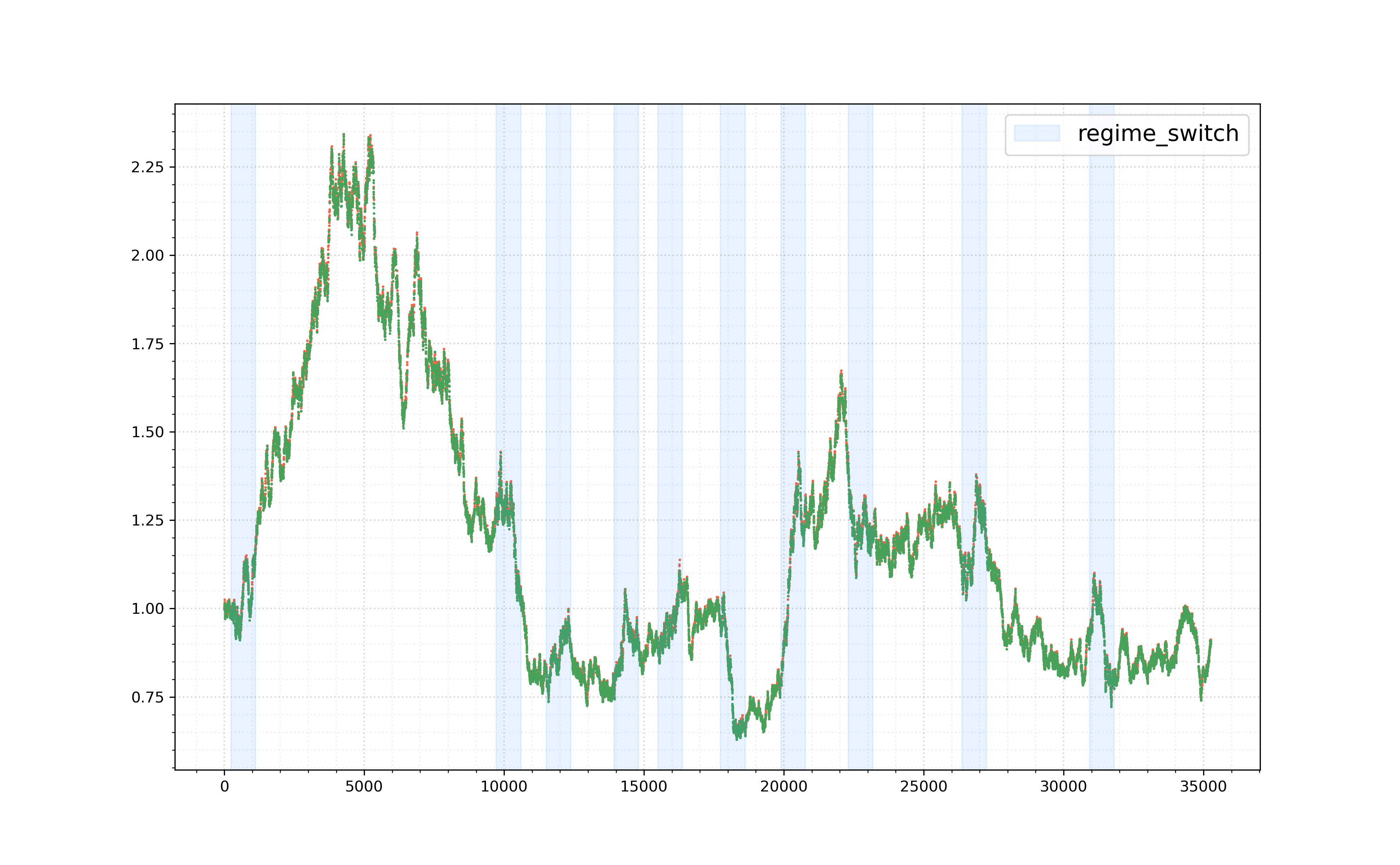}
			\caption{Hidden Markov model.}
			\label{fig:hmmhistoricalgbm}
		\end{subfigure}
		\caption{Historical cluster colouring plot, synthetic gBm, example run.}
		\label{fig:historicalgbm}
	\end{figure}
	
	We conclude this section by comparing the centroids obtained from either algorithm to the true measures. In this example, the distribution of the log-returns corresponding to either regime is given by 
	\begin{align*}
		\mesp_{\theta_\mathrm{bull}} &= \mathrm{Normal}(-1.97\times 10^{-21}, 2.27\times 10^{-05}),\qquad\text{and} \\
		\mesp_{\theta_\mathrm{bear}} &= \mathrm{Normal}(-3.68\times 10^{-05}, 5.1\times 10^{-05}).
	\end{align*}
	Since the distribution of log-returns in this model are Gaussian, we study the mean and variance of our obtained centroids and compare these to the true vales. 
	
	Figure \ref{fig:gbmapprox} summarizes our findings. In the first row, we display the histogram of the partition means $\{\ex[\mu_i]\}_{1\le i\le M}$ along with solid lines representing the true means $\ex[\mesp_{\theta_{\mathrm{bull}}}]$ and $\ex[\mesp_{\theta_{\mathrm{bear}}}]$. In Figures \ref{fig:momentmeangbm} and \ref{fig:wkmeangbm}, the dashed lines represent the bull and bear centroid means corresponding to the MK-means and WK-means algorithms respectively. In the second row, we repeat the same procedure with the variances $\{\mathrm{Var}(\mu_i)\}_{1\le i\le M}$, their true values, and in Figures \ref{fig:momentvargbm} and \ref{fig:wkvargbm} the centroid variances corresponding to the MK-means and WK-means algorithms respectively. 
	
	\begin{figure}[h!]
		\centering
		\begin{subfigure}{0.5\linewidth}
			\centering
			\includegraphics[width=\textwidth]{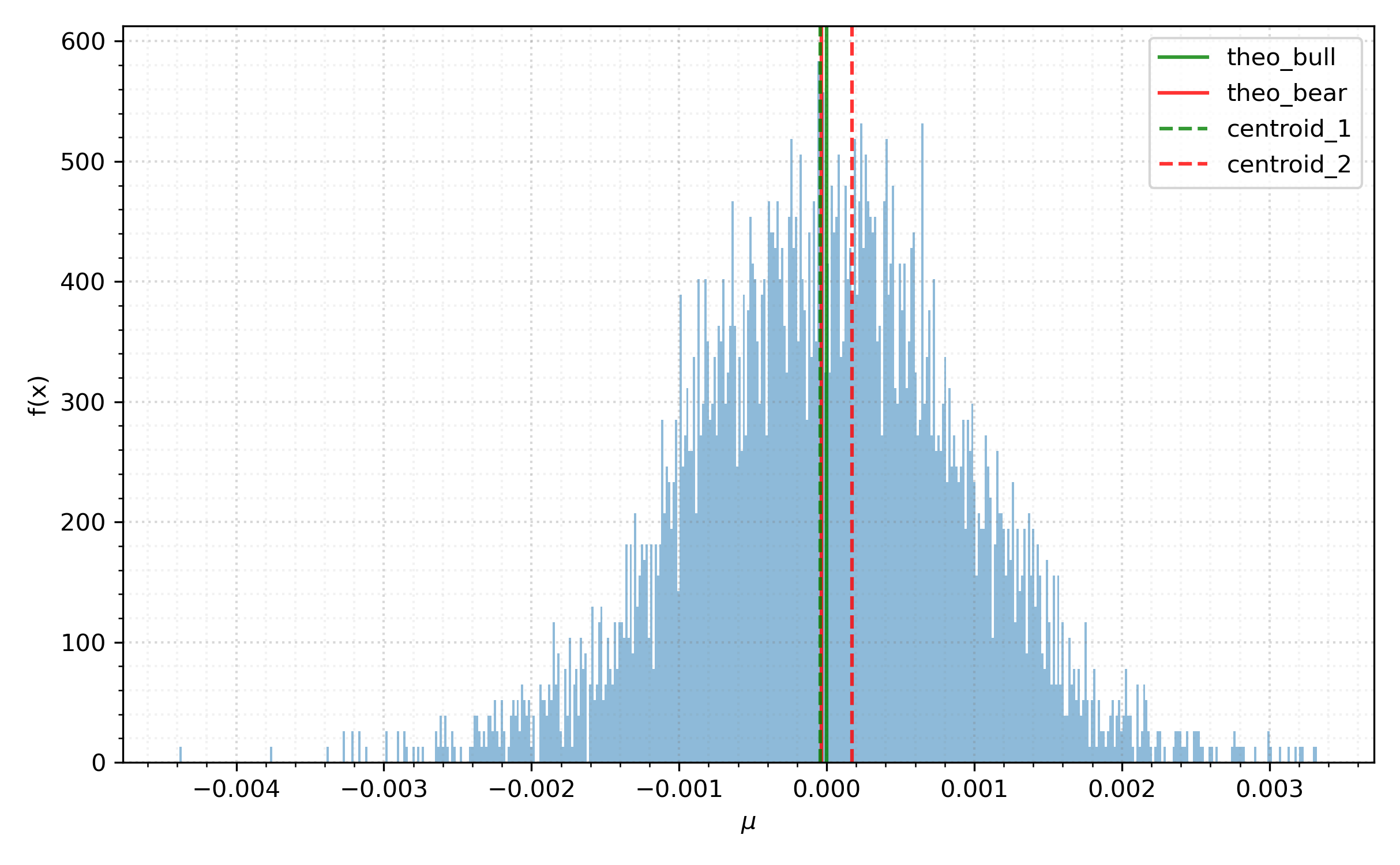}
			\caption{Distribution of $\{\ex[\mu_i]\}_{i\ge 0}$, moment.}
			\label{fig:momentmeangbm}
		\end{subfigure}%
		\begin{subfigure}{0.5\linewidth}
			\centering
			\includegraphics[width=\textwidth]{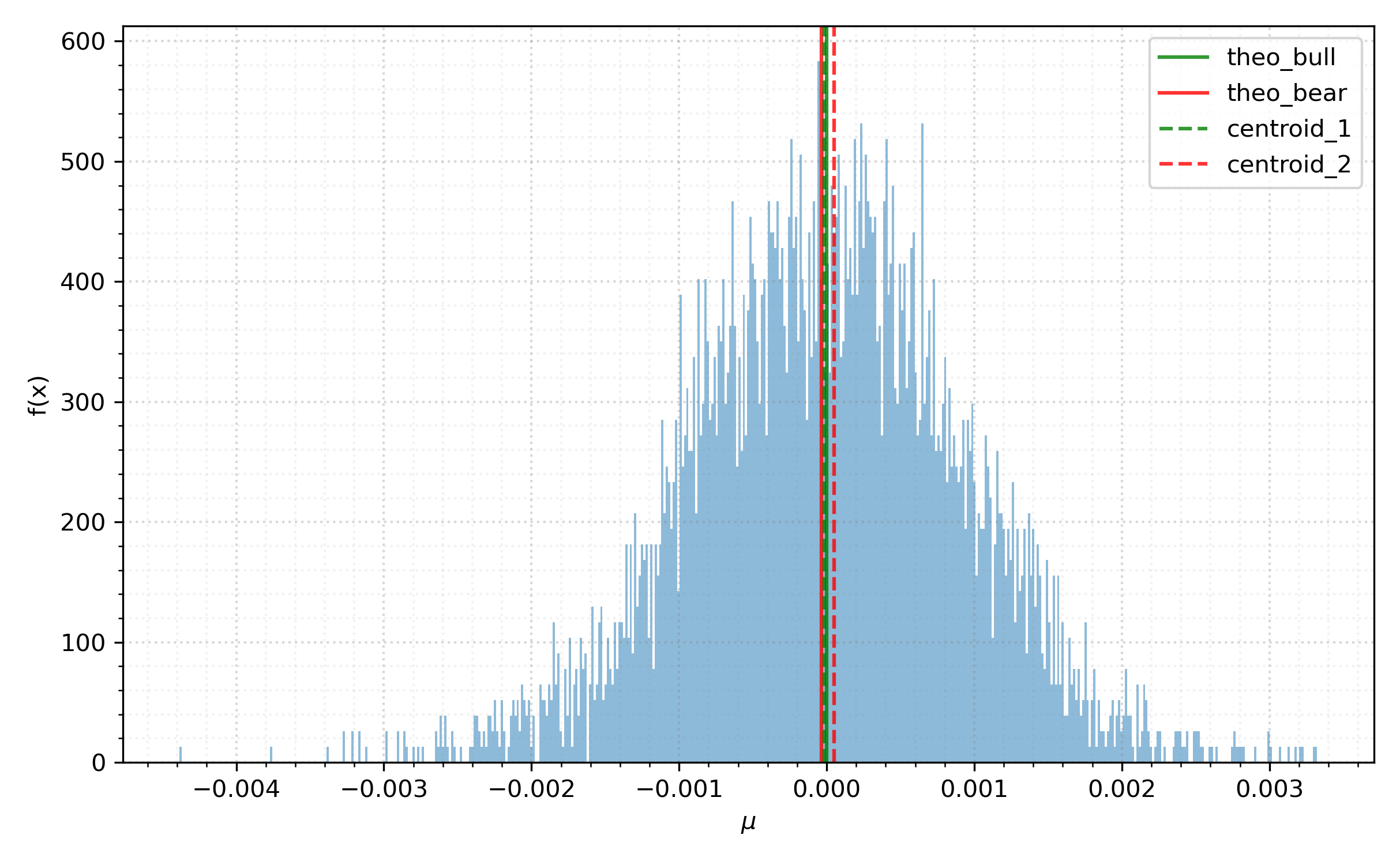}
			\caption{Distribution of $\{\ex[\mu_i]\}_{i\ge 0}$, Wasserstein.}
			\label{fig:wkmeangbm}
		\end{subfigure}
		\begin{subfigure}{0.5\linewidth}
			\centering
			\includegraphics[width=\textwidth]{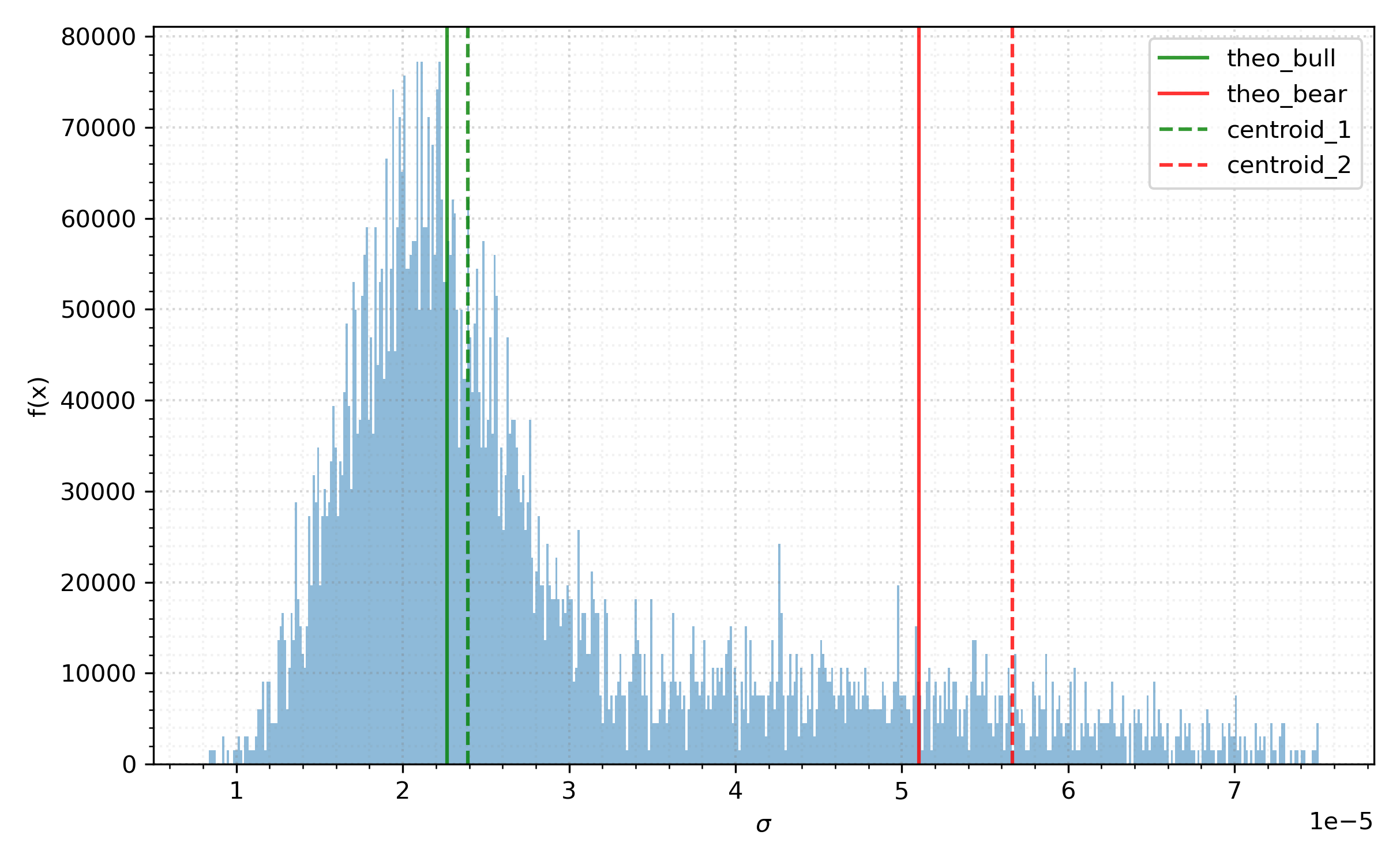}
			\caption{Distribution of $\{\mathrm{Var}(\mu_i)\}_{i\ge 0}$, moment.}
			\label{fig:momentvargbm}
		\end{subfigure}%
		\begin{subfigure}{0.5\linewidth}
			\centering
			\includegraphics[width=\textwidth]{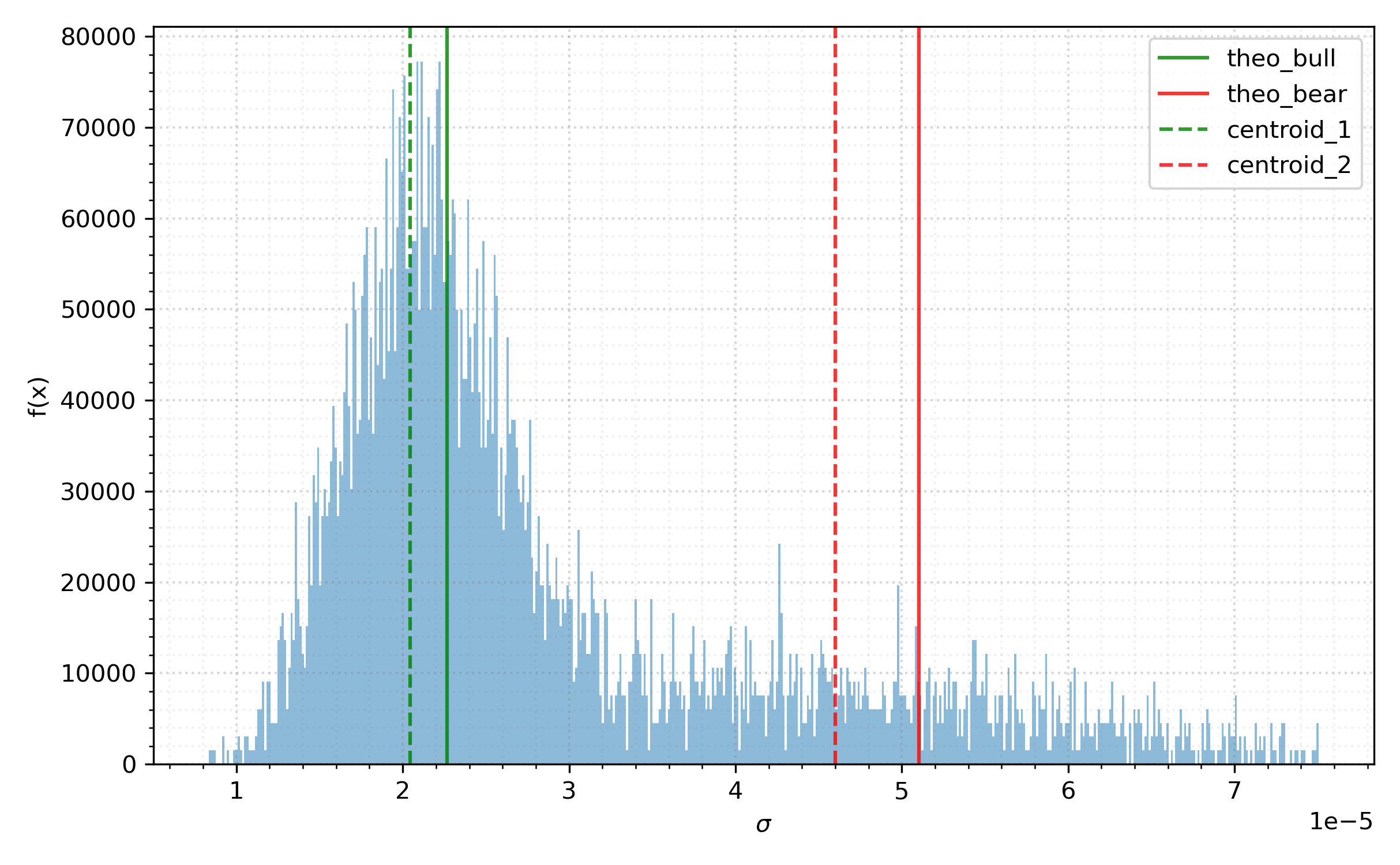}
			\caption{Distribution of $\{\mathrm{Var}(\mu_i)\}_{i\ge 0}$, Wasserstein.}
			\label{fig:wkvargbm}
		\end{subfigure}
		\caption{Approximations of mean and variance of true measures $\mesp_{i}, i =1,2$ by centroids of moments- and WK-means algorithms, gBm run.}
		\label{fig:gbmapprox}
	\end{figure}
	
	We see that the centroids derived from either algorithm perform well as estimators for the true centroids. We note that it is not altogether unsurprising that the Wasserstein algorithm does not significantly outperform the moments-based method here: since we are working under the assumption that distributions of log-returns under the ``true'' model are completely determined by their mean and variance. 
	
	\subsubsection{Merton jump diffusion}\label{subsubsec:merton}
	
	In this section, we outline a separate model used to generate synthetic data where the associated log-returns are non-Gaussian. In particular, we model stock prices by a Merton jump diffusion, which is given by the solution $S$ to the stochastic differential equation 
	\begin{equation}\label{eqn:mertondiffusion}
		dS_t = \mu S_t dt + \sigma S_t dW_t + S_{t-}dJ_t \qquad \text{for }t\ge 0, 	
	\end{equation}
	where
	\begin{equation*}
		J_t = \sum_{j=1}^{N_t}V_j-1. 
	\end{equation*}
	Here, $N_t \sim \mathrm{Po}(\lambda t)$ is a Poisson random variable, and $\ln(1+V_j) \sim \mathrm{Normal}(\gamma, \delta^2)$. Our model space $\mathcal{M}(\Theta)$ is given by
	\begin{equation}\label{eqn:modelmerton}
		\mathcal{M}(\Theta) = \mathrm{MJD}(\mu, \sigma, \lambda, \gamma, \delta) \qquad \text{for } \Theta \subset \mathbb{R}^5.
	\end{equation}
	The solution to (\ref{eqn:mertondiffusion}) is given by
	\begin{equation}\label{eqn:mertonpath}
		S_t = F(0, t)\mathcal{E}(\sigma W_t)\prod_{j=1}^{N_t}V_j,
	\end{equation}
	where $F(0, t) = S_0\exp(\mu t)$, and $\mathcal{E}: \mathbb{R}\to [0, +\infty)$ is the Dol{\'e}ans-Dade stochastic exponential. Let $R^\mathrm{M}_t = \ln(S_{t+dt})-\ln(S_t)$ be the log-return associated to a realisation of (\ref{eqn:mertonpath}) at a time $t\in[0, T]$ on a mesh with grid size $dt$. Then, \cite{synowiec} gives that
	\begin{align}
		\ex[R^\mathrm{M}_t] &= \big((\mu - \sigma^2/2) + \lambda \gamma\big)dt, \qquad \text{and } \label{eqn:mertonmean} \\
		\mathrm{Var}(R^\mathrm{M}_t) &= (\sigma^2 + \lambda(\delta^2 + \gamma^2))dt, \label{eqn:mertonvar}
	\end{align}
	and we will use these quantities to check the suitability of the centroids from either algorithm to the true measures.
	
	To test either algorithm on synthetic data as generated from (\ref{eqn:modelmerton}), we apply the same methodology as outlined in Section \ref{subsec:syndata}. That is, we define two sets of parameters $\theta_{\mathrm{bear}}, \theta_{\mathrm{bull}}$ and a partition $\Delta$ with regime changes $[s_i, s_i + l_i] \in R$ for $i=1,\dots, r$, where $R$ is given by (\ref{eqn:regimechanges}). We then run both clustering algorithms over a Merton jump diffusion with parameters $\theta_{\mathrm{bear}}$ over intervals in $R$ and parameters $\theta_{\mathrm{bull}}$ elsewhere. In regime switch dynamics are given by the parameter choices
	\begin{align*}
		\theta_{\mathrm{bull}} &= (0.05, 0.2, 5, 0.02, 0.0125), \qquad \text{and } \\
		\theta_{\mathrm{bear}} &= (-0.05, 0.4, 10, -0.04, 0.1).
	\end{align*}
	We again set $r=10$ and $l_i = 252\times 7 \times 0.5$ for $i=1,\dots, 10$. When a regime change occurs we shift the parameters of the Merton jump diffusion from $\theta_{\mathrm{bull}}$ to $\theta_{\mathrm{bear}}$, and revert them back when the regime change ends. Figure \ref{fig:mertonplots} gives an example path and the associated log-returns, with the periods of regime change highlighted in red.
	
	\begin{figure}[h!]
		\centering
		\begin{subfigure}{0.5\linewidth}
			\centering
			\includegraphics[width=\textwidth]{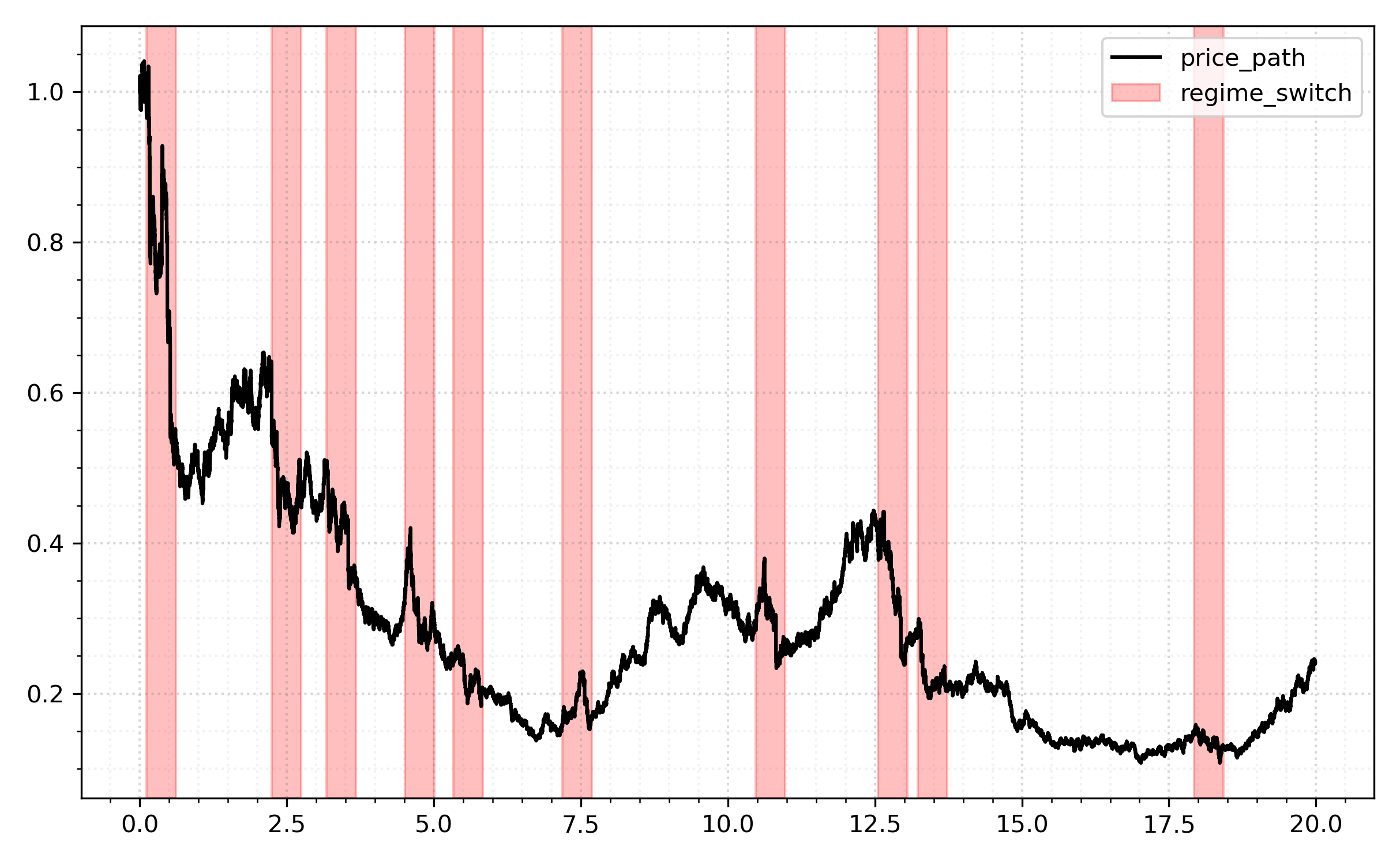}
			\caption{Merton jump diffusion path.}
			\label{fig:mertonpath}
		\end{subfigure}%
		\begin{subfigure}{0.5\linewidth}
			\centering
			\includegraphics[width=\textwidth]{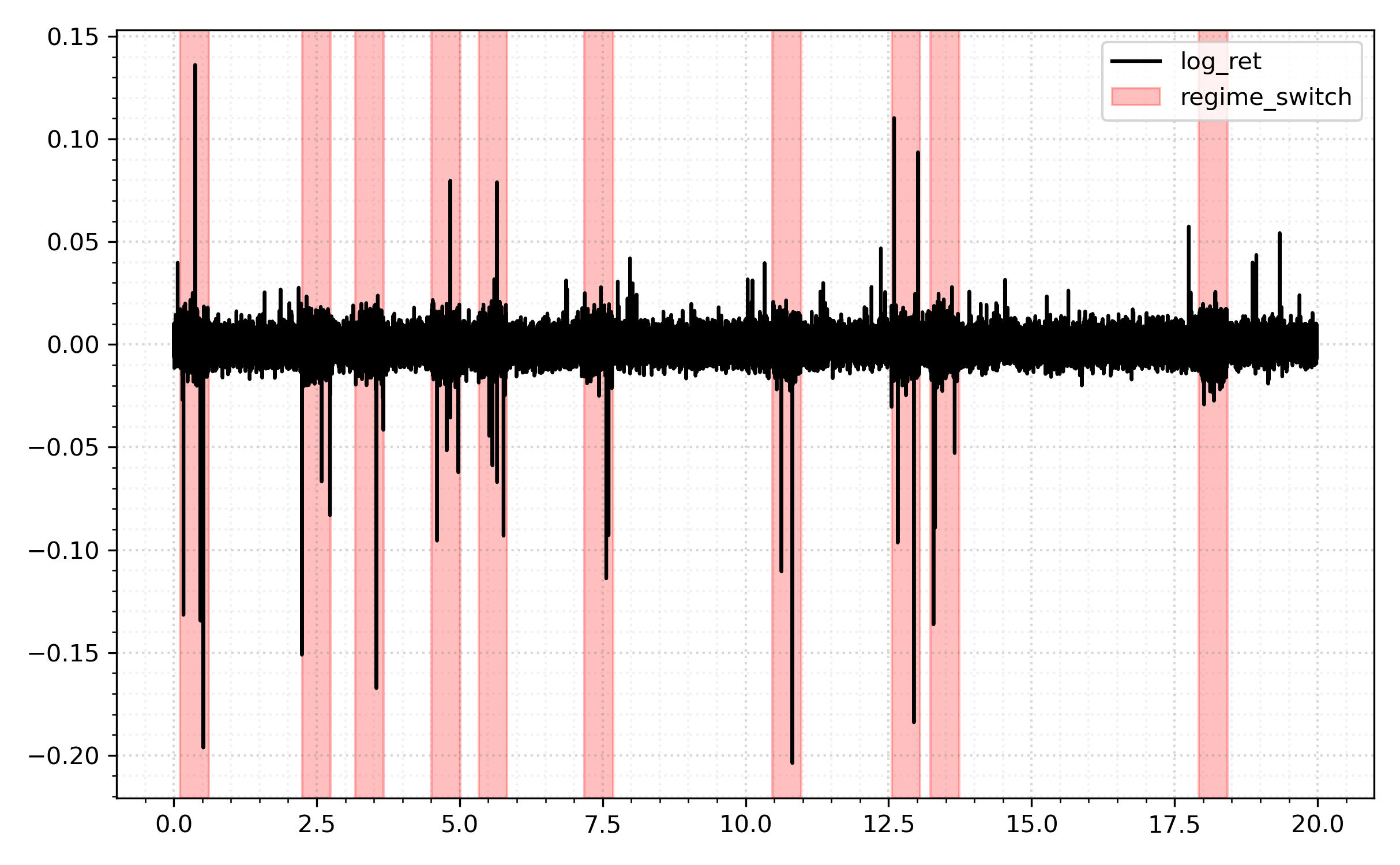}
			\caption{Log returns.}
			\label{fig:mertonlogrets}
		\end{subfigure}
		\caption{An iteration of a Merton jump diffusion path $\omega \to S^{\mathrm{J}}_t$ and the associated log-returns, regime changes highlighted.}
		\label{fig:mertonplots}
	\end{figure}
	
	For the example path presented in Figure \ref{fig:mertonpath}, we present plots from all three algorithms where applicable. Figure \ref{fig:meanvarmerton} gives the projection of the derived clusters for the moment- and WK-means algorithm. Here, one can see that the MK-means approach fails to discern between the two market regimes as it is not robust enough to adjust for outlier return series in the bear regime. By comparison, the Wasserstein approach is relatively robust to these outliers, and is able to correctly identify the two different regimes with their associated distributions. 
	
	\begin{figure}[h]
		\centering
		\begin{subfigure}{0.5\linewidth}
			\centering
			\includegraphics[width=\textwidth]{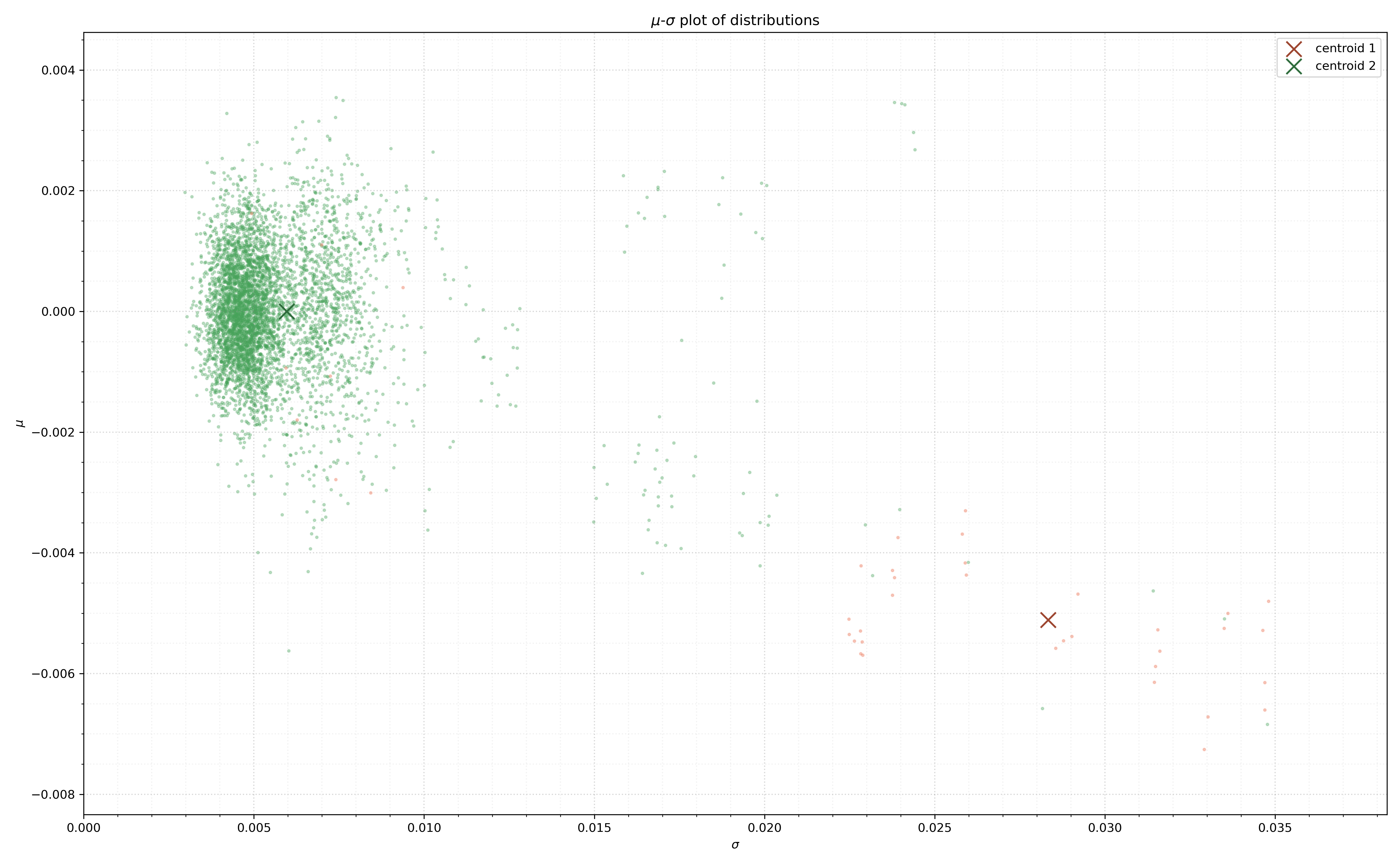}
			\caption{MK-means.}
			\label{fig:momentmertonmeanvar}
		\end{subfigure}%
		\begin{subfigure}{0.5\linewidth}
			\centering
			\includegraphics[width=\textwidth]{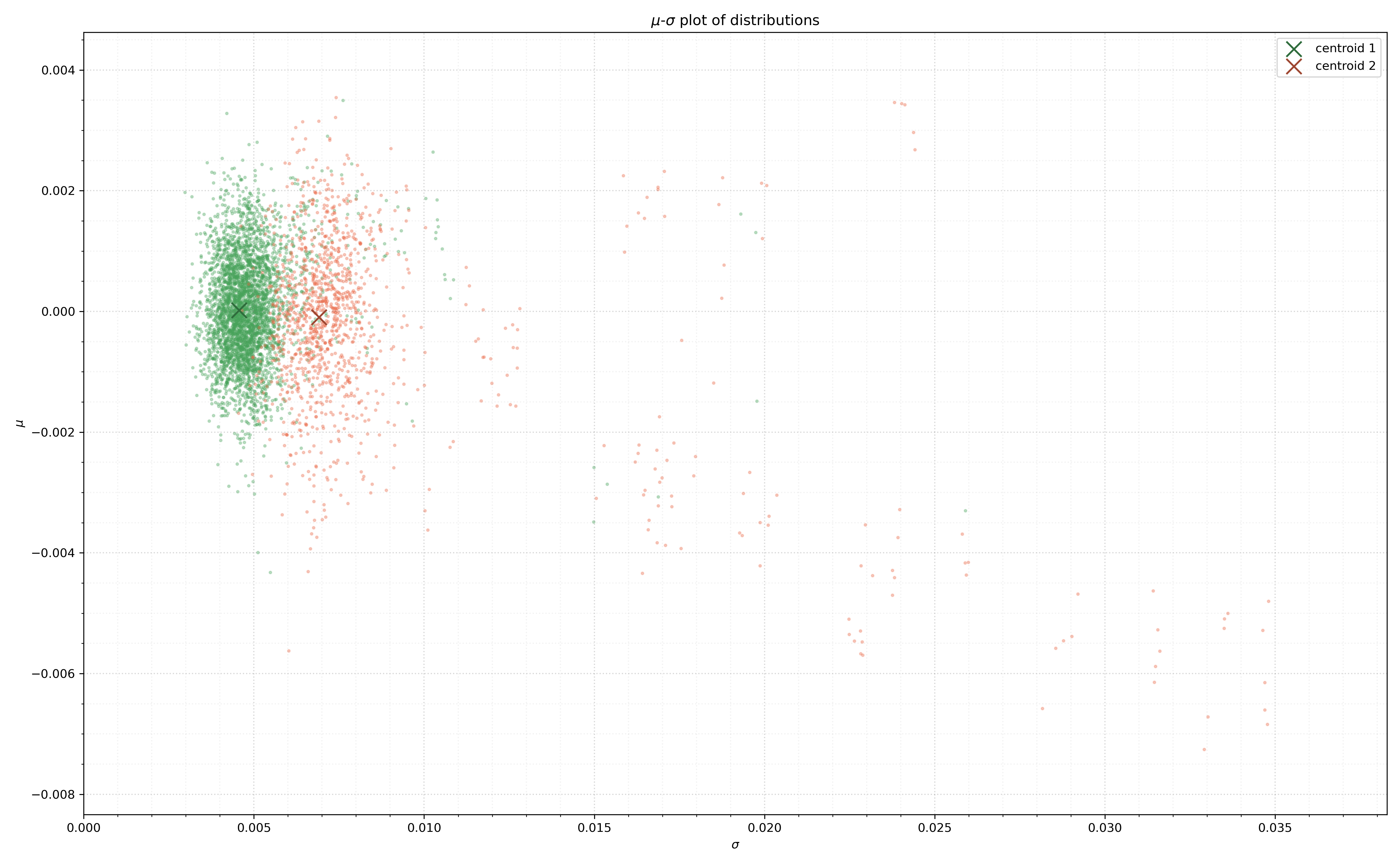}
			\caption{WK-means.}
			\label{fig:wassersteinmerton}
		\end{subfigure}
		\caption{Outputs of clustering algorithms in mean-variance space, Merton jump diffusion, example run.}
		\label{fig:meanvarmerton}
	\end{figure}
	
	Figure \ref{fig:wkmertonskewkurt} shows the WK-means clusters in skew-kurtosis space. We see here that the algorithm correctly identifies several distributions exhibiting positive skew as belonging to the bull regime. Furthermore, the algorithm is also able to discern between these distributions and those belonging to the bear regime, which are significantly more positively skewed.
	
	\begin{figure}[h]
		\centering
		\includegraphics[width=0.6\textwidth]{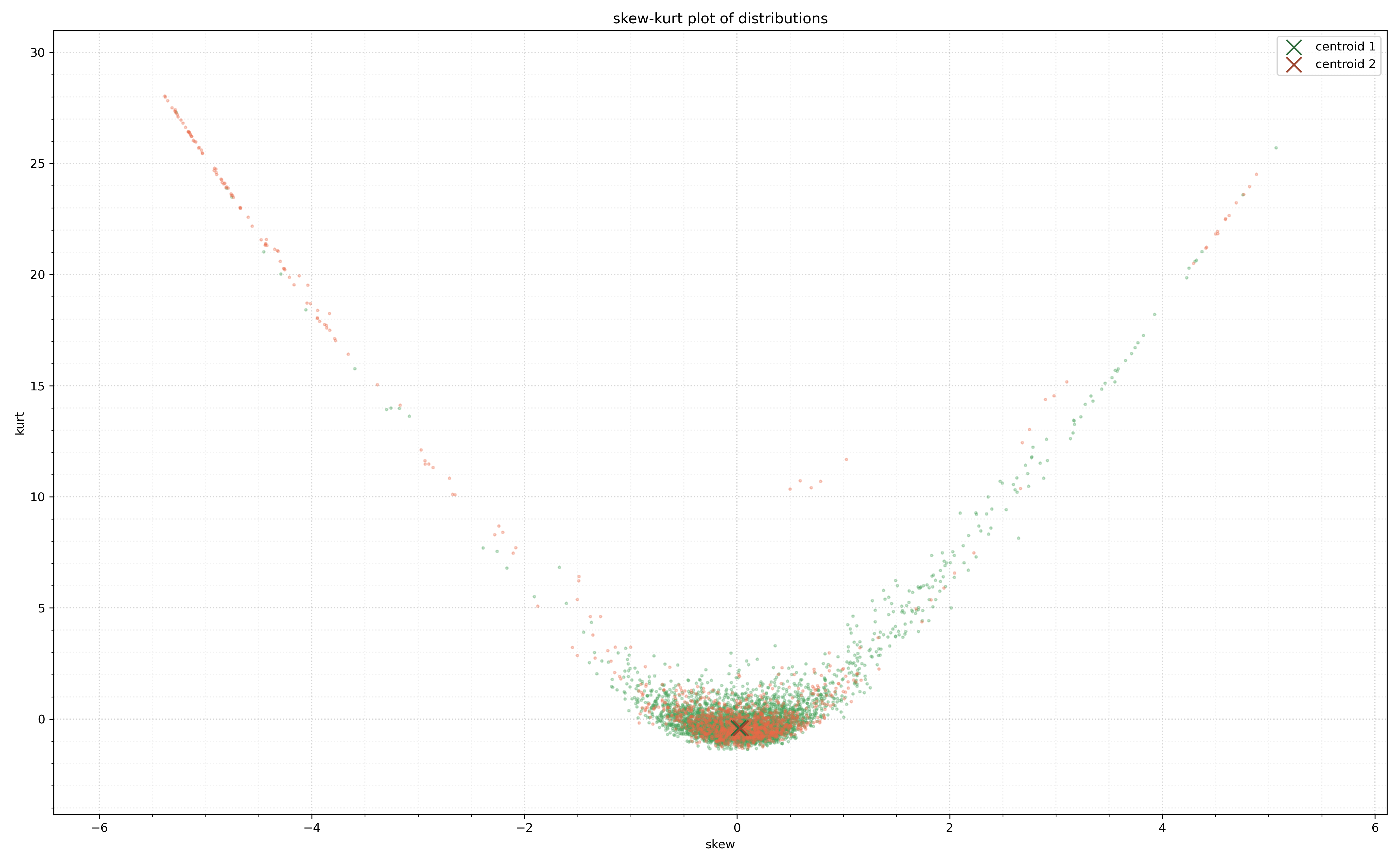}
		\caption{Clustered empirical Merton distributions in skew-kurtosis space, WK-means, example run.}
		\label{fig:wkmertonskewkurt}
	\end{figure}
	
	We summarize the results for $n=50$ runs with the parameters $(\theta_{\mathrm{bull}}, \theta_{\mathrm{bear}})$. It is clear that both the MK-means and HMM approach are unable to discern between periods of regime change and normalcy. As expected, the Wasserstein approach does not fare as well as the Gaussian case, although the difference is negligible. Note that the high accuracy in the regime-on case for the HMM is due to the fact that it never identifies the alternate regime and places all points into the first category. For the path in Figure \ref{fig:mertonpath}, we give the historical colouring plot associated to a run of all three algorithms in Figure \ref{fig:histmerton}, which highlights the numerical results obtained from the table. 
	
	\begin{table}[h]	
		\begin{center}
			\begin{tabular}{ccccc}
				\toprule
				\textbf{Algorithm} & Total & Regime-on & Regime-off & Runtime\\ \midrule
				Wasserstein & $\boldsymbol{91.28}\% \pm 4.08\%$ & $\boldsymbol{86.87}\% \pm 3.1\%$  & $92.76\% \pm 4.43\%$ & $1.11s \pm 0.25s$  \\ \addlinespace
				Moment & $66.64\% \pm 3.42\%$ & $27.25\% \pm 8.73\%$ & $79.79\% \pm 7.40\%$ & $1.71s \pm 0.28s$\\
				\addlinespace
				HMM & $75.05\% \pm 0.01\%$ & $0.66\% \pm 0.04\%$ & $\boldsymbol{99.87}\% \pm 0.01\%$ & $0.66s \pm 0.04s$\\\midrule
			\end{tabular}
		\end{center}
		\caption{Accuracy scores with $95\%$ CI, Merton synthetic path, $n=50$ runs.}
		\label{table:mertonaccuracy}
	\end{table}
	
	\begin{figure}[h]
		\centering
		\begin{subfigure}{0.5\linewidth}
			\centering
			\includegraphics[width=\textwidth]{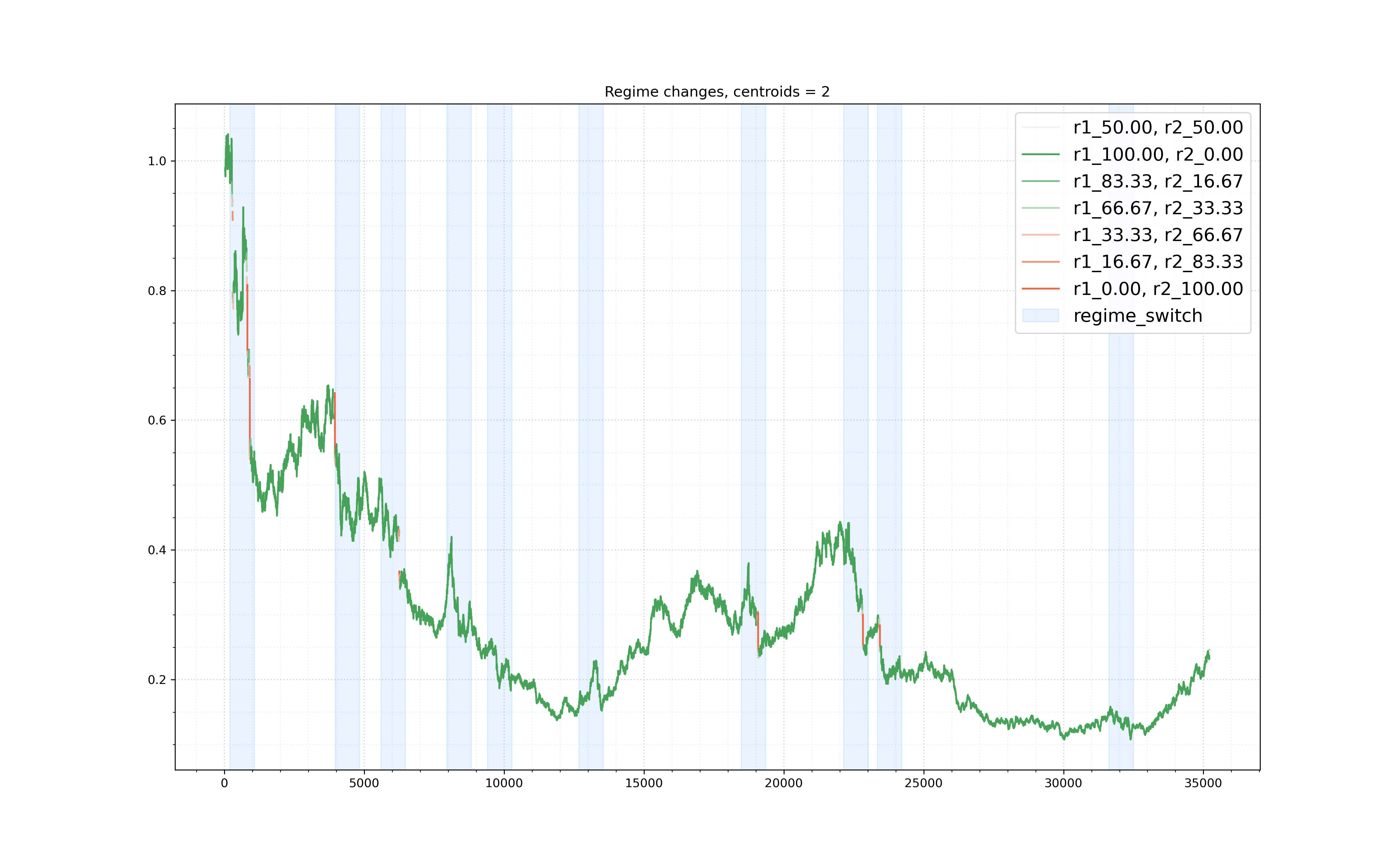}
			\caption{MK-means.}
			\label{fig:momentshistoricalmerton}
		\end{subfigure}%
		\begin{subfigure}{0.5\linewidth}
			\centering
			\includegraphics[width=\textwidth]{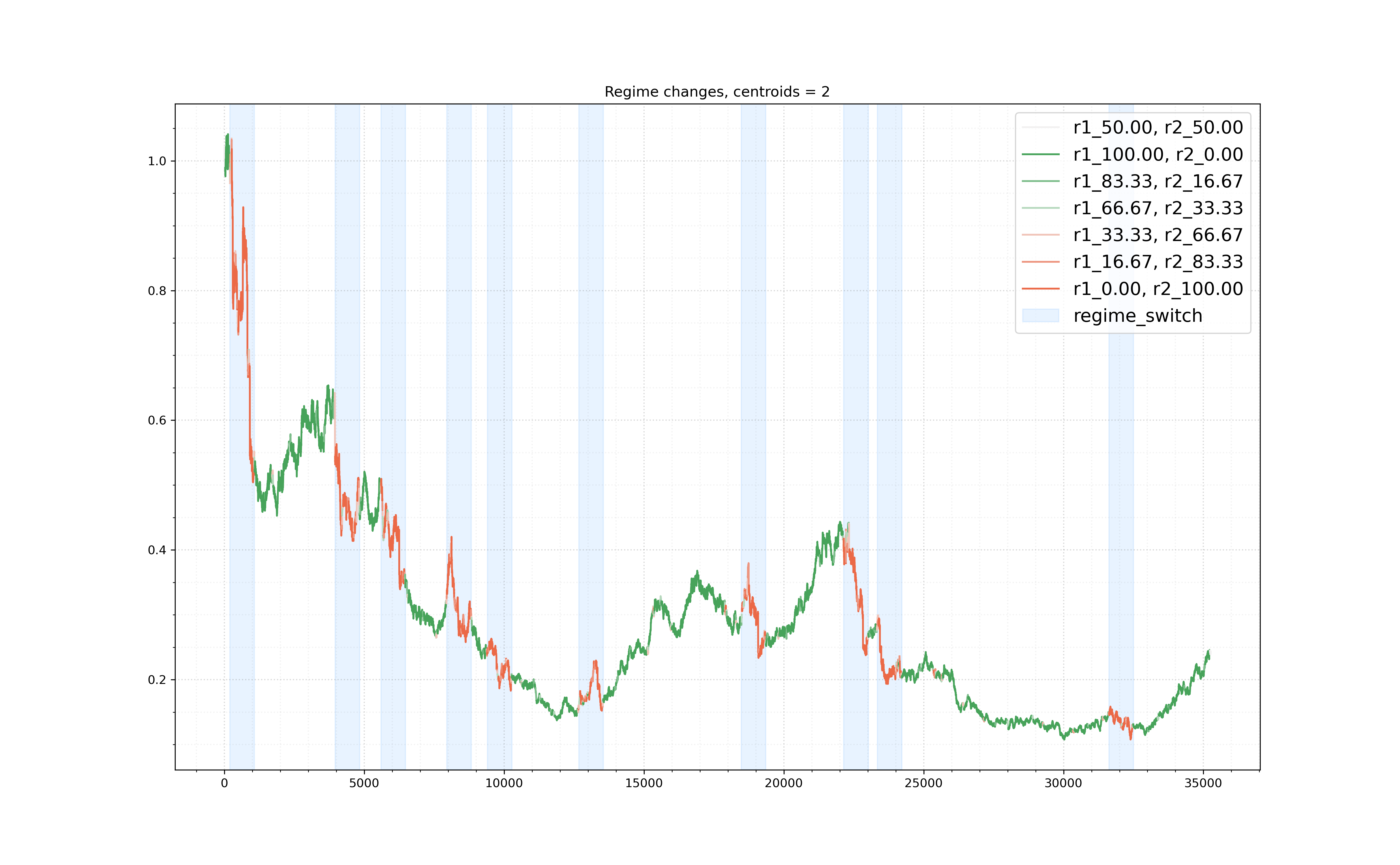}
			\caption{WK-means.}
			\label{fig:wassersteinhistoricalmerton}
		\end{subfigure}
		\begin{subfigure}{0.5\linewidth}
			\centering
			\includegraphics[width=\textwidth]{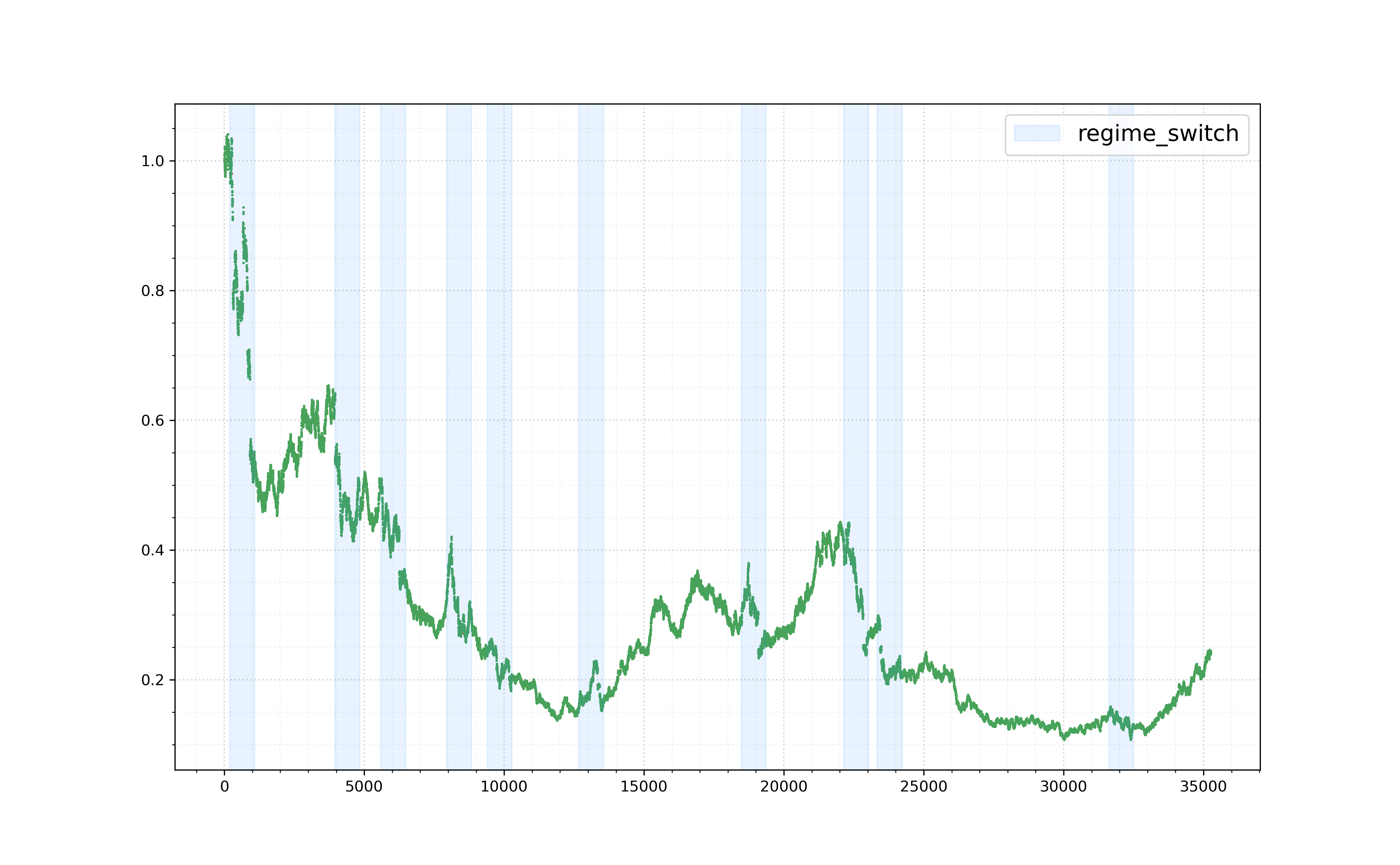}
			\caption{Hidden Markov model.}
			\label{fig:hmmhistoricalmerton}
		\end{subfigure}
		\caption{Historical cluster colouring on synthetic Merton jump diffusion price path, example runs.}
		\label{fig:histmerton}
	\end{figure}
	
	As we did with the geometric Brownian motion example, we conclude the results section with a comparison between the mean and variance of the centroids obtained from either algorithm, and those associated to the true distributions as given by equations (\ref{eqn:mertonmean}) and (\ref{eqn:mertonvar}). As expected, from Figures \ref{fig:momentmeanmerton} and \ref{fig:momentvarmerton} it is clear that the MK-means centroids do a poor job of approximating the true measures associated to the bull and bear regimes. We compare this to Figures \ref{fig:wkmeanmerton} and \ref{fig:wkvarmerton}, where the mean and variance of the WK-means centroids are much closer to the theoretical counterparts. 
	
	\begin{figure}[h!]
		\centering
		\begin{subfigure}{0.5\linewidth}
			\centering
			\includegraphics[width=\textwidth]{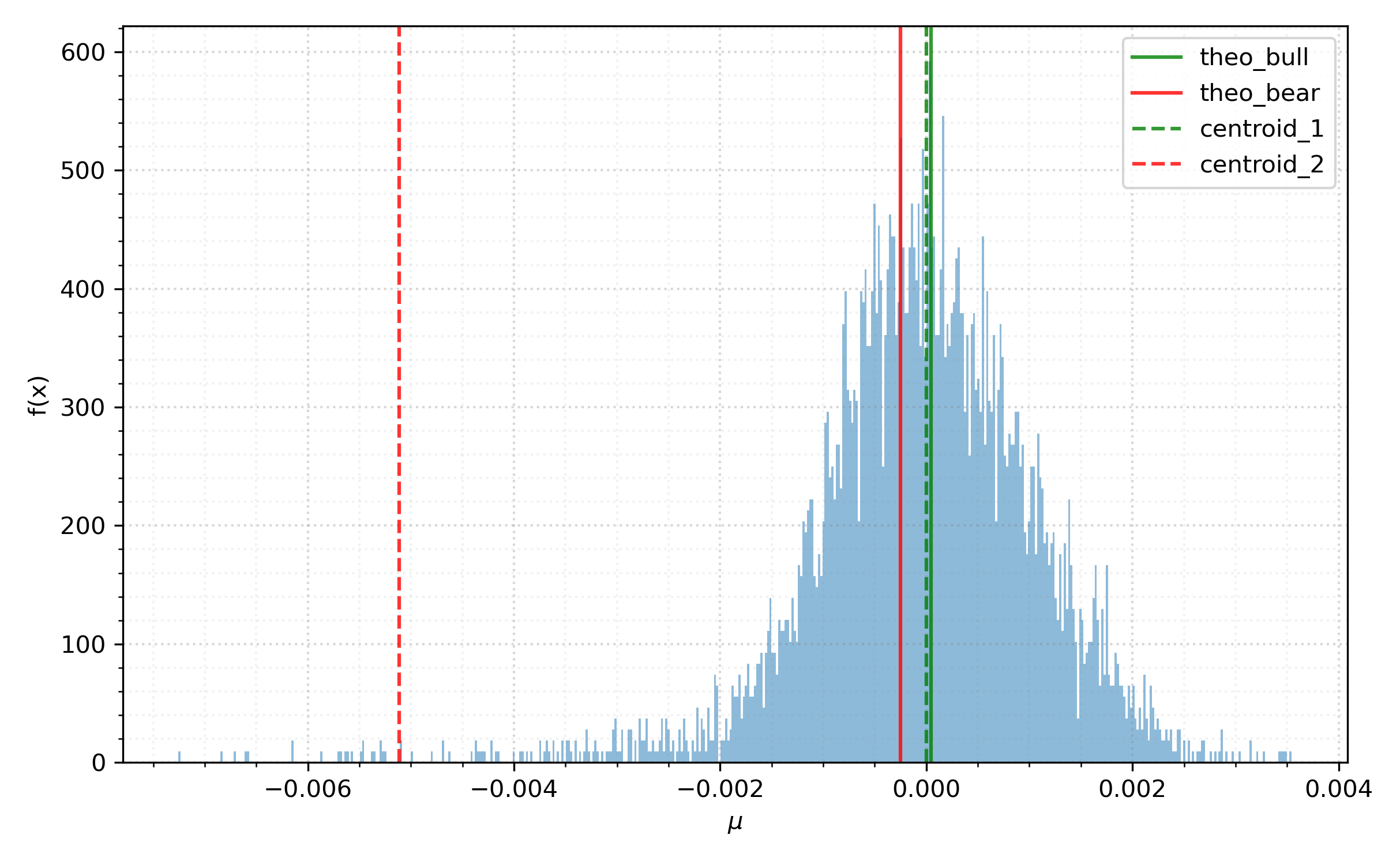}
			\caption{Distribution of $\{\ex[\mu_i]\}_{i\ge 0}$, MK-means.}
			\label{fig:momentmeanmerton}
		\end{subfigure}%
		\begin{subfigure}{0.5\linewidth}
			\centering
			\includegraphics[width=\textwidth]{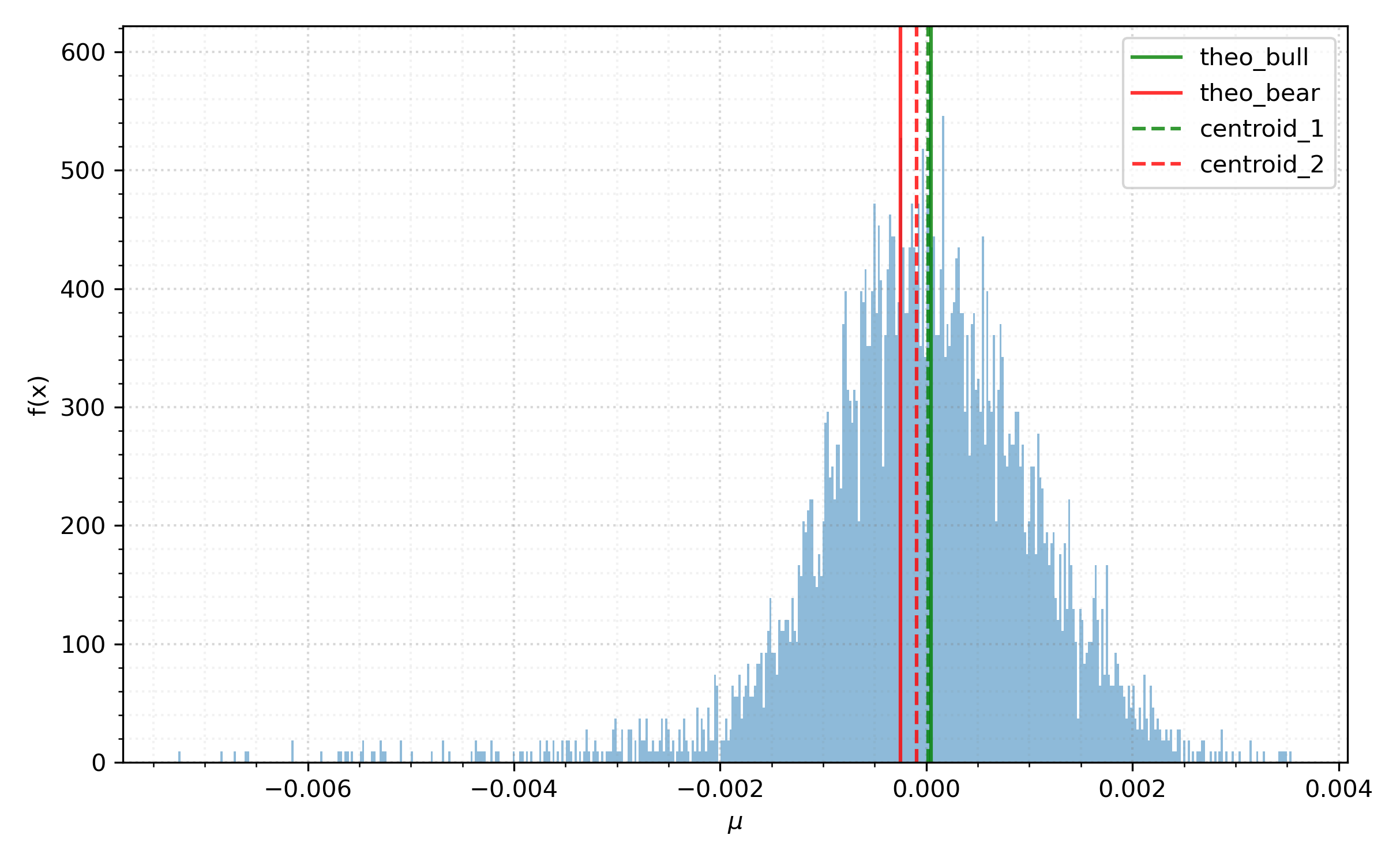}
			\caption{Distribution of $\{\ex[\mu_i]\}_{i\ge 0}$, WK-means.}
			\label{fig:wkmeanmerton}
		\end{subfigure}
		\begin{subfigure}{0.5\linewidth}
			\centering
			\includegraphics[width=\textwidth]{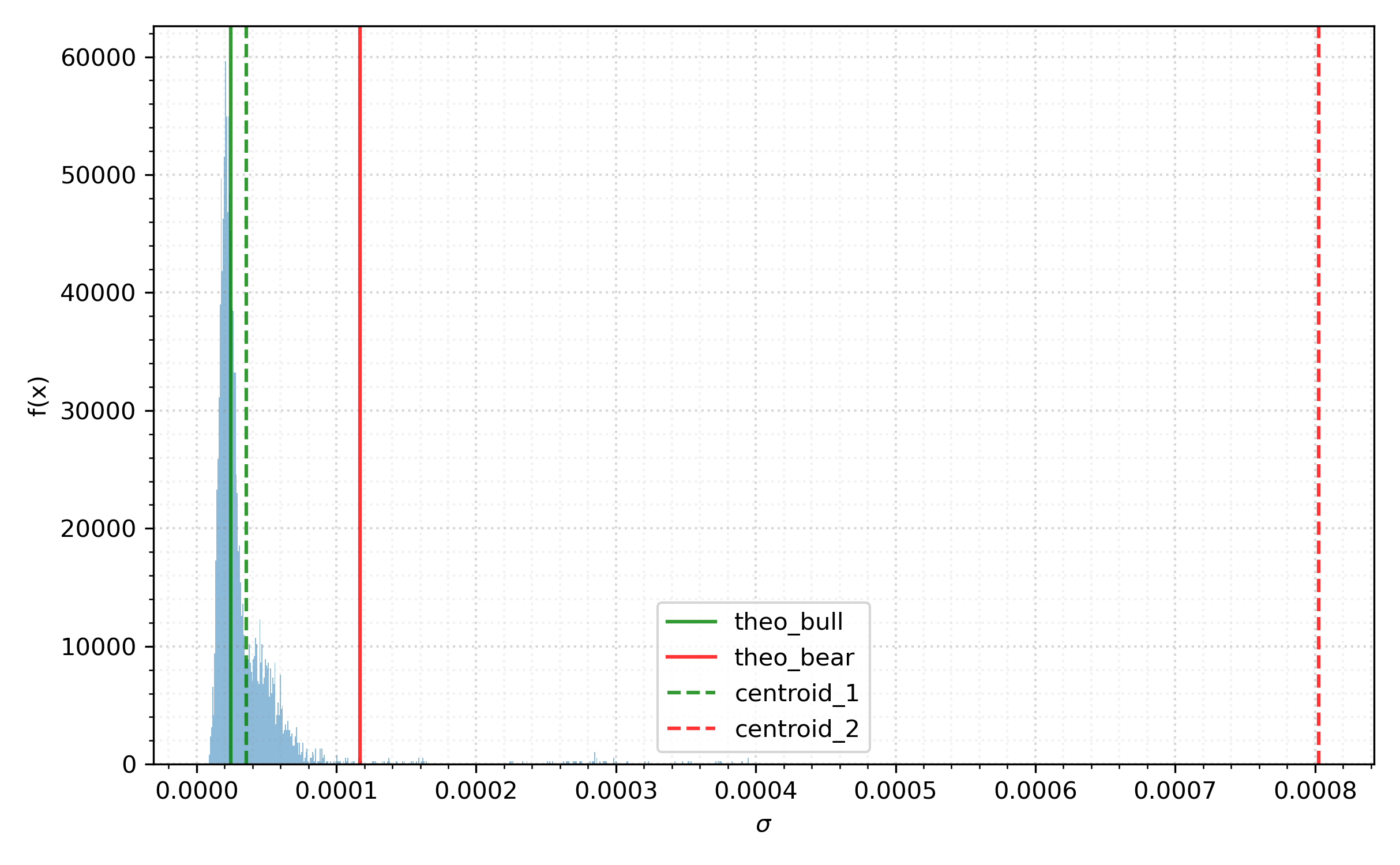}
			\caption{Distribution of $\{\mathrm{Var}(\mu_i)\}_{i\ge 0}$, MK-means.}
			\label{fig:momentvarmerton}
		\end{subfigure}%
		\begin{subfigure}{0.5\linewidth}
			\centering
			\includegraphics[width=\textwidth]{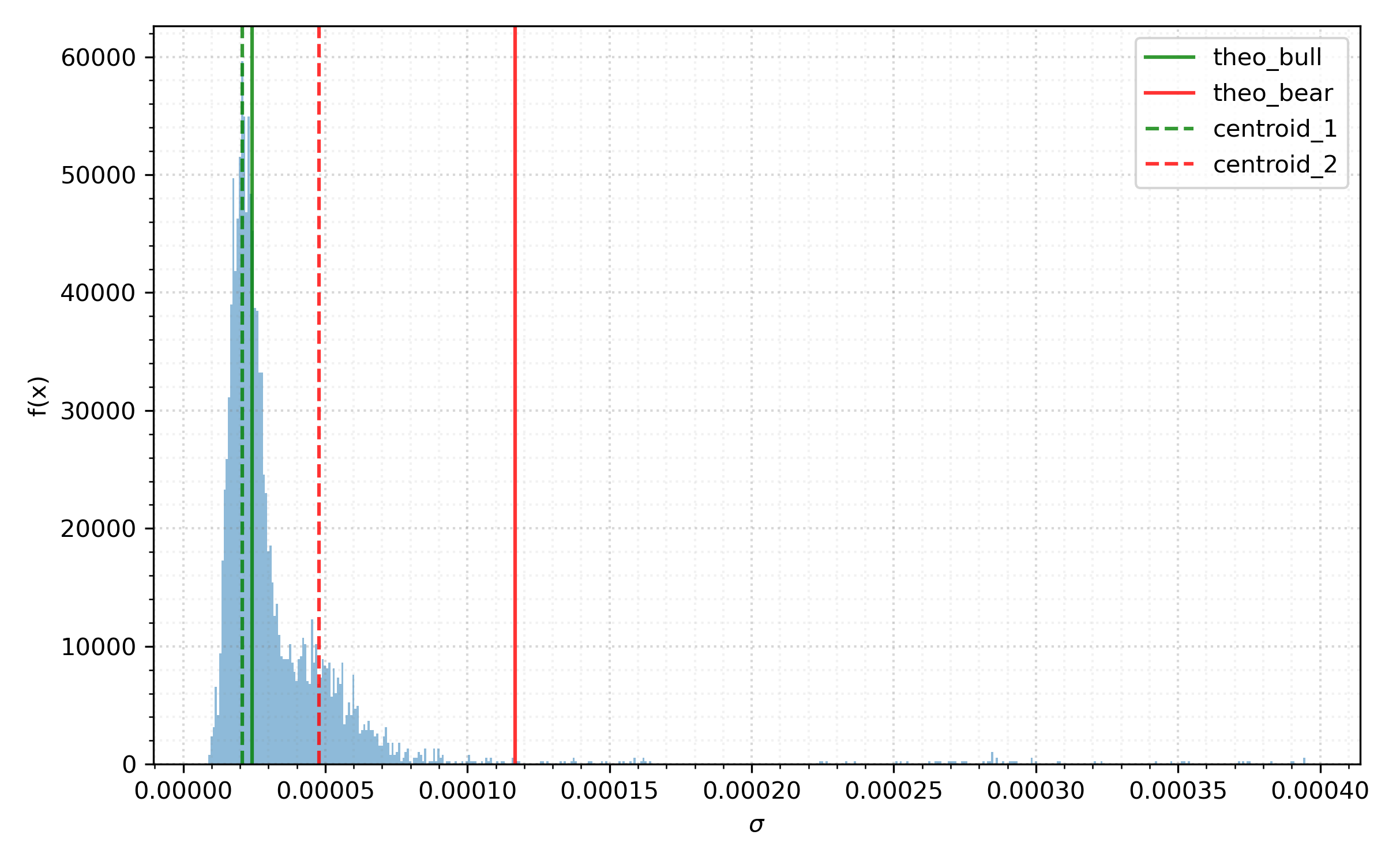}
			\caption{Distribution of $\{\mathrm{Var}(\mu_i)\}_{i\ge 0}$, Wasserstein.}
			\label{fig:wkvarmerton}
		\end{subfigure}
		\caption{Approximations of mean and variance of true measures by centroids of moments- and WK-means algorithms.}
		\label{fig:mertonapprox}
	\end{figure}

	\subsection{Selection of hyperparameters}\label{subsec:hyperparameters}
	
	In this section we give a brief discussion regarding how the choice of hyperparameters $(h_1, h_2)$ affects the results of the WK-means algorithm. We begin with a discussion to the first hyperparameter $h_1$, which corresponds to the number of returns that form each empirical distribution within the clustering algorithm. Classically one would like to take as large a value of $h_1$ as is feasibly possible in order to best approximate empirically the true data-generating measure. In the regime clustering context, however, this is not always ideal: choosing $h_1$ to be too large can mean that regime changes are not captured, or (in a live data setting) the detection of such changes are lagged. However, certainly if $h_1$ is chosen too small spurious classifications dominated by noise will be made. Thus we believe that the choice of window length hyperparameter is more an art than a science and strongly depends on the application in mind. 
	
	Regarding the overlap hyperparameter $h_2$: heuristically, a larger value of overlap parameter (relative to $h_1$) can be thought of in two ways. Mechanically it is a way of increasing the number of samples used in the clustering algorithm, which may be necessary in a low-data environment. Heuristically, by increasing the clustering set with measures that are very similar to each other, one is indirectly making a statement about how representative the observed sequence of log-returns (and thus clustering set measures) relative to what one might deem ``standard'' conditions. This phenomena can be seen in the simple case when one clusters on S\&P 500 data before and after the GFC: for $k=2$ and with $h_2 = 0$, one would expect that the outlier centroid $\overline{\mu}^2$ moves significantly faster to its new position than $\overline{\mu}^1$, whereas if $h_2$ is closer to $h_1$, one expects the centroids to not initially change as much during the onset of the GFC, since new observations are less constituent relative to the corpus of measures preceding them. 
	
	We note however that in general the overlap hyperparameter does not have too large an effect on centroids obtained (and, thus, clusters) assuming that one is not operating in too low a data environment, and $h_1$ is suitably chosen. We present the results of clustering on SPY for the hyperparameter choice $h^1 = (35, 28)$, the choice we made in Section \ref{subsec:realdata}, and $h^2 = (35, 0)$ in Figure \ref{fig:hyperparameters}. Here, one can see that the obtained centroids from either algorithm do not change drastically in spite of the lower data density.
		
	\begin{figure}[h!]
		\centering
		\begin{subfigure}{0.5\linewidth}
			\centering
			\includegraphics[width=\textwidth]{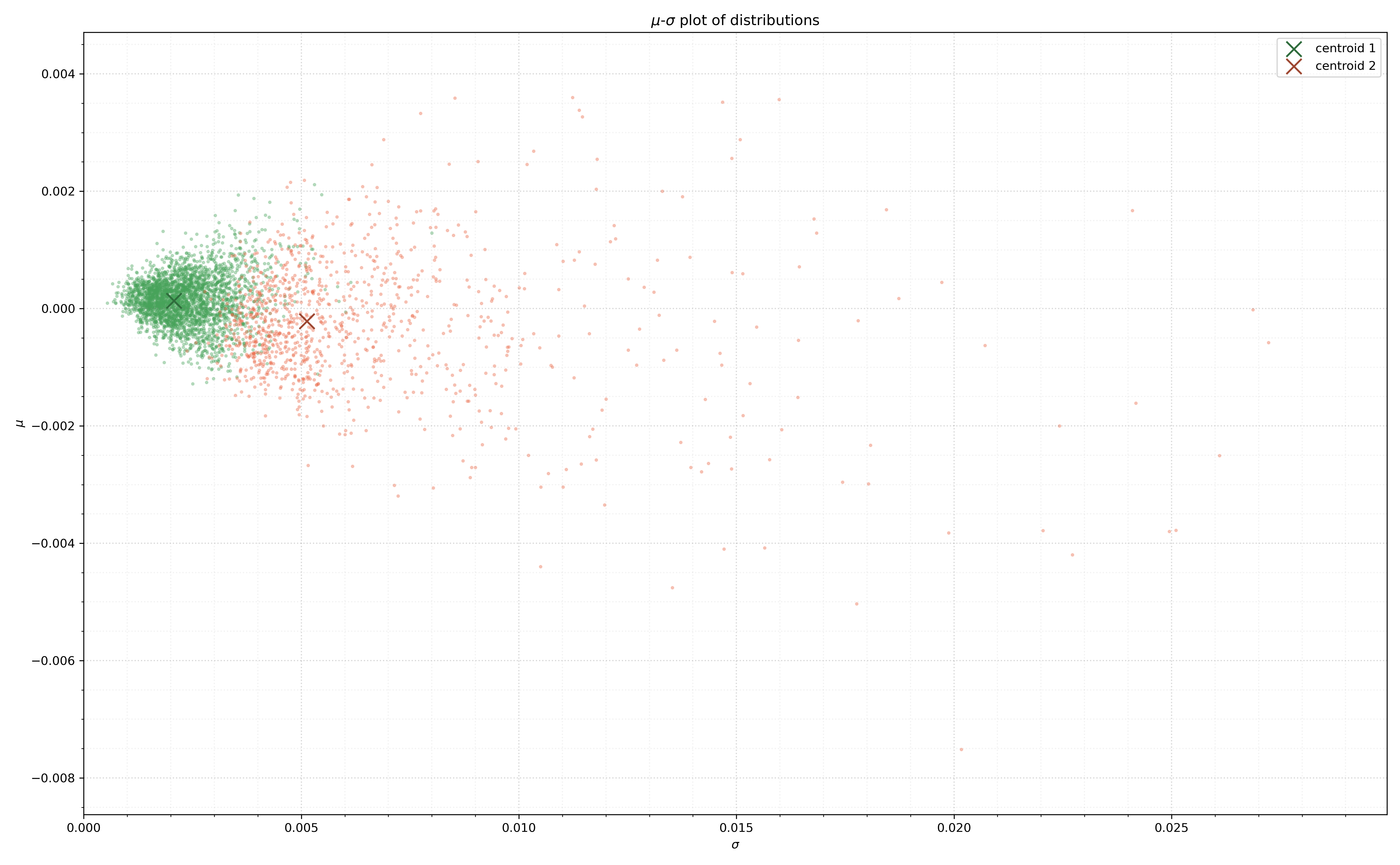}
			\caption{Mean-variance plot of clusters and centroids, $h^1=(35, 28)$.}
			\label{fig:h1meanvar}
		\end{subfigure}%
		\begin{subfigure}{0.5\linewidth}
			\centering
			\includegraphics[width=\textwidth]{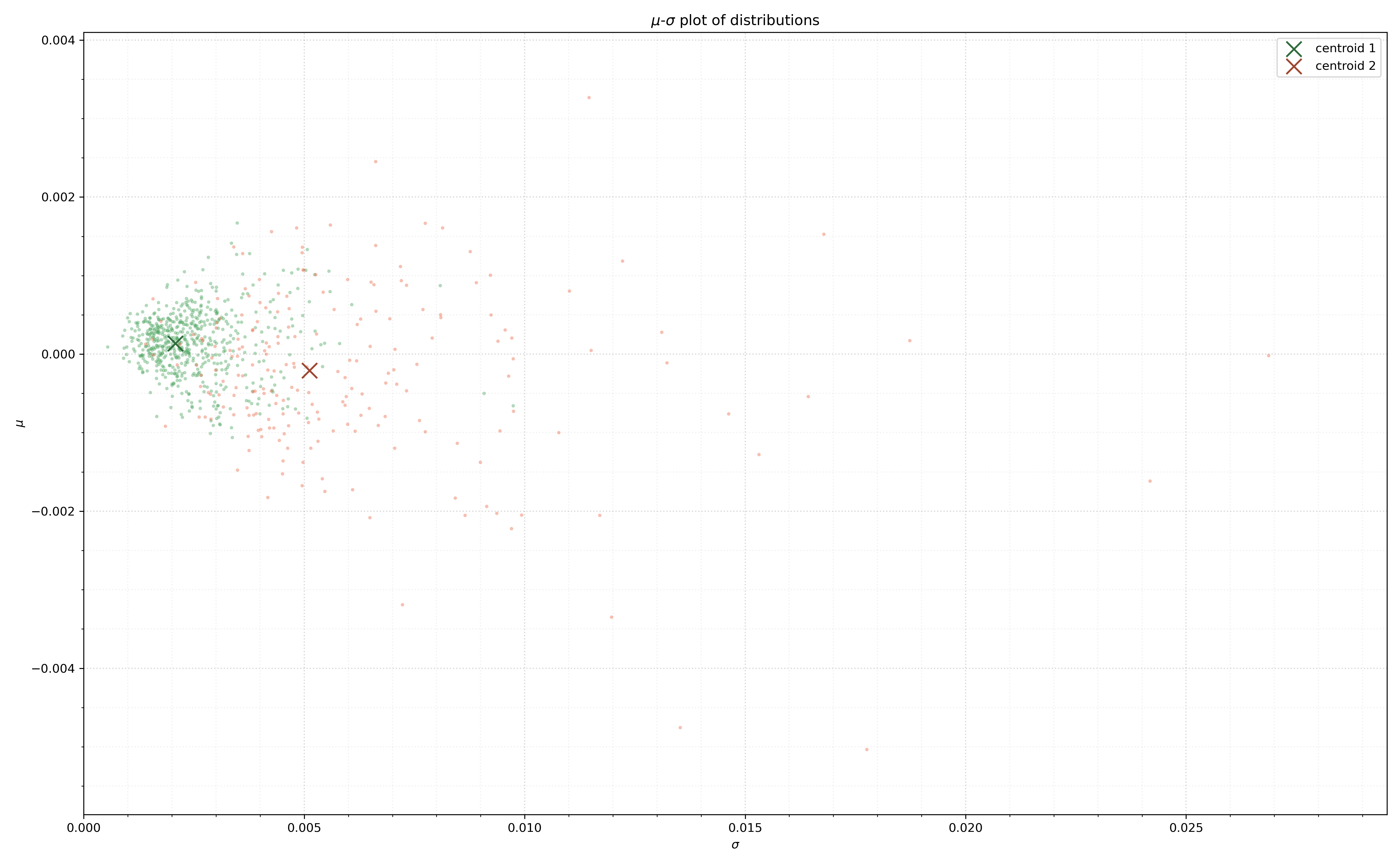}
			\caption{Mean-variance plot of clusters and centroids, $h^2=(35, 0)$.}
			\label{fig:h2meanvear}
		\end{subfigure}
		\begin{subfigure}{0.5\linewidth}
			\centering
			\includegraphics[width=\textwidth]{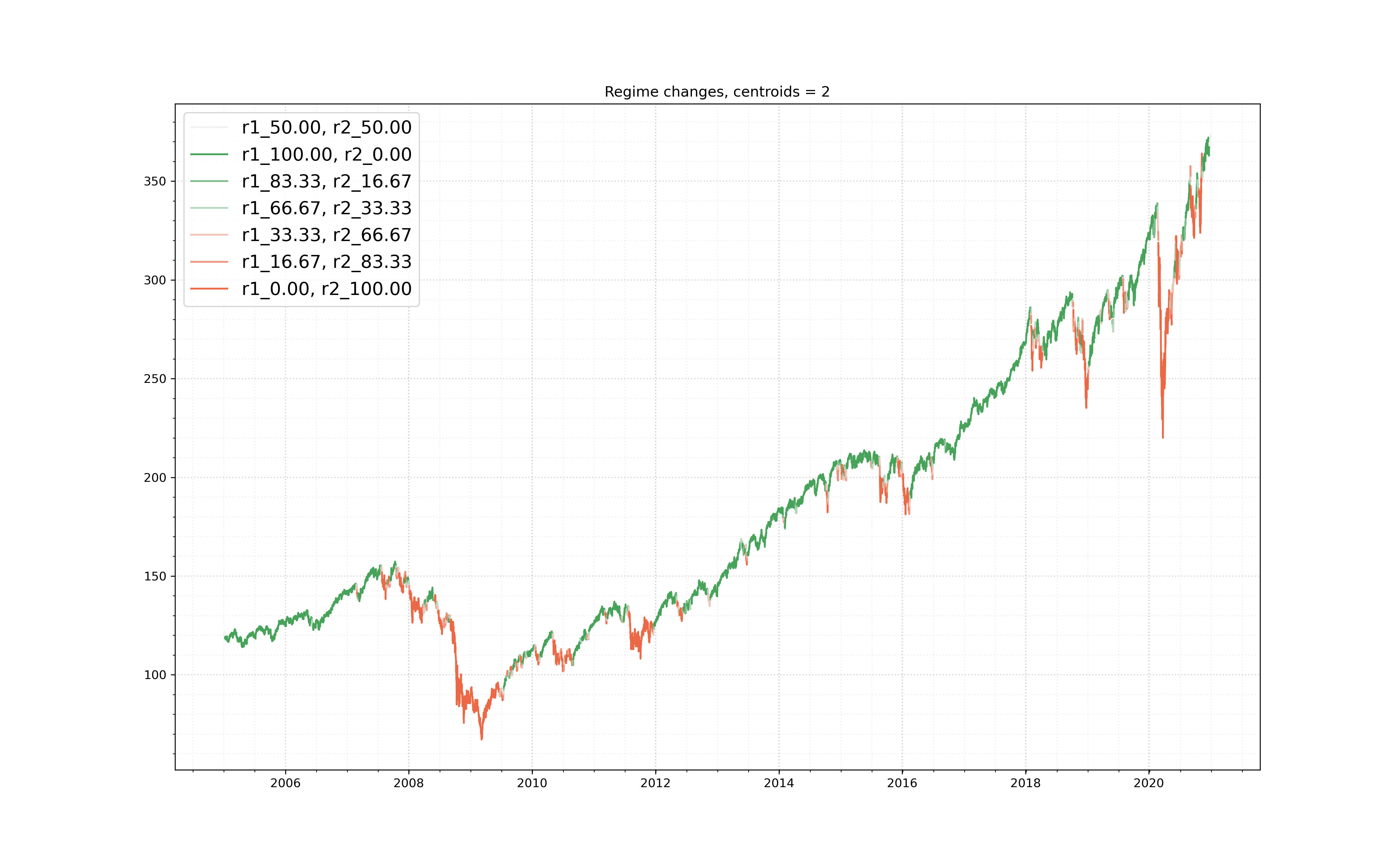}
			\caption{Historical backtest plot, $h^1=(35, 28)$.}
			\label{fig:h1backtest}
		\end{subfigure}%
		\begin{subfigure}{0.5\linewidth}
			\centering
			\includegraphics[width=\textwidth]{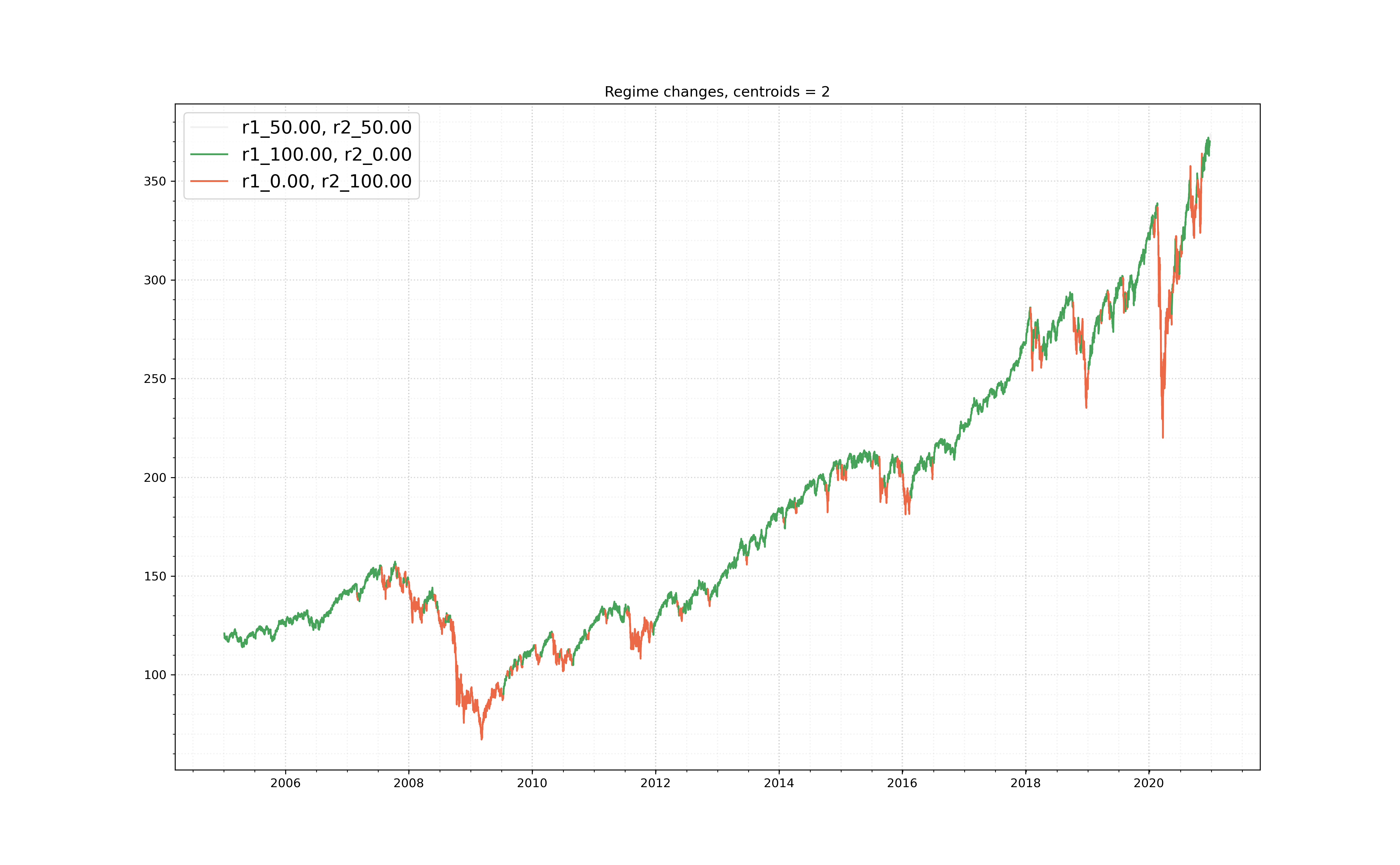}
			\caption{Historical backtest plot, $h^2=(35, 0)$.}
			\label{fig:h2backtest}
		\end{subfigure}
		\caption{WK-means results with $h^1$ and $h^2$ hyperparameter choices, SPY price path.}
		\label{fig:hyperparameters}
	\end{figure}

	Indeed a simple Kolmogorov-Smirnov two-sample test between the first centroids $(\overline{\mu}^{h_1}_1, \overline{\mu}^{h_2}_1)$ returns a test statistic score of $0.02857$ with an associated $p$-value of $1.0$, and the second set of centroids $(\overline{\mu}^{h_1}_2, \overline{\mu}^{h_2}_2)$ yields the same score and $p$-value. 
	
	Finally, we test the effect of changing the window length parameter $h_1$. As stated in the beginning of the section, there exist many incorrect choices for $h_1$ in practice (too small, or too large), but reasonable choices will tend to give robust results. To show this, we chose a sequence of window lengths $H_1 = (7 + 7i)_{i=1}^10$ and set the overlap parameter $h_2 = \lfloor 3h_1/4\rfloor $. We then ran the regime-switching experiment from Section \ref{subsec:syndata} with the same switching dynamics as in Section \ref{subsubsec:merton}, repeating the experiment $30$ times for each hyperparameter choice. In particular, regime changes in this setting lasted for half a year, which amounts to 882 time steps given our year mesh, which means that only an unreasonably large rolling window would miss the given regime change. Figure \ref{fig:h1tests} gives the various accuracy results (total, regime-on, and regime-off) for the window length parameters given by $H_1$. Past a certain threshold, the window length size does not matter and the model's accuracy converges. 
	
	\begin{figure}[h]
		\centering
		\includegraphics[width=0.8\textwidth]{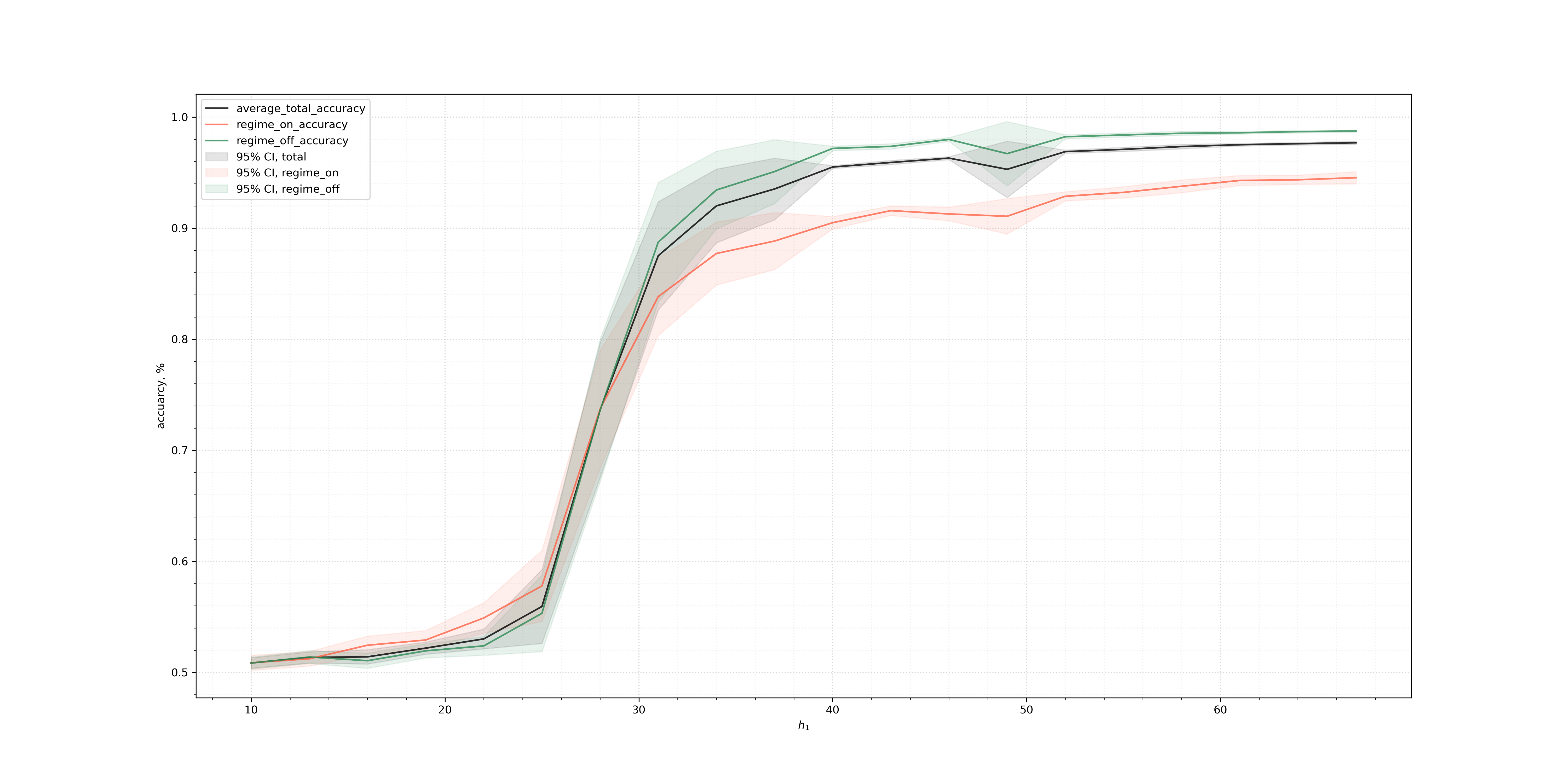}
		\caption{The effect of increasing $h_1$ on accuracy scores.}
		\label{fig:h1tests}
	\end{figure} 

	\section{Conclusion}
	
	In this paper, we have shown that a slight modification to the $k$-means algorithm does an excellent job at classifying partitions of market returns into regimes. We have verified this on both real data by comparing to known periods of market instability, and on synthetic data where we explicitly determine when the regime changes occur. We have compared this approach to a more standard moments-based algorithm which did not perform as well when returns were non-Gaussian, and a more classical approach using a hidden Markov model, which failed to accurately discern between regimes in the synthetic case. We also showed that clusters obtained via the Wasserstein approach are more self-similar than those derived from the moment-based method.
	
	Future research would include employing other clustering algorithms rather than the standard k-means approach; for instance, fuzzy and hierarchical clustering methods. Further study into the robustness of the derived clusters under the choice of hyperparameters (that is, partitioning of the underlying time series) would also be relevant to understand how stable derived clusters are. We also note that there exist methodologies for determining the optimal number of clusters $k$ to be used. Finally, a more analytic and rigorous study of the weak convergence of derived centroids to the true measures (in the synthetic data case) would also be of interest.

	\newpage
	
	\appendix
	
	\section{The $k$-means algorithm}\label{appendix:A}
	
	In this section, we outline some of the notation used in the paper, along with some standard results regarding the classical $k$-means algorithm.
		
	Recall that $X \in \mathcal{S}(V)$ is a stream of data over a normed vector space $(V, \norm{\cdot}_V)$. 
	\begin{definition}[Set of clusterings over $X$]
		We write
		\begin{equation*}
			\mathcal{C}(X) = \left\{\{\mathcal{C}_i\}_{0\le i\ge n} : \mathcal{C}_i \cap \mathcal{C}_j = \emptyset, \bigcup_{i=1}^n\mathcal{C}_i = X, n \in \mathbb{N} \right\}
		\end{equation*}
		to be the set of all possible (disjoint) clusterings over $X$.
	\end{definition}
	
	The $k$-means algorithm returns an element $\mathcal{C}^* \in \mathcal{C}(X)$ which is locally optimal with respect to the induced metric $d: V \times V \to [0, +\infty)$ on $V$. 
	
	Before continuing with a more detailed explanation of the $k$-means algorithm, we introduce the following definitions.
	
	\begin{definition}[Within-cluster variation]\label{def:withinclustervariation}
		Let $k \in \mathbb{N}$ and let $X \in \mathcal{S}(V)$ be a steam of data over a normed vector space $V$. Suppose $\mathcal{C}\subset \mathcal{C}(X)$ are disjoint clusters over $X$. Associate to each $\mathcal{C}_l$ its centroid $\overline{\mathsf{x}}_l$ for $l=1,\dots, k$. Then, for a given $\mathcal{C}_l$, the \emph{within-cluster variation} is defined as
		\begin{equation}\label{eqn:within-cluster}
			\mathrm{WC}(\mathcal{C}_l) = \sum_{\mathsf{x} \in \mathcal{C}_l}\norm{\mathsf{x} - \overline{\mathsf{x}}_l}^2_V \qquad \text{for }l=1,\dots, k.
		\end{equation}  
	\end{definition}
	\begin{definition}[Total-cluster variation]\label{def:totalclustervariation}
		With the notation of Definition \ref{def:withinclustervariation}, define 
		\begin{equation}\label{eqn:total-cluster}
			\mathrm{TC}(\mathcal{C}) = \sum_{i=1}^k \mathrm{WC}(\mathcal{C}_l)
		\end{equation}
		to be the \emph{total-cluster variation} corresponding to a clustering $\mathcal{C} \in \mathcal{C}(X)$ on the normed vector space $(V, \norm{\cdot}_V)$.
	\end{definition}
	
	Recall that $\{\mathcal{C}^n_l\} \in\mathcal{C}(X)$ denotes an intermediate disjoint clustering of the set $X$ at step $n \in \mathbb{N}$. The $k$-means algorithm first assigns nearest neighbours via (\ref{eqn:kmeansupdate}) with respect to $V$. The next step in the algorithm is to update the centroids via an aggregation function $\alpha: 2^V \to V$ which gives a central element $\overline{\mathsf{x}}_l \in V$ from the set of nearest neighbours $\mathcal{C}_l$ for $l=1,\dots, k$. In the classical $k$-means on $\real^d$, this function is given by
	\begin{equation}\label{eqn:kmeansaggregation}
		\alpha(\mathcal{C}_l) = \Bigg( \frac{1}{|\mathcal{C}_l|}\sum_{\mathsf{x} \in \mathcal{C}_l} x_j \Bigg)_{1\le j\le d}.
	\end{equation}
	Here, $|\mathcal{C}|$ denotes the cardinality of the set $\mathcal{C}$. Note that other choices of $(V, \norm{\cdot}_V)$ necessitate different aggregation methods depending on the structure of $V$.
	
	Continuing, the algorithm then updates the centroids via the function $\alpha$:
	\begin{equation*}
		\overline{\mathsf{x}}^n_l = a(\mathcal{C}^n_l) \qquad \text{for }l=1,\dots,k.
	\end{equation*}
	The new centroids $\overline{\mathsf{x}}^n$ are then compared to $\overline{\mathsf{x}}^{n-1}$ via the following stopping rule.
	\begin{definition}[$k$-means stopping rule]
		Suppose $(V, \norm{\cdot}_V)$ is a normed vector space. For fixed $k \in \mathbb{N}$, consider a loss function $l: V^{k} \times V^k \to \real_+$ given by
		\begin{equation}\label{eqn:loss}
			l(x, y) = \sum_{i=1}^k \norm{x_i - y_i}_{V},
		\end{equation}
		For a tolerance level $\varepsilon > 0$, the \emph{stopping rule} corresponding to the standard $k$-means algorithm is given by 
		\begin{equation}\label{eqn:stoppingrulekmeans}
			l(\overline{\mathsf{x}}^{n-1}, \overline{\mathsf{x}}^n) < \varepsilon,
		\end{equation}
		where $n \in \mathbb{N}$ denotes the step of the algorithm, and we take $V = \real^d$.
	\end{definition}

	\begin{remark}
		Various algorithms which attempt to find the optimal number of clusters $k$ that should be used to separate data $X$ consider (\ref{eqn:within-cluster}) as the loss to be minimised over all possible clusterings. We will not cover these algorithms in this paper (see, for instance \cite{ray1999determination}), and their applications to the MRCP are topics for future research.
	\end{remark}

	A given run of the $k$-means algorithm is characterised by the 2-tuple of centroids and nearest neighbour assignments $\left(\overline{\mathsf{x}}_l, \mathcal{C}_l\right)_{1\le l\le k}$. We conclude this section with the following proposition.
	
	\begin{proposition}
		The $k$-means algorithm converges in finitely many steps to a local minima. 
	\end{proposition}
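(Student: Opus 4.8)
The plan is to establish the two claims in sequence: first that the algorithm terminates after finitely many iterations, and second that the resulting configuration is a local minimum of the total-cluster variation functional. The key object to track is the total-cluster variation $\mathrm{TC}(\mathcal{C}^n)$ from Definition \ref{def:totalclustervariation}, evaluated at the intermediate clustering $\mathcal{C}^n = \{\mathcal{C}^n_l\}_{l=1}^k$ produced at step $n$. The strategy is the classical monotonicity-plus-finiteness argument: I would show that $\mathrm{TC}$ is non-increasing along the iterates, and then exploit the fact that the number of distinct clusterings is finite to conclude termination.

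\textbf{Monotonicity of the objective.} First I would analyse the effect of a single iteration, which consists of two sub-steps. In the assignment sub-step, each datum $\mathsf{x}_i$ is reassigned to the cluster whose centroid $\overline{\mathsf{x}}^{n-1}_l$ is nearest via (\ref{eqn:kmeansupdate}); since each point is moved to the centroid minimising $d(\mathsf{x}_i, \overline{\mathsf{x}}^{n-1}_j)$ over $j$, the sum $\sum_l \sum_{\mathsf{x} \in \mathcal{C}^n_l}\norm{\mathsf{x} - \overline{\mathsf{x}}^{n-1}_l}^2_V$ can only decrease (or stay equal) relative to the previous assignment. In the update sub-step, each centroid is replaced by the aggregator $\alpha(\mathcal{C}^n_l)$ from (\ref{eqn:kmeansaggregation}). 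The point here is that for the Euclidean mean, $\overline{\mathsf{x}}^n_l = \alpha(\mathcal{C}^n_l)$ is precisely the minimiser of $\mathsf{y} \mapsto \sum_{\mathsf{x} \in \mathcal{C}^n_l}\norm{\mathsf{x} - \mathsf{y}}^2_V$, a standard fact following from the first-order optimality condition (differentiate the quadratic and set the gradient to zero). Hence replacing the old centroid by $\alpha(\mathcal{C}^n_l)$ can only decrease each within-cluster variation $\mathrm{WC}(\mathcal{C}^n_l)$. Chaining both sub-steps gives $\mathrm{TC}(\mathcal{C}^n) \le \mathrm{TC}(\mathcal{C}^{n-1})$.

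\textbf{Finiteness and local optimality.} Next I would observe that once a clustering (a partition of the finite set $X$ into $k$ labelled blocks) repeats, the deterministic update forces the algorithm into a cycle; but the monotonicity just established means $\mathrm{TC}$ is non-increasing, so a genuine cycle of distinct clusterings with strictly decreasing objective is impossible. Since there are only finitely many partitions of the finite dataset $X$ into $k$ clusters, and $\mathrm{TC}$ strictly decreases whenever the clustering actually changes, no clustering can recur without the algorithm having already stabilised. Therefore after finitely many steps the assignment and centroids are fixed, which is exactly the stabilisation detected by the stopping rule (\ref{eqn:stoppingrulekmeans}) once $\varepsilon$ is small enough that $l(\overline{\mathsf{x}}^n, \overline{\mathsf{x}}^{n-1}) = 0$. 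The fixed point is a local minimum in the combinatorial sense: neither reassigning points (holding centroids fixed) nor recomputing centroids (holding assignments fixed) can further reduce $\mathrm{TC}$, so it is a coordinate-wise (partial) optimum of the objective.

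\textbf{The main obstacle} I anticipate is making the notion of ``local minimum'' precise and honest. The argument above delivers a \emph{partial optimum} — stable under the two alternating block-coordinate moves — rather than a true local minimum with respect to arbitrary perturbations, and these are not the same thing; one must be careful to state the conclusion as stability under the algorithm's own update map. A secondary subtlety is the possibility of ties in the assignment step (a point equidistant from two centroids) and of empty clusters, both of which require a fixed tie-breaking convention for the update map to be well-defined and for the strict-decrease argument to go through cleanly; I would handle these by fixing a deterministic rule and noting the objective is still non-increasing under it.
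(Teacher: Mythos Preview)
Your proposal is correct and follows essentially the same strategy as the paper: track the total-cluster variation $\mathrm{TC}(\mathcal{C}^n)$, show it is non-increasing along iterates, and combine this with the finiteness of the set of possible partitions (the paper bounds it by $k^N$) to conclude finite termination at a local minimum. Your version is in fact considerably more careful than the paper's three-line argument --- you spell out the two-sub-step monotonicity, the strict-decrease-plus-no-repetition reasoning, and the honest caveat that the limit is only a block-coordinate (partial) optimum --- all of which the paper leaves implicit or unaddressed.
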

	\begin{proof}
		Finiteness is guaranteed since the number of possible partitions of datum $X$ is at most $k^N$. Thus, the function $\mathrm{TC}: \mathcal{C}(X) \to [0, +\infty]$ necessarily achieves a global minimum. Therefore, the sequence $(\mathrm{TC}(\mathcal{C}^n))_{n\ge 1}$ is non-increasing (by definition of the $k$-means update step (\ref{eqn:kmeansupdate})) and bounded from below, which guarantees convergence to a local minima. 
	\end{proof}	
	
	Finally, we summarise Section \ref{subsec:kmeans} and Appendix \ref{appendix:A} with Algorithm \ref{standardkmeansalgo}. 
	
	\begin{algorithm}[h]
		\SetAlgoLined
		\KwResult{$k$ centroids}
		\textbf{initialise} centroids by sampling $k$ times from $X$\;
		\While{loss\_function $>$ tolerance}{
			\ForEach{$x_i$}{
				\textbf{assign} closest centroid wrt Euclidean distance;
			}
			\textbf{update} centroids\;
			\textbf{calculate} loss\_function\; 
		}
		\caption{Standard $k$-means algorithm}
		\label{standardkmeansalgo}
	\end{algorithm}	

	\section{The maximum mean discrepancy}\label{appendix:B}
	
	In this section, we include extra details regarding the derivation of the MMD from Definition \ref{def:maximummeandiscrepancy}. In particular we show how one can show the MMD can be employed as a metric on the space of probability measures. 
	
	\begin{definition}[Reproducing kernel Hilbert space, \cite{aronszajn1950theory}, Section 1.1]\label{def:reproducingkernelhilbertspace}
		Suppose $\mathcal{X}$ is a non-empty set, and let $(\mathcal{H}, \langle \cdot, \cdot \rangle_\mathcal{H})$ be a Hilbert space of functions $f: \mathcal{X} \to \real$. We call a positive definite function $k: \mathcal{X} \times \mathcal{X} \to \real$ a \emph{reproducing kernel} of $\mathcal{H}$ if 
		\begin{enumerate}[label=(\roman*)]
			\item For all $x \in \mathcal{X}$, we have that $k(\cdot, x) \in \mathcal{H}$, and
			\item For all $x \in \mathcal{X}$ and $f \in \mathcal{H}$, one has that 
			\begin{equation}
				f(x) = \langle f(\cdot), k(\cdot, x) \rangle_\mathcal{H},
			\end{equation}
			referred to as the \emph{reproducing property}.
		\end{enumerate}
		We call the Hilbert space $\mathcal{H}$ associated to $k$ a \emph{reproducing kernel Hilbert space} (RKHS). 
	\end{definition}
	We can associate to each RKHS $\mathcal{H}$ the \emph{canonical feature map} given by $\phi(x) = k(\cdot, x)$. We thus have that
	\begin{equation*}
		k(x,y) = \langle k(\cdot, x), k(\cdot, y) \rangle_\mathcal{H} = \langle \phi(x), \phi(y) \rangle_\mathcal{H} \qquad \text{for all }x, y \in \mathcal{X},
	\end{equation*}

	by the reproducing property from Definition \ref{def:reproducingkernelhilbertspace}. Directly from the definition of a RKHS $\mathcal{H}$, we have the following equivalent definition. 
	\begin{definition}[\cite{berlinet2011reproducing}, Theorem 1]
		Suppose $\mathcal{H}$ is a Hilbert space. Define $\delta_x: \mathcal{H} \to \real$ to be the evaluation map. Then, $\mathcal{H}$ is a RKHS if and only if $\delta_x$ is continuous. 
	\end{definition}
	\begin{proof}
		Suppose that $\mathcal{H}$ is a RKHS. Denote by $\mathcal{H}'$ as the dual of $\mathcal{H}$. One has that 
		\begin{align}
			|\delta_x(f)| = |f(x)| &= |\langle f, k(\cdot, x) \rangle_\mathcal{H}| \nonumber \\ 
			&\le \norm{f}_\mathcal{H} \langle k(\cdot, x), k(\cdot, x) \rangle_\mathcal{H}^{1/2} \nonumber\\
			&= \sqrt{k(x,x)}\norm{f}_\mathcal{H}, \label{eqn:operatornorm}
		\end{align}
		so in particular the linear operator $\delta_x$ is bounded with operator norm equal to $\sqrt{k(x,x)}$, which is well-defined by positive definiteness of $k$. Since the upper bound in (\ref{eqn:operatornorm}) is achieved by $f = k(\cdot, x)$, $\delta_x$ is bounded with operator norm $\norm{\delta_x}_{\mathcal{H}'} = \sqrt{k(x,x)}$. Since $\delta_x$ is bounded, it is continuous. 
		
		Now suppose that $\delta_x$ is bounded, so $\delta_x \in \mathcal{H}'$. By the Riesz representation theorem, there exists an element $f_{\delta_x}\in \mathcal{H}$ such that $\delta_x(f) = \langle f, f_{\delta_x}\rangle_\mathcal{H}$. Define $k(x, x') := f_{\delta_x}(x')$. We then have that $\delta_x(f) = f(x) = \langle f, k(\cdot, x) \rangle_\mathcal{H}$ and $k(\cdot, x) \in \mathcal{H}$ by construction. Thus, properties (1) and (2) are satisfied in Definition \ref{def:reproducingkernelhilbertspace}, so $\mathcal{H}$ is a RKHS. 
	\end{proof}

	In what follows, we will choose our function class $\mathcal{F}$ from Definition \ref{def:maximummeandiscrepancy} to be the unit ball in a RKHS $\mathcal{H}$ with associated reproducing kernel 
	\begin{equation}\label{eqn:gaussiankernel}
		k(x,y) = \exp \big((2\sigma)^{-2}\norm{x-y}_{\real^d}^2\big) \qquad \text{for }\sigma > 0,
	\end{equation}
	called the \emph{Gaussian kernel}. Importantly, such a RKHS has the following property, as shown in Steinwart \cite{steinwart2001influence}.
	\begin{definition}[Steinwart \cite{steinwart2001influence}, Definition 4]\label{def:universalkernel}
		Let $(\mathcal{X}, d)$ be a compact metric space. Suppose $\mathcal{H}$ is a RKHS with associated reproducing kernel $k: \mathcal{X} \times \mathcal{X} \to \real$. We call $\mathcal{H}$ \emph{universal} if
		\begin{enumerate}[label=(\roman*)]
			\item $k(\cdot, \cdot)$ is continuous, and 
			\item $\mathcal{H}$ is dense in $C_b(\mathcal{X})$, the space of bounded continuous functions on $\mathcal{X}$, with respect to the supremum norm $\norm{\cdot}_\infty$.
		\end{enumerate}
	\end{definition}	
	This choice of RKHS ensures that the MMD is metric on the space of Borel probability measures, which allows us to conclude the following. 
	\begin{theorem}[\cite{gretton2012kernel}, Theorem 5]\label{theroem:mmdmetric}
		Let $\mathcal{F}$ be the unit ball of a universal RKHS $\mathcal{H}$ comprised of $\real$-functions on a compact space $\mathcal{X}$. Suppose $\mu, \nu \in \mathcal{P}(\mathcal{X})$ are Borel. Then $\mathrm{MMD}[\mathcal{F}, \mu, \nu] = 0$ if and only if $\mu = \nu$.
	\end{theorem}
	\begin{proof}
		For $\mu \in \mathcal{P}(\mathcal{X})$, consider the linear functional $T_\mu: \mathcal{F} \to \real$ given by $T_\mu(f) = \ex_\mu[f]$. We have that 
		\begin{equation*}
			|T_\mu(f)| = |\ex_\mu[f(x)]| \le \ex_\mu[|f(x)|] = \ex_\mu[|\langle f, k(\cdot, x) \rangle_\mathcal{H}|] \le \ex_\mu[\sqrt{k(x,x)}]\norm{f}_\mathcal{H},
		\end{equation*}
		so in particular $T_\mu$ is continuous if $k(\cdot, \cdot)$ is measurable and $\ex_\mu[\sqrt{k(x,x)}] < +\infty$. By the Riesz representation theorem, there exists a $m_\mu \in \mathcal{H}$ such that $T_\mu(f) = \langle f, m_\mu \rangle_\mathcal{H}$. In particular, 
		\begin{equation*}
		 m_\mu(x) = \langle m_\mu, k(\cdot, x) \rangle_\mathcal{H} = \ex_\mu[k(\cdot, x)] = \ex_\mu[\phi(x)].
		\end{equation*}
		We call $m_\mu$ the \emph{mean embedding} of $\mu$ in $\mathcal{H}$. From (\ref{eqn:mmd_general}), we have that
		\begin{align}
			\mathrm{MMD}^2[\mathcal{F}, \mu, \nu] &= \sup_{f \in \mathcal{F}}\Big(\ex_\mu[f(x)] - \ex_{\nu}[f(y)] \Big)^2 \nonumber \\
			&= \sup_{f \in \mathcal{F}}\Big(\langle f, m_\mu \rangle_\mathcal{H} - \langle f, m_\nu\rangle_\mathcal{H}\Big)^2 \nonumber \\
			&= \sup_{\norm{f}_\mathcal{H} \le 1} \Big(\langle m_\mu - m_\nu, f \rangle_\mathcal{H}\Big)^2 = \norm{m_\mu - m_\nu}^2_\mathcal{H}. \label{eqn:mmddualrep}
		\end{align}
		Here, we have used the fact that $\mathcal{F}$ is a unit ball in $\mathcal{H}$. Suppose that $\mu = \nu$. By (\ref{eqn:mmddualrep}), this implies $\mathrm{MMD}[\mathcal{F}, \mu, \nu] = 0$. 
		
		Now suppose that $\mu = \nu$. By universality of $\mathcal{H}$, for any $\varepsilon > 0$ and $f \in C_b(\mathcal{X})$ there exists a $g \in \mathcal{H}$ such that
		\begin{equation}\label{eqn:denserkhs}
			\norm{f-g}_\infty < \frac{\varepsilon}{2}.
		\end{equation}
		Then, we have that
		\begin{equation}\label{eqn:mmdresult}
			\big|\ex_\mu[f] - \ex_\nu[f]\big| \le \big|\ex_\mu[f] - \ex_\mu[g]\big| + \big|\ex_\mu[g] - \ex_\nu[g]\big| + \big|\ex_\nu[g] - \ex_\nu[f]\big| < \frac{\varepsilon}{2} + 0 + \frac{\varepsilon}{2} = \varepsilon,
		\end{equation}
		where we have used the fact that for measures $\mu$ and $\nu$, (\ref{eqn:denserkhs}) gives that
		\begin{equation*}
			\big|\ex[f] - \ex[g]\big| \le \ex\big[\big|f(x) - g(x)\big|\big] \le \norm{f-g}_\infty,
		\end{equation*}
		and
		\begin{equation*}
			|\ex_\mu[g] - \ex_\nu[g]| = |\langle g, m_\mu -m_\nu \rangle_\mathcal{H}| = 0
		\end{equation*} 
	
		since we assumed $\mathrm{MMD}[\mathcal{F}, \mu, \nu] = 0$. Thus $\mu=\nu$ as (\ref{eqn:mmdresult}) holds for all $f \in C_b(\mathcal{X})$.
	\end{proof}
	\begin{remark}
		Suppose $(X, \Sigma)$ is a general measurable space (and not necessarily compact). Recalling Definition (\ref{def:characteristic}), if a kernel $k$ associated to the RKHS $\mathcal{H}$ is \emph{characteristic}, so the mapping
		\begin{equation*}
			\mathcal{P}(\mathcal{X}) \ni \mu \mapsto \mathbb{E}_{X \sim \mu}[k(\cdot, X)] \in \mathbb{R}
		\end{equation*}
		is injective,  then one can conclude Theorem \ref{theroem:mmdmetric} via equation (\ref{eqn:mmddualrep}).
	\end{remark}
	
	Using the definition of the mean embedding, the fact that $m_\mu(t) = \ex_{x\sim \mu}[k(t,x)]$,
	and the reproducing property of $\mathcal{F}$, we can write (\ref{eqn:mmddualrep}) as 
	\begin{equation}\label{eqn:mmdrkhs}
		\mathrm{MMD}^2[\mathcal{F}, \mu, \nu] = \ex_{x, x' \sim \mu}[k(x, x')] - 2\ex_{x\sim \mu, y\sim \nu}[k(x, y)] + \ex_{y, y' \sim \nu}[k(y,y')]
	\end{equation}
	since, for example,
	\begin{equation*}
		\langle m_\mu, m_\mu \rangle_\mathcal{H} = \ex_{x\sim\mu}[m_\mu(x)] = \ex_{x,x' \sim \mu}[k(x,x')].
	\end{equation*}
	Given samples $x= (x_1, \dots, x_n)$ and $y=(y_1, \dots, y_m)$, a biased empirical estimate of (\ref{eqn:mmdrkhs}) is given by
	\begin{equation}\label{eqn:biasedmmd}
		\mathrm{MMD}_b[\mathcal{F}, x, y] = \Bigg[\frac{1}{n^2}\sum_{i, j = 1}^n k(x_i, x_j) - \frac{2}{mn}\sum_{i, j = 1}^{m, n}k(x_i, y_j) + \frac{1}{m^2}\sum_{i, j= 1}^m k(y_i, y_j) \Bigg]^{\tfrac{1}{2}}.
	\end{equation}
	We will use the test statistic (\ref{eqn:biasedmmd}) to evaluate the success of a given clustering algorithm. 

	\section{The Wasserstein distance}
	
	In this section, we include proofs of results regarding the Wasserstein distance. 
	
	\begin{proposition}[Wasserstein barycenter, empirical measures]\label{prop:wassersteinbarycenter}
		Suppose that $\{\mu_i\}_{1\le i\le M}$ are a family of empirical probability measures, each with $N$ atoms $\alpha^1_i, \dots, \alpha^N_i$ for $i=1,\dots, M$. Let 
		\begin{equation*}
			a_j = \mathrm{Median}(\alpha^j_1, \dots, \alpha^j_M) \qquad \text{for } j=1,\dots, N.
		\end{equation*}
		Then, the cumulative distribution function of the Wasserstein barycenter $\overline{\mu} \in \mathcal{P}_p(\real)$ over $\{\mu_i\}_{1\le i\le M}$ with respect to the 1-Wasserstein distance is given by
		\begin{equation}\label{eqn:wassbaryapp}
			\overline{\mu}\left((-\infty, x]\right) = \frac{1}{N}\sum_{i=1}^N \chi_{a_i \le x}(x).
		\end{equation}
		Moreover, $\overline{\mu}$ is not necessarily unique.
	\end{proposition}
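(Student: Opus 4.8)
The plan is to pass to the quantile representation of the $1$-Wasserstein distance, in which the barycenter problem decouples across quantile levels. By Proposition \ref{prop:wassrep} and its standard extension to arbitrary measures on $\real$ (on the line the monotone coupling is always optimal), for every $\nu \in \mathcal{P}_1(\real)$ one has
\[
\mathcal{W}_1(\mu_i, \nu) = \int_0^1 \bigl|F_{\mu_i}^{-1}(z) - F_\nu^{-1}(z)\bigr|\,dz.
\]
Summing over $i$ and interchanging the finite sum with the integral, the objective of Definition \ref{def:wassbary} (with $p=1$) becomes
\[
\sum_{i=1}^M \mathcal{W}_1(\mu_i, \nu) = \int_0^1 \Bigl(\sum_{i=1}^M \bigl|F_{\mu_i}^{-1}(z) - F_\nu^{-1}(z)\bigr|\Bigr)\,dz.
\]
First I would reparametrise the minimisation over $\nu \in \mathcal{P}_1(\real)$ by its quantile function $Q = F_\nu^{-1}$, so the task is to minimise this integral over all non-decreasing $Q$ on $(0,1)$.

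The central observation is that the integrand admits a pointwise minimiser in $z$: for each fixed $z$, the map $q \mapsto \sum_{i=1}^M |F_{\mu_i}^{-1}(z) - q|$ is minimised by any median of $\{F_{\mu_i}^{-1}(z)\}_{1\le i\le M}$, by the classical characterisation of the median as the minimiser of the sum of absolute deviations. Set $G(z) := \mathrm{Median}(F_{\mu_1}^{-1}(z), \dots, F_{\mu_M}^{-1}(z))$. I would then check that $G$ is itself a legitimate quantile function, namely that it is non-decreasing: each $F_{\mu_i}^{-1}$ is non-decreasing, so $z_1 \le z_2$ gives $F_{\mu_i}^{-1}(z_1) \le F_{\mu_i}^{-1}(z_2)$ for every $i$, and order statistics (hence the median) are monotone under such coordinatewise domination, so $G(z_1) \le G(z_2)$. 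Thus $G$ is the quantile function of a genuine $\overline{\mu} \in \mathcal{P}_1(\real)$, and integrating the pointwise inequality yields $\sum_i \mathcal{W}_1(\mu_i, \overline{\mu}) \le \sum_i \mathcal{W}_1(\mu_i, \nu)$ for all $\nu$, identifying $\overline{\mu}$ as a barycenter.

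It then remains to read off the explicit form. By the step-function representation (\ref{eqn:empiricalquantilefunction}), each $F_{\mu_i}^{-1}$ equals the $j$-th order statistic $\alpha^j_i$ on $[\tfrac{j-1}{N}, \tfrac{j}{N})$, so $G$ is constant there with value $\mathrm{Median}(\alpha^j_1, \dots, \alpha^j_M) = a_j$; hence $G$ is the quantile function of the equally-weighted empirical measure on $\{a_1, \dots, a_N\}$, which is precisely the distribution with cumulative distribution function (\ref{eqn:wassbaryapp}). The ordering $a_1 \le \dots \le a_N$ needed for this to be an admissible quantile function is automatic from the same monotonicity argument. For non-uniqueness, note that when $M$ is even the pointwise minimiser is any point of the interval between the two central order statistics, so at each level $j$ one may select any value in $[\alpha^j_{(M/2)}, \alpha^j_{(M/2+1)}]$; choosing the lower endpoints uniformly, or the upper endpoints uniformly, each gives a monotone quantile function and hence a valid barycenter, and these differ whenever some level has distinct central order statistics.

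The step I expect to require the most care is the reduction from minimising over all of $\mathcal{P}_1(\real)$ to the pointwise-in-$z$ problem: one must confirm that the pointwise median selection $G$ is genuinely a non-decreasing (hence admissible) quantile function, for otherwise the pointwise lower bound would be taken over a strictly larger class than the valid quantile functions and need not be attained by any measure. Monotonicity of the median under coordinatewise domination is exactly what closes this gap and pins the barycenter down to an $N$-atom empirical measure.
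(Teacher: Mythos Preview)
Your proof is correct and shares the same core idea as the paper's---decoupling the barycenter problem into $N$ independent one-dimensional median problems---but the execution differs in a way worth noting. The paper argues directly at the level of atoms: it first treats the case $N=1$, asserts (via a convexity remark) that the barycenter must itself be a single atom, identifies that atom as the median, and then for general $N$ restricts the infimum over $\mathcal{P}_1(\real)$ to measures with $N$ equally weighted atoms before swapping the order of summation. You instead work entirely in quantile space via Proposition~\ref{prop:wassrep}, minimise pointwise in $z$, and then \emph{verify} that the resulting pointwise median is non-decreasing and hence a bona fide quantile function. This closes a step the paper leaves implicit---namely, why the minimiser over all of $\mathcal{P}_1(\real)$ can be taken to be an $N$-atom empirical measure in the first place---and your explicit monotonicity check (medians are monotone under coordinatewise domination) is exactly the ingredient that justifies it. The trade-off is that the paper's atom-level argument is shorter and more concrete once one accepts that restriction, whereas your route is self-contained and generalises immediately to arbitrary (not necessarily empirical) $\mu_i$ on the line.
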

	\begin{remark}[$p > 1$]
		When $p > 1$, the proof follows in a similar manner. One will arrive at
		\begin{equation*}
			a_j = \text{Mean}(\alpha^j_1, \dots, \alpha^j_M).
		\end{equation*}
	\end{remark}
	\begin{proof}
		Assume that $N=1$, so each measure $\mu_i$ is comprised of only one atom $\alpha_i$ for $i=1,\dots, M$. WLOG we can also assume that the sequence $(\alpha_i)_{i=1}^M$ is non-decreasing. By convexity of the function $\phi_a(x) = |x-a|$ for $a \in \mathbb{R}$, the Wasserstein barycenter will also have $N=1$ atoms. Then, by (\ref{eqn:wassdistatoms}) the problem of finding the barycenter $\overline{\mu}$ is equivalent to the optimisation
		\begin{equation}\label{eqn:wassbaryoptimisation}
			\inf_{\nu \in \mathcal{P}_1(\real)} \sum_{i=1}^M W_1(\mu_i, \nu)
			= \inf_{a \in \mathbb{R}} \sum_{i=1}^M |a-\alpha_i| = \inf_{a\in \mathbb{R}} \sum_{i=1}^M \phi_{\alpha_i}(a).
		\end{equation}
		The minimiser $a^* \in \mathbb{R}$ to the right-hand side of (\ref{eqn:wassbaryoptimisation}) is obtained by solving $df/dx(x) = 0$ over $\mathbb{R}$, where 
		\begin{equation*}
			f(x) = |x-\alpha_1| + \dots + |x-\alpha_M| = \phi_{\alpha_1}(x) + \dots + \phi_{\alpha_M}(x).
		\end{equation*}
		Since
		\begin{equation*}
			\frac{d \phi_{\alpha_i}}{dx}(x) = \text{sgn}(x-\alpha_i) \qquad \text{for }i=1,\dots, M,
		\end{equation*}
		we have that
		\begin{equation}\label{eqn:wassbaryatom}
			a^* = \arginf_{a \in \mathbb{R}} \sum_{i=1}^M |a-\alpha_i| = \text{Median}(\alpha_1, \dots, \alpha_M).
		\end{equation}
		In particular, if $M \text{ mod } 1 = 0$, then $a^* \in [\alpha_{M/2}, \alpha_{M/2 + 1})$. If $M \text{ mod } 2 = 1$, then the (unique) optimizer is given by $a^* = \alpha_{K}$ where $K=\lfloor M/2 \rfloor + 1$. Setting $a=a^*$ gives (\ref{eqn:wassbaryapp}). 
		
		If $N > 1$, then the problem of finding the Wasserstein barycenter is equivalent to
		\begin{equation*}
			\inf_{\nu \in \mathcal{P}_1(\real))} \sum_{i=1}^M \mathcal{W}_1(\mu_i, \nu) = \inf_{(a_1, \dots, a_N) \in \mathbb{R}^N} \sum_{i=1}^M \sum_{j=1}^N |a_j - \alpha^j_i|.
		\end{equation*}
		Interchanging the order of summation, we see that
		\begin{equation*}
			\inf_{(a_1, \dots, a_N) \in \mathbb{R}^N} \sum_{i=1}^M \sum_{j=1}^N |a_i - \alpha^j_i| = \sum_{j=1}^N \left( \inf_{a_j \in \mathbb{R}} \sum_{i=1}^M |a_j - \alpha_i^j| \right).
		\end{equation*}
		By applying (\ref{eqn:wassbaryatom}) to each summation over $M$, we obtain the desired result (\ref{eqn:wassbaryapp}).
	\end{proof}

	\newpage
	
	\bibliographystyle{halpha-abbrv}
	\bibliography{references}

\newcommand{\etalchar}[1]{$^{#1}$}
\begin{thebibliography}{ACADF20}
\expandafter\ifx\csname url\endcsname\relax
  \def\url#1{\texttt{#1}}\fi
\expandafter\ifx\csname doi\endcsname\relax
  \def\doi#1{\burlalt{doi:#1}{http://dx.doi.org/#1}}\fi
\expandafter\ifx\csname urlprefix\endcsname\relax\def\urlprefix{URL }\fi
\expandafter\ifx\csname href\endcsname\relax
  \def\href#1#2{#2}\fi
\expandafter\ifx\csname burlalt\endcsname\relax
  \def\burlalt#1#2{\href{#2}{#1}}\fi

\bibitem[ABS08]{ackermann}
M.~Ackermann, J.~Blömer, and C.~Sohler.
\newblock Clustering for metric and non-metric distance measures.
\newblock volume~6, pages 799--808, 01 2008.
\newblock \doi{10.1145/1347082.1347170}.

\bibitem[ACADF20]{alquier2020estimation}
P.~Alquier, B.-E. Chérief-Abdellatif, A.~Derumigny, and J.-D. Fermanian.
\newblock Estimation of copulas via maximum mean discrepancy, 2020,
  \burlalt{2010.00408}{http://arxiv.org/abs/2010.00408}.

\bibitem[Ach01]{achelis2001technical}
S.~B. Achelis.
\newblock Technical analysis from a to z, 2001.

\bibitem[AG13]{ambrosio2013user}
L.~Ambrosio and N.~Gigli.
\newblock A user’s guide to optimal transport.
\newblock In {\em Modelling and optimisation of flows on networks}, pages
  1--155. Springer, 2013.

\bibitem[AGS05]{ambrosio2005gradient}
L.~Ambrosio, N.~Gigli, and G.~Savare.
\newblock {\em Gradient Flows: In Metric Spaces and in the Space of Probability
  Measures}.
\newblock Lectures in Mathematics. ETH Z{\"u}rich. Birkh{\"a}user Basel, 2005.
\newblock \urlprefix\url{https://books.google.co.uk/books?id=HZqhWIq1-jgC}.

\bibitem[Aro50]{aronszajn1950theory}
N.~Aronszajn.
\newblock Theory of reproducing kernels.
\newblock {\em Transactions of the American mathematical society},
  68(3):337--404, 1950.

\bibitem[BBDG19]{briol2019statistical}
F.-X. Briol, A.~Barp, A.~B. Duncan, and M.~Girolami.
\newblock Statistical inference for generative models with maximum mean
  discrepancy, 2019, \burlalt{1906.05944}{http://arxiv.org/abs/1906.05944}.

\bibitem[BG21]{bayraktar2021strong}
E.~Bayraktar and G.~Guo{\"\i}.
\newblock Strong equivalence between metrics of wasserstein type.
\newblock {\em Electronic Communications in Probability}, 26:1--13, 2021.

\bibitem[BHL{\etalchar{+}}20]{buehler2020data}
H.~Buehler, B.~Horvath, T.~Lyons, I.~Perez~Arribas, and B.~Wood.
\newblock A data-driven market simulator for small data environments.
\newblock {\em Available at SSRN 3632431}, 2020.

\bibitem[BKA{\etalchar{+}}19]{DBLP:journals/corr/abs-1905-08494}
P.~Bonnier, P.~Kidger, I.~P. Arribas, C.~Salvi, and T.~J. Lyons.
\newblock Deep signatures.
\newblock {\em CoRR}, abs/1905.08494, 2019,
  \burlalt{1905.08494}{http://arxiv.org/abs/1905.08494}.
\newblock \urlprefix\url{http://arxiv.org/abs/1905.08494}.

\bibitem[BRPP15]{bonneel2015sliced}
N.~Bonneel, J.~Rabin, G.~Peyr{\'e}, and H.~Pfister.
\newblock Sliced and radon wasserstein barycenters of measures.
\newblock {\em Journal of Mathematical Imaging and Vision}, 51(1):22--45, 2015.

\bibitem[BTA11]{berlinet2011reproducing}
A.~Berlinet and C.~Thomas-Agnan.
\newblock {\em Reproducing Kernel Hilbert Spaces in Probability and
  Statistics}.
\newblock Springer US, 2011.
\newblock \urlprefix\url{https://books.google.co.uk/books?id=bX3TBwAAQBAJ}.

\bibitem[CAA21]{cheriefabdellatif2021finite}
B.-E. Chérief-Abdellatif and P.~Alquier.
\newblock Finite sample properties of parametric mmd estimation: robustness to
  misspecification and dependence, 2021,
  \burlalt{1912.05737}{http://arxiv.org/abs/1912.05737}.

\bibitem[CDB86]{cannon1986efficient}
R.~L. Cannon, J.~V. Dave, and J.~C. Bezdek.
\newblock Efficient implementation of the fuzzy c-means clustering algorithms.
\newblock {\em IEEE transactions on pattern analysis and machine intelligence},
  (2):248--255, 1986.

\bibitem[CFLA20]{cochrane2020anomaly}
T.~Cochrane, P.~Foster, T.~Lyons, and I.~P. Arribas.
\newblock Anomaly detection on streamed data.
\newblock {\em arXiv preprint arXiv:2006.03487}, 2020.

\bibitem[Con01]{cont2001empirical}
R.~Cont.
\newblock Empirical properties of asset returns: stylized facts and statistical
  issues.
\newblock 2001.

\bibitem[DB79]{4766909}
D.~L. Davies and D.~W. Bouldin.
\newblock A cluster separation measure.
\newblock {\em IEEE Transactions on Pattern Analysis and Machine Intelligence},
  PAMI-1(2):224--227, 1979.
\newblock \doi{10.1109/TPAMI.1979.4766909}.

\bibitem[Dun74]{doi:10.1080/01969727408546059}
J.~C. Dunn.
\newblock Well-separated clusters and optimal fuzzy partitions.
\newblock {\em Journal of Cybernetics}, 4(1):95--104, 1974,
  \burlalt{https://doi.org/10.1080/01969727408546059}{http://arxiv.org/abs/https://doi.org/10.1080/01969727408546059}.
\newblock \doi{10.1080/01969727408546059}.

\bibitem[DVR15]{dias2015clustering}
J.~G. Dias, J.~K. Vermunt, and S.~Ramos.
\newblock Clustering financial time series: New insights from an extended
  hidden markov model.
\newblock {\em European Journal of Operational Research}, 243(3):852--864,
  2015.

\bibitem[FGSS07]{fukumizu2007kernel}
K.~Fukumizu, A.~Gretton, X.~Sun, and B.~Sch{\"o}lkopf.
\newblock Kernel measures of conditional dependence.
\newblock In {\em NIPS}, volume~20, pages 489--496, 2007.

\bibitem[FGSS09]{NIPS2008_d07e70ef}
K.~Fukumizu, A.~Gretton, B.~Sch\"{o}lkopf, and B.~K. Sriperumbudur.
\newblock Characteristic kernels on groups and semigroups.
\newblock In D.~Koller, D.~Schuurmans, Y.~Bengio, and L.~Bottou, editors, {\em
  Advances in Neural Information Processing Systems}, volume~21. Curran
  Associates, Inc., 2009.
\newblock
  \urlprefix\url{https://proceedings.neurips.cc/paper/2008/file/d07e70efcfab08731a97e7b91be644de-Paper.pdf}.

\bibitem[GBC16]{10.5555/3086952}
I.~Goodfellow, Y.~Bengio, and A.~Courville.
\newblock {\em Deep Learning}.
\newblock The MIT Press, 2016.

\bibitem[GBR{\etalchar{+}}07]{NIPS2006_3110}
A.~Gretton, K.~Borgwardt, M.~Rasch, B.~Sch\"{o}lkopf, and A.~J. Smola.
\newblock A kernel method for the two-sample-problem.
\newblock In B.~Sch\"{o}lkopf, J.~C. Platt, and T.~Hoffman, editors, {\em
  Advances in Neural Information Processing Systems 19}, pages 513--520. MIT
  Press, 2007.
\newblock
  \urlprefix\url{http://papers.nips.cc/paper/3110-a-kernel-method-for-the-two-sample-problem.pdf}.

\bibitem[GBR{\etalchar{+}}12]{gretton2012kernel}
A.~Gretton, K.~M. Borgwardt, M.~J. Rasch, B.~Sch{\"o}lkopf, and A.~Smola.
\newblock A kernel two-sample test.
\newblock {\em Journal of Machine Learning Research}, 13(Mar):723--773, 2012.

\bibitem[GFHS09]{NIPS2009_3738}
A.~Gretton, K.~Fukumizu, Z.~Harchaoui, and B.~K. Sriperumbudur.
\newblock A fast, consistent kernel two-sample test.
\newblock In Y.~Bengio, D.~Schuurmans, J.~D. Lafferty, C.~K.~I. Williams, and
  A.~Culotta, editors, {\em Advances in Neural Information Processing Systems
  22}, pages 673--681. Curran Associates, Inc., 2009.
\newblock
  \urlprefix\url{http://papers.nips.cc/paper/3738-a-fast-consistent-kernel-two-sample-test.pdf}.

\bibitem[Gui11]{guidolin2011markov}
M.~Guidolin.
\newblock Markov switching models in empirical finance.
\newblock In {\em Missing data methods: Time-series methods and applications}.
  Emerald Group Publishing Limited, 2011.

\bibitem[Ham89]{hamilton1989new}
J.~D. Hamilton.
\newblock A new approach to the economic analysis of nonstationary time series
  and the business cycle.
\newblock {\em Econometrica: Journal of the econometric society}, pages
  357--384, 1989.

\bibitem[HGER15]{henderson2015ep}
K.~Henderson, B.~Gallagher, and T.~Eliassi-Rad.
\newblock Ep-means: An efficient nonparametric clustering of empirical
  probability distributions.
\newblock In {\em Proceedings of the 30th Annual ACM Symposium on Applied
  Computing}, pages 893--900, 2015.

\bibitem[HMT19]{horvath2019deep}
B.~Horvath, A.~Muguruza, and M.~Tomas.
\newblock Deep learning volatility.
\newblock {\em Available at SSRN 3322085}, 2019.

\bibitem[KMN{\etalchar{+}}02]{kanungo2002efficient}
T.~Kanungo, D.~M. Mount, N.~S. Netanyahu, C.~D. Piatko, R.~Silverman, and A.~Y.
  Wu.
\newblock An efficient k-means clustering algorithm: Analysis and
  implementation.
\newblock {\em IEEE transactions on pattern analysis and machine intelligence},
  24(7):881--892, 2002.

\bibitem[KNS{\etalchar{+}}19]{kolouri2019generalized}
S.~Kolouri, K.~Nadjahi, U.~Simsekli, R.~Badeau, and G.~K. Rohde.
\newblock Generalized sliced wasserstein distances.
\newblock {\em arXiv preprint arXiv:1902.00434}, 2019.

\bibitem[KSH20]{kondratyev2020data}
A.~Kondratyev, C.~Schwarz, and B.~Horvath.
\newblock Data anonymisation, outlier detection and fighting overfitting with
  restricted boltzmann machines.
\newblock {\em Outlier Detection and Fighting Overfitting with Restricted
  Boltzmann Machines (January 27, 2020)}, 2020.

\bibitem[Kua02]{kuan2002lecture}
C.-M. Kuan.
\newblock Lecture on the markov switching model.
\newblock {\em Institute of Economics Academia Sinica}, 8(15):1--30, 2002.

\bibitem[Lah16]{lahmiri2016clustering}
S.~Lahmiri.
\newblock Clustering of casablanca stock market based on hurst exponent
  estimates.
\newblock {\em Physica A: Statistical Mechanics and its Applications},
  456:310--318, 2016.

\bibitem[LM00]{lux2000volatility}
T.~Lux and M.~Marchesi.
\newblock Volatility clustering in financial markets: a microsimulation of
  interacting agents.
\newblock {\em International journal of theoretical and applied finance},
  3(04):675--702, 2000.

\bibitem[LR09]{lange2009introduction}
T.~Lange and A.~Rahbek.
\newblock An introduction to regime switching time series models.
\newblock In {\em Handbook of Financial Time Series}, pages 871--887. Springer,
  2009.

\bibitem[LW08]{li2008real}
J.~Li and J.~Z. Wang.
\newblock Real-time computerized annotation of pictures.
\newblock {\em IEEE transactions on pattern analysis and machine intelligence},
  30(6):985--1002, 2008.

\bibitem[MMS12]{bullbearregimes}
J.~Maheu, T.~McCurdy, and Y.~Song.
\newblock Components of bull and bear markets: Bull corrections and bear
  rallies.
\newblock {\em Journal of Business and Economic Statistics}, 30, 07 2012.
\newblock \doi{10.2139/ssrn.1939486}.

\bibitem[MS83]{mason1983modified}
D.~M. Mason and J.~H. Schuenemeyer.
\newblock A modified kolmogorov-smirnov test sensitive to tail alternatives.
\newblock {\em The annals of Statistics}, pages 933--946, 1983.

\bibitem[MZGW18]{mi2018variational}
L.~Mi, W.~Zhang, X.~Gu, and Y.~Wang.
\newblock Variational wasserstein clustering.
\newblock In {\em Proceedings of the European Conference on Computer Vision
  (ECCV)}, pages 322--337, 2018.

\bibitem[NHZ16]{niu2016multiple}
Y.~S. Niu, N.~Hao, and H.~Zhang.
\newblock Multiple change-point detection: a selective overview.
\newblock {\em Statistical Science}, pages 611--623, 2016.

\bibitem[NN09]{nielsen2009sided}
F.~Nielsen and R.~Nock.
\newblock Sided and symmetrized bregman centroids.
\newblock {\em IEEE transactions on Information Theory}, 55(6):2882--2904,
  2009.

\bibitem[NNA14]{nielsen2014clustering}
F.~Nielsen, R.~Nock, and S.-i. Amari.
\newblock On clustering histograms with k-means by using mixed
  $\alpha$-divergences.
\newblock {\em Entropy}, 16(6):3273--3301, 2014.

\bibitem[NSW{\etalchar{+}}20]{ni2020conditional}
H.~Ni, L.~Szpruch, M.~Wiese, S.~Liao, and B.~Xiao.
\newblock Conditional sig-wasserstein gans for time series generation, 2020,
  \burlalt{2006.05421}{http://arxiv.org/abs/2006.05421}.

\bibitem[Pel06]{PELLETIER2006445}
D.~Pelletier.
\newblock Regime switching for dynamic correlations.
\newblock {\em Journal of Econometrics}, 131(1):445 -- 473, 2006.
\newblock \doi{https://doi.org/10.1016/j.jeconom.2005.01.013}.

\bibitem[Rou87]{ROUSSEEUW198753}
P.~J. Rousseeuw.
\newblock Silhouettes: A graphical aid to the interpretation and validation of
  cluster analysis.
\newblock {\em Journal of Computational and Applied Mathematics}, 20:53--65,
  1987.
\newblock \doi{https://doi.org/10.1016/0377-0427(87)90125-7}.

\bibitem[RPDB11]{rabin2011wasserstein}
J.~Rabin, G.~Peyr{\'e}, J.~Delon, and M.~Bernot.
\newblock Wasserstein barycenter and its application to texture mixing.
\newblock In {\em International Conference on Scale Space and Variational
  Methods in Computer Vision}, pages 435--446. Springer, 2011.

\bibitem[RT99]{ray1999determination}
S.~Ray and R.~H. Turi.
\newblock Determination of number of clusters in k-means clustering and
  application in colour image segmentation.
\newblock In {\em Proceedings of the 4th international conference on advances
  in pattern recognition and digital techniques}, pages 137--143. Calcutta,
  India, 1999.

\bibitem[RY04]{revuz2004continuous}
D.~Revuz and M.~Yor.
\newblock {\em Continuous Martingales and Brownian Motion}.
\newblock Grundlehren der mathematischen Wissenschaften. Springer Berlin
  Heidelberg, 2004.
\newblock \urlprefix\url{https://books.google.co.uk/books?id=1ml95FLM5koC}.

\bibitem[San15]{santambrogio}
F.~Santambrogio.
\newblock {\em Optimal Transport for Applied Mathematicians. Calculus of
  Variations, PDEs and Modeling}.
\newblock 2015.

\bibitem[Ste01]{steinwart2001influence}
I.~Steinwart.
\newblock On the influence of the kernel on the consistency of support vector
  machines.
\newblock {\em Journal of machine learning research}, 2(Nov):67--93, 2001.

\bibitem[Syn08]{synowiec}
D.~Synowiec.
\newblock Jump-diffusion models with constant parameters for financial
  log-return processes.
\newblock {\em Computers and Mathematics with Applications}, 56(8):2120 --
  2127, 2008.
\newblock \doi{https://doi.org/10.1016/j.camwa.2008.02.051}.

\bibitem[Vel02]{veldhuis2002centroid}
R.~Veldhuis.
\newblock The centroid of the symmetrical kullback-leibler distance.
\newblock {\em IEEE signal processing letters}, 9(3):96--99, 2002.

\bibitem[WGX21]{wang2021twosample}
J.~Wang, R.~Gao, and Y.~Xie.
\newblock Two-sample test using projected wasserstein distance: Breaking the
  curse of dimensionality, 2021,
  \burlalt{2010.11970}{http://arxiv.org/abs/2010.11970}.

\bibitem[YWWL17]{ye2017fast}
J.~Ye, P.~Wu, J.~Z. Wang, and J.~Li.
\newblock Fast discrete distribution clustering using wasserstein barycenter
  with sparse support.
\newblock {\em IEEE Transactions on Signal Processing}, 65(9):2317--2332, 2017.

\end{thebibliography}

\end{document}